\definecolor{internal}{rgb}{0.6,0,0.2} 
\definecolor{citationC}{rgb}{0,0,0.6} 
\definecolor{c3}{rgb}{0.3,0,0.9} 
\providecommand{\abs}[1]{\vert#1\vert}
\newtheorem{prop}{Proposition}
\newtheorem{thm}{Theorem}[section]
\newtheorem{lem}[thm]{Lemma}
\newtheorem{dfn}{Definition}
\def\beq{\begin{eqnarray}}
\def\eeq{\end{eqnarray}}
\definecolor{axisColor}{rgb}{0, 0, 0.5}
\begin{document}

\title[]{Approaching Wonderland
}

\author{Ben David Normann, Sigbj\o rn Hervik}
\date{today}

\address{Faculty of Science and Technology, University of Stavanger, 4036 Stavanger, Norway\\
}
\ead{ben.d.normann@uis.no, sigbjorn.hervik@uis.no}
\vspace{10pt}
\today

\begin{abstract}
Continuing previous work, we show the existence of stable, anisotropic future attractors in Bianchi invariant sets with a $p$-form field ($p\,\in\,\{1,3\}$) and a perfect fluid. In particular,  we consider the not previously investigated Bianchi invariant sets $\mathcal{B}$(II), $\mathcal{B}$(IV), $\mathcal{B}$(VII$_0$) and $\mathcal{B}$(VII$_{h})$ and determine their asymptotic behaviour. We find that the isolated equilibrium set Wonderland is a future attractor on all of its existence ($2/3<\,\gamma\,<2$) in all these sets except in $\mathcal{B}$(II), where the peculiar equilibrium sets Edge and Rope show up, taking over the stability for certain values of $\gamma$. In addition, in $\mathcal{B}$(IV) and $\mathcal{B}$(VII$_h$) plane gravitational wave solutions (with a non-zero $p$-form) serve as attractors whenever  $2/3<\,\gamma\,<2$. 
\end{abstract}
%
\vspace{2pc}
\noindent{\it Keywords}: $p$-form gauge fields, anisotropic space-times, Bianchi models, inflation, dynamical system, orthonormal frame.
%
%
%
%

\section{Introduction}
\label{Intro}
\subsection{Context and background}
In a previous paper \cite{normann18} general equations for a perfect fluid and a homogeneous, sourceless $j$-form field (where $j=1,3$) in a cosmological context with general relativity were written down in an orthonormal frame. For more on the orthonormal frame approach, refer to \cite{elst97}. The underlying $(j-1)\,$\,-\,gauge field was not required to be homogeneous. As a result one may view the work as a study of an inhomogeneous, massless scalar gauge field with a homogeneous gradient. In the paper, we explicitly considered the Bianchi invariant sets $\mathcal{B}$(I) and $\mathcal{B}$(V) and provided a dynamical systems analysis of the cosmological evolution of such universes. In this paper we continue the work and consider some of the other Bianchi models of solvable type.

The $p$-form action obeys the weak energy condition. From a Hamiltonian point of view, it is bounded from below, as shown explicitly in \cite[Sec. 2.3]{thorsrud18}. Although the $j$-form field cannot sustain an accelerated state of expansion, it may still play an important role in the early universe, where attention is devoted to all sorts of isotropy-breaking fields \cite{hervik11,ford89,ackerman07,golovnev08,watanabe09,maleknejad13,ito15, almeida19,cicciarella19}. The goal is often to address the $\Lambda$CDM anomalies\cite{bennett11}. The $p$-form field is a simple way to incorporate an isotropy-breaking field in a general manner. Crucially, one should note that it is not the anisotropic curvature (of the Bianchi models) itself, but rather the shear in them, that produces large-scale CMB anomalies. Hence, a study of isotropy violating fields in the full set of homogeneous cosmologies is clearly of interest. In particular because of the existing shear-free solutions, such as the class of FLRW solutions and extensions to it. These matters are thoroughly investigated in \cite{thorsrud18,thorsrud20}.

The line of work contained in this paper and the previous, may also be placed in a broader context, where the goal is to understand how sensitive the evolution of the Universe is to initial conditions. A chief outcome would be to gain a clear understanding of how the observable Universe is so flat and isotropic. To this end non-tilted perfect fluids have already been studied in anisotropic backgrounds (cf. \cite{dynSys} and references therein), as well as tilted perfect fluids \cite{barrow03, hervik04a, hervik04b, coley05, hervik05, hervik07, hervik08, hervik10, shogin14, shogin15,barrow12} and fluids with vorticity \cite{hervik06a, hervik06b, coley08}. Naturally the relation to different inflationary scenarios has been discussed (e.g. \cite{barrow06, hervik11}) and the connection to observations has been investigated to some degree, for instance in \,\cite{jaffe06}. Source-free electromagnetism has also been investigated to some degree~\cite{leblanc97,yamamoto11}. Also, the e-book by A.~A.~Coley \cite{coleyDynSys} provides a comprehensible overview of a large range of studies with a variety of matter content.

The $p$-form action with $p\,\in\,\{1,3\}$ seems to have gone largely unnoticed in the cosmological literature so far, although it has gained some interest recently. One of the authors of our previous work has done work on shear-free cosmologies\cite{thorsrud18} with a $p$-form action. The reader is also referred to the short notice \cite{gruzinov04} and the more recent works \cite{almeida20,almeida19}. In this paper, we consider a $p$-form action (with $p\,\in\,\{1,3\}$) alongside a non-tilted perfect fluid.

The rest of the paper is structured as follows. In this section we write down the general set of equations and summarize some previously established results.\\
\textbf{ In Section \ref{Sec:Summary} we summarize the main results of the study}. In Section \ref{Sec:constraints} we give a general discussion of constraints, before we in Section \ref{Sec:SysEqIVEtc} focus attention on the particular sets {$\mathcal{B}$(II)}, $\mathcal{B}$(IV), $\mathcal{B}$(VII$_0$) and $\mathcal{B}$(VII$_{h})$. Of these sets $\mathcal{B}$(VII$_0$) and $\mathcal{B}$(VII$_{h})$ are especially interesting because they contain the flat and the open FLRW model, respectively, as  special cases. As such, the study of these types can be seen as a generalised study of the open and flat FLRW models. Section \ref{Sec:Anis} is a general discussion of the conditions for anisotropic hairs in the dynamical systems of the foregoing section. Throughout sections \ref{Sec:typeVIIh}-\ref{Sec:typeII} we analyse the invariant sets $\mathcal{B}$(VII$_h$), $\mathcal{B}$(VII$_0$), $\mathcal{B}$(IV) and $\mathcal{B}$(II) (in that order) by finding equilibrium sets and performing a local (and where possible: global) stability analysis.

\subsection{The general dynamical system}
\label{Subsec:GenDynSys}
Our starting point is the equations  (50)-(53) and (56)-(59) in \cite{normann18} which hold for all the Bianchi models of solvable type. Since both the variables and the derivation of the equations is explained in sufficient detail therein, we shall here suffice it to repeat only the presumably more unfamiliar part: the matter equations (eq.s \eref{FluidEqs} below) and the constraint $C_3$ (Eq.\eref{Constr3}) follow directly from the Maxwell-like equations \begin{eqnarray*}
&\mathbf{d}\mathcal{J}=0,\\
&\mathbf{d}\star\mathcal{J}=0.
\end{eqnarray*}
Here $\mathbf{d}$ is the exterior derivative and $\mathcal{J}$ is the $j$-form, taken to be constructed from an underlying $j-1$-form gauge potential.
The equations are given in a $G_2$\,-\,frame aligned with the vector $A$\setcounter{footnote}{0}\footnote{Id est: The basis-vector $\mathbf{e}_1$ is aligned with the vector $A$. The basis vectors $\{\mathbf{e}_2$,$\mathbf{e}_3\}$ now generate a 2-dimensional subgroup of the isometry group.}. They are 15 first order scalar ODEs (compactified below into 11 equations by the complex notation (boldface letters) introduced in our previous paper and further references therein). Refer to \ref{App:CompVar} for further explanation of variables.
\begin{eqnarray}\fl
\textit{j}\textrm{-form eq.s}\phantom{..}\cases{\label{FluidEqs}
V_1'=\left(q+2\Sigma_+\right)V_1-2\sqrt{3}\Re\{\mathbf{\Sigma}_1\mathbf{V}_c^*\}\,,\\
\mathbf{V}_c'=\left(q-\Sigma_+-iR_1\right)\mathbf{V}_c-\sqrt{3}\mathbf{\Sigma}_\Delta\mathbf{V}_c^*\,,\\
\Theta'=(q-2)\Theta-2AV_1\,,}\\\fl
\textrm{Einst. Eq.s}\phantom{.}\cases{\label{EinstEq}
\mathbf{\Sigma}_1'=\left(q-2-3\Sigma_+-iR_1\right)\mathbf{\Sigma}_1-\sqrt{3}\mathbf{\Sigma}_\Delta\mathbf{\Sigma}_1^*+2\sqrt{3}V_1\mathbf{V}_c\,,\\
\mathbf{\Sigma}_\Delta'=(q-2-2iR_1)\mathbf{\Sigma}_\Delta+\sqrt{3}\mathbf{\Sigma}_1^2-2\mathbf{N}_\Delta\left(iA+N_+\right)+\sqrt{3}\mathbf{V}_c^2,\\
\Sigma_+'=\left(q-2\right)\Sigma_++3\abs{\mathbf{\Sigma}_1}^2-2\abs{\mathbf{N}_\Delta}^2+\abs{\mathbf{V}_c}^2-2V_1^2\,,}\\\fl
\textrm{En. cons. }\phantom{..}
\cases{\label{EnCons}
\Omega_\Lambda'=2(q+1)\Omega_\Lambda\,,\\
\Omega_{\rm pf}'=2\left(q+1-\frac{3}{2}\gamma \right)\Omega_{\rm pf}\,,}\\\fl
\textrm{Jacobi Id.}\phantom{...}
\cases{\label{JacId2}
\mathbf{N}_\Delta'=\left(q+2\Sigma_+-2iR_1\right)\mathbf{N}_\Delta+2\mathbf{\Sigma}_\Delta N_+\,,\\
N_+'=\left(q+2\Sigma_+\right)N_++6\Re\{\mathbf{\Sigma}_\Delta^*\mathbf{N}_\Delta\}\,,\\
A'=\left(q+2\Sigma_+\right)A.}
\end{eqnarray}
In the above, $\Re$ and $\Im$ represent the real and imaginary parts, respectively; and  $'$ represents derivative with respect to the dynamical time variable $\tau$ defined through the equation
\begin{equation}
\qquad \frac{1}{H}=\frac{{\rm d} t}{\rm d\tau},
\end{equation}
where $t$ is proper time. Finally, the deceleration parameter $q$ has also been introduced in the above set of equations. It is implicitly defined by
\begin{equation}
\label{q1}
\qquad H^\prime=-(1+q)H.
\end{equation}
These dynamical equations are subject to a set of six (real) scalar constraints given by the four equations
\begin{eqnarray}
&\fl\label{Constr1} C_1=1-\Sigma_{+}^2-\abs{\mathbf{\Sigma}_{\Delta}}^2-\abs{\mathbf{\Sigma}_1}^2-\Omega_{\rm pf}-\Theta^2-V_1^2-\abs{\mathbf{V}_c}^2-\Omega_{\Lambda}-A^2-\abs{\mathbf{N}_\Delta}^2=0\,,\\
&\fl\label{Constr2} C_2=2\,\Theta V_1-2\left(A\Sigma_+-\Im\{\mathbf{\Sigma}_\Delta\mathbf{N}_\Delta^*\}\right)=0\,,\\
&\fl\label{Constr3} C_3=\sqrt{3}\mathbf{N}_\Delta^*\mathbf{V}_c-i\mathbf{V}_c^*\left(A+iN_+\right)=0\,,\\
&\fl\label{Constr4} C_4=2\,\Theta\mathbf{V}_c-\left(i\frac{N_+}{\sqrt{3}}-\sqrt{3}A\right)\mathbf{\Sigma}_1-i\mathbf{N}_\Delta\mathbf{\Sigma}_1^*=0\,.
\end{eqnarray}
In the above, $C_1$ is the so-called Hamiltonian constraint, $C_2$ and $C_4$ are the Codazzi (momentum) constraints, whereas $C_3$ comes from the matter equations, as described in the start of this sub-section. As further laid out in \cite{thorsrud19} these constraints provide a powerful check of the evolution equations, since the class of Bianchi cosmologies is a well-posed Cauchy problem. Also, the group parameter $h$ in the sets $\mathcal{B}$(VI$_h$) and $\mathcal{B}$(VII$_{h})$ is defined through 
\begin{equation}
\label{h}
A^2+h\left(3\abs{\mathbf{N}_\Delta}^2-N_+^2\right)=0. 
\end{equation}
In the above list of constraints, $C_3$ comes from the Bianchi Identity for the $j$-form and the others directly from the Einstein Field Equations. Also note that $q$ may be expressed as
\vspace{10pt}
\begin{equation}
\label{q}\fl
q=2\Sigma^2+\frac 12({3}\gamma -2)\Omega_{\rm pf}+2\Theta^2-\Omega_\Lambda,\quad\textrm{where}\quad\Sigma^2\equiv\Sigma_{+}^2+\abs{\mathbf{\Sigma}_{\Delta}}^2+\abs{\mathbf{\Sigma}_1}^2.
\end{equation}
\vspace{10pt}
\paragraph{Meaning of the variables.} The variables of the dynamical system introduced above are all expansion-normalized and have the following meaning.
\begin{itemize}
\item \textbf{Matter: } The $j$-form is given as $\mathcal{J}/(\sqrt{6}H)=(\Theta,V_1, V_2,V_3)$. Also recall that $\mathbf{V}_c=V_2+iV_3$\setcounter{footnote}{0}\footnote{Refer to \ref{App:CompVar}---or even better; to \cite{normann18}---for details and further definitions.} Furthermore, $\Omega_{\rm pf}$  is the energy density of the perfect fluid and $\Omega_\Lambda$ represents the cosmological constant.
\item \textbf{Observers: }$\mathbf{\Sigma}_1$,$\mathbf{\Sigma}_\Delta$,$\Sigma_+$ represent the shear of the congruence of observers, here chosen to be co-moving with the perfect fluid.
\item \textbf{Geometry: }$\mathbf{N}_\Delta$, $N_+$ and $A$ describe the curvature of the spatial 3-surfaces.
\item \textbf{Frame: }The quantity $R_1$ (alongside an initial angle $\phi_1$) represents the gauge freedom left in choosing the rotation (and initial orientation) around the $\mathbf{e}_1$ -axis of the orthonormal frame.
\end{itemize}
\vspace{10pt}
All the 15 variables of the dynamical system and also the frame rotation $R_1$, are expansion-normalized quantities.
\vspace{10pt}
\paragraph{State space and invariant sets.}
The equations \eref{FluidEqs}-\eref{JacId2} alongside the constraints \eref{Constr1}-\eref{Constr4} define a 9-dimensional dynamical system. The state space $D$ of this dynamical system can be divided into different invariant sets according to the aforementioned Bianchi classification. We shall adopt the notation used in \cite{thorsrud19} and in \cite{dynSys}. Bianchi set $i$ is henceforth denoted $\mathcal{B}(i)$, and the invariant sets of solvable type have the following specifications:
\newpage
\begin{itemize}
\item $\mathcal{B}$(I):\quad\quad\phantom{-}$\abs{\mathbf{N}_\Delta}^2=N_+^2/3\,=\,0\,$\phantom{0.}\phantom{0.}\quad\quad\quad,\quad\quad\quad\quad $A\,=\,0$.
\item $\mathcal{B}$(II):\,\quad\phantom{...}$\abs{\mathbf{N}_\Delta}^2-N_+^2/3\,=\,0$\quad,\quad$\abs{\mathbf{N}_\Delta}^2\,>0$\quad,\quad $A\,=\,0$.
\item $\mathcal{B}$(IV):\quad\phantom{-.}$\abs{\mathbf{N}_\Delta}^2-N_+^2/3\,=\,0$\quad,\quad$\abs{\mathbf{N}_\Delta}^2\,>0$\quad,\quad $A\,\neq\,0$.
\item $\mathcal{B}$(V):\quad\quad$\abs{\mathbf{N}_\Delta}=N_+\,=0$\quad\quad\quad\quad\quad\phantom{-}\quad,\quad\quad\quad\quad$A\,\neq\,0$.

\item $\mathcal{B}$(VI$_h$):\quad\,$\abs{\mathbf{N}_\Delta}^2-N_+^2/3\,>\,0$\quad\quad\quad\quad\phantom{-..},\quad\quad\quad\quad$A\,\neq\,0$.
\begin{itemize}
\item $\mathcal{B}$(VI$_{-1})\,=\,$B(III) (def.).
\end{itemize}
\item $\mathcal{B}$(VI$_0$):\quad\,$\abs{\mathbf{N}_\Delta}^2-N_+^2/3\,>0$\quad\quad\quad\quad\phantom{0-.},\quad\quad\quad\quad$A\,=\,0$.
\item $\mathcal{B}$(VII$_h$):\phantom{0.}$\abs{\mathbf{N}_\Delta}^2-N_+^2/3\,<\,0$\quad\quad\quad\quad\phantom{-..},\quad\quad\quad\quad$A\,\neq\,0$.
\item $\mathcal{B}$(VII$_0$):\phantom{0.}$\abs{\mathbf{N}_\Delta}^2-N_+^2/3\,<\,0$\quad\quad\quad\quad\phantom{-..},\quad\quad\quad\quad$A\,=\,0$.
\end{itemize}
That the sets $\mathcal{B}$(i) are actually invariant sets, is a well known fact. Consult for instance Chapter 15 in \cite{gron07} for a thorough treatment, or Sec. 3.2 in \cite{thorsrud19} for a recent short but sufficient treatment in our particular context. In the latter reference, Thorsrud also shows how the different Bianchi invariant sets with a $j$-form field and a perfect fluid may be further divided into disjoint subsets. In particular, for each  Bianchi type $i$ the following subsets are defined. 
\begin{eqnarray}
&\mathcal{C}^+(i)&:\quad\quad\,V_1^2>0\quad\quad\,,\quad\,\mathbf{V}_c=\mathbf{\Sigma}_1=0,\\
&\mathcal{C}^0(i)&:\quad\quad\,V_1^2=0\quad\quad\,,\quad\,\mathbf{V}_c=\mathbf{\Sigma}_1=0,\\
&\mathcal{D}^+(i)&:\quad\quad\,V_1^2+\abs{\mathbf{V}_c}^2>0,\quad\quad\textrm{+ additional constraints,}\\
&\mathcal{D}^0(i)&:\quad\quad\,V_1^2+\abs{\mathbf{V}_c}^2=0,\quad\quad\textrm{+ additional constraints.}
\end{eqnarray}
Note also the definition $\mathcal{C}$(i)$\,=\,\mathcal{C}^+$(i)$\,\cup\,\mathcal{C}^0$(i). In a similar fashion, $\mathcal{S}^+$($i$) (and $\mathcal{S}^0$($i$)) denote LRS subspaces\setcounter{footnote}{0}\footnote{In the case of VI$_0$ it is more precisely pseudo-LRS, as defined in \cite[Sec. 5.1]{thorsrud19}. The geometry of the spatial sections of these models break the LRS symmetry, but these models nevertheless allow for expansion symmetry relative to a fixed axis.} with (and without) the isotropy-breaking vector. Note that the types of subsets allowed for in each Bianchi type $\mathcal{B}$($i$) is restricted by the constraint equations, as further discussed in the reference.

\paragraph{Spatial frame.} In choosing the spatial frame it is useful to align the orthonormal frame along the eigendirections of the curvature matrix $n_{ab}$,  the vector $a_b$, or a combination thereof. In the general models, $\mathcal{B}$(VI$_0$), $\mathcal{B}$(VI$_h$), $\mathcal{B}$(VII$_0$) and $\mathcal{B}$(VII$_h$), as well as for $\mathcal{B}$(IV), this can be used to define a spatial frame unambiguously leaving no more rotational freedom. For the remaining models, there are remaining degrees of freedom: 
\begin{itemize}
\item{} $\mathcal{B}$(I) : All spatial rotations remain. 
\item{} $\mathcal{B}$(II): One spatial rotation leaving the non-zero eigenvector of $n_{ab}$ fixed. 
\item{} $\mathcal{B}$(V): One spatial rotation leaving  $a_{b}$ fixed. 
\end{itemize}

\newpage
\subsection{No-hair theorems} 
Let us also recall the no-hair theorems for these models. They determine the global behaviour of the models into the future. Both  results are proven in \cite{normann18}. 

The first is in the presence of a cosmological constant: 
\begin{thm}[First no-hair theorem]
\label{thmDeSitter}
All Bianchi invariant sets $\mathcal{B}$(I)-$\mathcal{B}$(VII$_h$) with a $j$-form, a non-phantom perfect fluid\setcounter{footnote}{0}\footnote{A perfect fluid is said to be phantom if $\gamma \,<\,0$. }  and a positive cosmological constant will be asymptotically de Sitter with $\Omega_\Lambda=1$ in the case where $\gamma\,>\,0$ (and similarly $\Omega_\Lambda+\Omega_{\rm pf}=1$ in the case where $\gamma=0$).
\end{thm}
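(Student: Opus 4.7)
The plan is to implement a Wald-style monotone function argument tailored to the present variables. The cornerstone observation is that the Hamiltonian constraint \eref{Constr1} places the entire state vector in a compact subset of $\mathbb{R}^{n}$ (since every term in $C_1$ is non-negative, assuming $\Omega_{\rm pf}\ge 0$), so $\omega$-limit sets exist, and each individual squared quantity is bounded by $1-\Omega_\Lambda$. I would begin by using $C_1=0$ to eliminate $\Omega_\Lambda$ from the expression \eref{q} for $q$, yielding
\begin{equation*}
q+1 \;=\; 3\Sigma^2 \;+\; \tfrac{3\gamma}{2}\,\Omega_{\rm pf} \;+\; 3\Theta^2 \;+\; V_1^2 \;+\; \abs{\mathbf{V}_c}^2 \;+\; A^2 \;+\; \abs{\mathbf{N}_\Delta}^2,
\end{equation*}
which is a sum of manifestly non-negative terms whenever $\gamma\ge 0$ (non-phantom) and $\Omega_{\rm pf}\ge 0$. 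This identity is the engine of the proof.

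Next I would read off the first equation of \eref{EnCons}: $\Omega_\Lambda' = 2(q+1)\Omega_\Lambda\ge 0$, so $\Omega_\Lambda$ is non-decreasing along every orbit. Since $\Omega_\Lambda\in[0,1]$ by $C_1$, it converges to a limit $\Omega_\Lambda^\infty\in[\Omega_\Lambda(0),1]$. The LaSalle invariance principle, applied on the compact constraint surface, then forces the $\omega$-limit set of any orbit to lie inside the locus $(q+1)\Omega_\Lambda=0$. Because the hypothesis $\Lambda>0$ gives $\Omega_\Lambda(0)>0$ and $\Omega_\Lambda$ is monotone non-decreasing, $\Omega_\Lambda^\infty>0$, so the $\omega$-limit set satisfies $q+1\equiv 0$.

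From the displayed expression for $q+1$ it then follows that on the $\omega$-limit set $\Sigma_+=\mathbf{\Sigma}_\Delta=\mathbf{\Sigma}_1=\Theta=V_1=\mathbf{V}_c=A=\mathbf{N}_\Delta=0$, and, if $\gamma>0$, also $\Omega_{\rm pf}=0$. Substituting back into the Hamiltonian constraint \eref{Constr1} gives $\Omega_\Lambda=1$ for $\gamma>0$, and $\Omega_\Lambda+\Omega_{\rm pf}=1$ for $\gamma=0$, which is exactly the claim. The remaining constraints $C_2$--$C_4$ and the Jacobi identity \eref{JacId2} are then automatically satisfied on this locus, so no inconsistency arises, and the argument applies uniformly across the Bianchi types $\mathcal{B}$(I)--$\mathcal{B}$(VII$_h$) (none of whose defining algebraic conditions obstruct $N_+=0$, $A=0$).

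The main subtlety I anticipate is purely technical rather than deep: verifying that the LaSalle step is clean, i.e.\ that one genuinely has a strictly monotone function on the compact constraint surface rather than merely non-decreasing with possibly degenerate level sets. One way to sharpen it is to note that $\Omega_\Lambda'$ vanishes precisely on a lower-dimensional submanifold where, in turn, all the evolution equations reduce to those on a proper invariant subset, so the only way to remain in that locus is to already satisfy $q+1=0$ identically; then the constraint forces the stated attractor. The $\gamma=0$ case (dust-like cosmological constant merging) needs just a separate line because $\Omega_{\rm pf}$ drops out of the positivity argument.
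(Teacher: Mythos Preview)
Your argument is essentially correct and is the standard Wald-type monotone function proof; note that the present paper does not itself prove this theorem but simply refers back to \cite{normann18}, where precisely this style of argument is carried out. Your key identity for $q+1$ and the use of $\Omega_\Lambda$ as a bounded monotone function are exactly the right ingredients.

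One technical caveat: your opening claim that the Hamiltonian constraint \eref{Constr1} forces the full state vector into a compact set is not quite right, because $N_+$ does not appear in $C_1$ and is therefore not bounded by it. In $\mathcal{B}$(VII$_h$) the relation \eref{h} ties $N_+$ to $A$ and $|\mathbf{N}_\Delta|$, so compactness is recovered for each fixed $h$; in $\mathcal{B}$(VII$_0$), however, $N_+$ is genuinely unbounded. This does not damage the conclusion---every term in your expression for $q+1$ \emph{is} bounded by $C_1$, so $\Omega_\Lambda$ still converges monotonically and the integral $\int_0^\infty (q+1)\,d\tau$ is finite, forcing each non-negative summand (and hence $\Sigma^2$, $\Theta$, $V_1$, $|\mathbf{V}_c|$, $A$, $|\mathbf{N}_\Delta|$, and $\Omega_{\rm pf}$ for $\gamma>0$) to zero. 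The de Sitter asymptotic $\Omega_\Lambda\to 1$ then follows from $C_1$ regardless of what $N_+$ does. Your parenthetical remark about $N_+=0$ being ``unobstructed'' is therefore beside the point: the attractor in $\overline{\mathcal{B}(\textrm{VII}_0)}$ and $\overline{\mathcal{B}(\textrm{VII}_h)}$ allows $N_+$ arbitrary, and the theorem only asserts $\Omega_\Lambda\to 1$, not $N_+\to 0$.
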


A similar but less general theorem holds also in the case of a perfect fluid with $0\leq\,\gamma\,<\,2/3$ with a vanishing cosmological constant:
\begin{thm}[Second no-hair theorem]
\label{NoHair}
All Bianchi invariant sets $\mathcal{B}$(I)-$\mathcal{B}$(VII$_h$) with $\Omega_{\Lambda}=0$, a $j$-form, and a perfect fluid $\Omega_{\rm pf}$ with equation of state parameter $0\leq\,\gamma\,<\,2/3$ will be asymptotically quasi de Sitter with $q=\frac{3}{2}\gamma-1\,<\,0$.
\end{thm}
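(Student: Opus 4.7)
The plan is to exhibit $\Omega_{\rm pf}$ itself as a Lyapunov-type function for the reduced flow (with $\Omega_\Lambda=0$), and to pin down its $\omega$-limit by LaSalle's invariance principle. First I would invoke the Hamiltonian constraint \eref{Constr1} with $\Omega_\Lambda=0$ to record
\[
1-\Omega_{\rm pf}=\Sigma^2+\Theta^2+V_1^2+|\mathbf{V}_c|^2+A^2+|\mathbf{N}_\Delta|^2\ \geq\ 0,
\]
so in particular $\Omega_{\rm pf}\in[0,1]$ along every orbit.

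The decisive step is a simple algebraic identity obtained by inserting the constraint into the expression \eref{q} for $q$: a one-line manipulation (using $\tfrac{1}{2}(3\gamma-2)=-(1-\tfrac{3\gamma}{2})$) gives
\[
q+1-\tfrac{3}{2}\gamma\ =\ 2\Sigma^2+2\Theta^2+\bigl(1-\tfrac{3\gamma}{2}\bigr)\bigl(1-\Omega_{\rm pf}\bigr).
\]
For $0\leq\gamma<2/3$ the prefactor $1-3\gamma/2$ is strictly positive, so all three summands are non-negative; they vanish simultaneously precisely at the isotropic point $\Sigma=\Theta=0$, $V_1=\mathbf{V}_c=A=\mathbf{N}_\Delta=0$, $\Omega_{\rm pf}=1$. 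Substituting this into the second equation of \eref{EnCons} gives
\[
\Omega_{\rm pf}'\ =\ 2\bigl(q+1-\tfrac{3}{2}\gamma\bigr)\Omega_{\rm pf}\ \geq\ 0,
\]
so $\Omega_{\rm pf}$ is non-decreasing and bounded above by $1$, hence convergent along each orbit.

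I would then invoke LaSalle's invariance principle (equivalently, the monotonicity principle): on any orbit with $\Omega_{\rm pf}$ positive at some time---and therefore at all times, by monotonicity---the $\omega$-limit set is contained in the largest invariant subset of $\{q+1-\tfrac{3}{2}\gamma=0\}$. By the equality analysis above this forces $\Sigma\to 0$, $\Theta\to 0$, and every remaining curvature and $j$-form contribution to vanish, so $\Omega_{\rm pf}\to 1$. Re-substituting into \eref{q} then yields $q\to\tfrac{3}{2}\gamma-1<0$, which is the claimed quasi--de Sitter asymptote.

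The main technical obstacle is verifying the compactness hypothesis on which LaSalle's principle relies. All quantities appearing in $C_1$ are bounded outright by that constraint, so the only candidate for unboundedness is the frame variable $N_+$, which does not enter $C_1$. For $\mathcal{B}(\rm VII_h)$ with $h\neq 0$ the defining relation \eref{h} already bounds $N_+$ through $A$ and $|\mathbf{N}_\Delta|$. For the remaining case $\mathcal{B}(\rm VII_0)$ I would close the argument by a bootstrap: once the other variables are shown to approach the isotropic fixed point, the evolution \eref{JacId2} of $N_+$ reduces asymptotically to $N_+'\sim(\tfrac{3}{2}\gamma-1)N_+$, whose negative eigenvalue for $\gamma<2/3$ drives $N_+\to 0$ and thereby renders the asymptotic dynamics effectively compact.
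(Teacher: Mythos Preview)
The paper itself does not give a proof of this theorem; it simply records that ``both results are proven in \cite{normann18}''. Your monotone--function approach via $\Omega_{\rm pf}$ is exactly the standard one and the key algebraic identity
\[
q+1-\tfrac{3}{2}\gamma\;=\;2\Sigma^2+2\Theta^2+\bigl(1-\tfrac{3\gamma}{2}\bigr)\bigl(1-\Omega_{\rm pf}\bigr)
\]
is correct and is the heart of the matter.

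There is, however, a genuine logical gap in how you close the argument. Invoking LaSalle requires precompactness of the forward orbit, and in $\mathcal{B}(\mathrm{VII}_0)$ you do not have this a priori because $N_+$ is unbounded by $C_1$. Your bootstrap---``once the other variables approach the isotropic point, $N_+'\sim(\tfrac{3}{2}\gamma-1)N_+$''---is circular: you are using the conclusion of LaSalle (convergence to the isotropic point) to justify the hypothesis of LaSalle (compactness). As written this step fails.

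The fix is simply to bypass LaSalle. From $\Omega_{\rm pf}'\geq 0$ and $\Omega_{\rm pf}\leq 1$ you get convergence of $\Omega_{\rm pf}$, hence $\int_0^\infty(q+1-\tfrac{3}{2}\gamma)\,\Omega_{\rm pf}\,d\tau<\infty$. Since $\Omega_{\rm pf}\geq\Omega_{\rm pf}(0)>0$ along the orbit, each non-negative summand in your identity has finite time integral; in particular $\int_0^\infty(1-\Omega_{\rm pf})\,d\tau<\infty$. But $1-\Omega_{\rm pf}$ is monotone non-increasing, so it must tend to zero, i.e.\ $\Omega_{\rm pf}\to 1$. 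The Hamiltonian constraint $C_1$ then forces $\Sigma^2,\Theta^2,V_1^2,|\mathbf{V}_c|^2,A^2,|\mathbf{N}_\Delta|^2\to 0$, and substituting into \eref{q} gives $q\to\tfrac{3}{2}\gamma-1$. Note that $N_+$ never enters $q$ or $C_1$, so its possible unboundedness is irrelevant to the stated conclusion; your worry about it was a red herring introduced by the LaSalle framing.
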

The two theorems above determine the behaviour into the future for these models for $\gamma\in [0,2/3)$. 

Due to the first no-hair theorem, we will henceforth {\bf assume that }$\Omega_{\Lambda}=0$, in order to determine the behavior in the case of no cosmological constant.

\paragraph{}Finally, we shall also note the function
\begin{equation}
Z=(1+\Sigma_+)^2-A^2,
\end{equation}
previously used by Hewitt and Wainwright in\,\cite{hewitt93}. It turns out that this function is useful also for our dynamical system. Starting from the general dynamical system defined by \eref{FluidEqs}-\eref{JacId2} alongside the constraints \eref{Constr1}-\eref{Constr4}, we find
\begin{equation}
Z'=-2(2-q)Z+3(1+\Sigma_+)\left((2-\gamma)\Omega_{\rm pf}+2\abs{\bm{\Sigma}_1}^2+\frac{2}{3}\abs{\bm{V}_{\rm c}}^2\right).
\end{equation}
Note that this function is monotonic for $\Omega_{\rm pf}=\bm{\Sigma}_1=\bm{V}_{\rm c}=0$.
\newpage
\section{Summary of results}
\label{Sec:Summary}
In this section we present the main results, such that the busy reader will not have to plough through all the gory details when it can be avoided. The attractors are summarized in Figures \ref{Fig:stability_BVII0}, \ref{fig:B47} and \ref{Fig:stability_BII}. 

\subsection{General results for $0\,\leq\,\gamma\,\leq\,2/3$ and also for $\Omega_{\rm pf}=0$}
Confronting the available future attractors in each Bianchi set with the No-hair theorem \ref{NoHair}, one finds that flat FLRW is the unique future attractor in the range $0\,\leq\,\gamma\,<\,2/3$ for all the invariant sets $\mathcal{B}$(II), $\mathcal{B}$(IV), $\mathcal{B}$(VII$_0$) and $\mathcal{B}$(VII$_h$).

We may also extend the proposition of Hewitt and Wainwright\cite{hewitt93} for orthogonal perfect fluid models of class B to hold also in the presence of a $j$-form fluid.

\begin{prop}[Absence of perfect fluid]\label{prop:pw} If $A\,>0$ and $\bm{\Sigma}_1=\bm{V}_{\rm c}=\Omega_{\rm pf}=0$, then any orbit in the invariant sets $\mathcal{B}$(IV), $\mathcal{B}$(V), $\mathcal{B}$(VI$_h$) and $\mathcal{B}$(VII$_h$) of the dynamical system defined by the equations \eref{FluidEqs}-\eref{Constr4} is future asymptotic to Plane Waves (PW) and past asymptotic to Jacobs' Extended Disk (JED).
\end{prop}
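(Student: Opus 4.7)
My first step would be to confirm that the subset
\begin{equation*}
S \;=\; \{\,\bm{\Sigma}_1 = \bm{V}_c = \Omega_{\rm pf} = 0,\; A > 0\,\}
\end{equation*}
is genuinely flow-invariant under \eref{FluidEqs}--\eref{JacId2}. By direct inspection, the only source term in $\bm{\Sigma}_1'$ not already proportional to $\bm{\Sigma}_1$ is $2\sqrt{3}\,V_1\bm{V}_c$, which vanishes on $\bm{V}_c=0$; the $\bm{V}_c$-equation is linear in $(\bm{V}_c,\bm{V}_c^*)$; the perfect-fluid energy obeys $\Omega_{\rm pf}' \propto \Omega_{\rm pf}$; and $A' = (q + 2\Sigma_+)A$ preserves positivity of $A$.

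\textbf{Monotone function.} On $S$, the evolution equation for $Z = (1+\Sigma_+)^2 - A^2$ displayed just before Section \ref{Sec:Summary} collapses to $Z' = -2(2-q) Z$. I would then combine the Hamiltonian constraint \eref{Constr1} with the expression \eref{q} for $q$ (using $\Omega_\Lambda=0$ by the paragraph after Theorem \ref{thmDeSitter}) to obtain, on $S$,
\begin{equation*}
2 - q \;=\; 2\bigl(A^2 + V_1^2 + \abs{\bm{N}_\Delta}^2\bigr) \;\geq\; 2A^2 \;>\; 0.
\end{equation*}
Since the state space is compact (bounded by $C_1$), $Z$ is bounded and strictly monotonically decreasing along any orbit on which $Z \neq 0$.

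\textbf{Application of the monotonicity principle.} By LaSalle's invariance principle, the $\omega$- and $\alpha$-limit sets of any orbit in $S$ must lie in the zero locus of $Z'$, i.e., in $\{(2-q)Z=0\}$. For the $\omega$-limit: $2-q>0$ on $S$, so $Z\to 0$, meaning the limit lies in $\{A = 1+\Sigma_+\}$ with $A>0$ -- precisely the Plane Wave equilibrium set in each of $\mathcal{B}$(IV), $\mathcal{B}$(V), $\mathcal{B}$(VI$_h$), $\mathcal{B}$(VII$_h$). For the $\alpha$-limit: $Z$ increases in backward time to some positive supremum (unless the orbit is already a PW equilibrium, which we may discard), forcing $2-q\to 0$ on the limit set, whence $A\to 0$, $V_1\to 0$ and $\abs{\bm{N}_\Delta}\to 0$. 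In all four Bianchi types this also pins $N_+\to 0$ (trivial in $\mathcal{B}$(V); via $\abs{\bm{N}_\Delta}^2 = N_+^2/3$ in $\mathcal{B}$(IV) and $\mathcal{B}$(VII$_h$); by boundary inspection in $\mathcal{B}$(VI$_h$)), and the remaining constraint $\Sigma_+^2 + \abs{\bm{\Sigma}_\Delta}^2 + \Theta^2 = 1$ parametrizes precisely Jacobs' Extended Disk.

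\textbf{Expected obstacle.} The monotonicity calculation itself is immediate. The real work is the type-dependent bookkeeping needed to verify that the full boundary loci $\{Z=0\}\cap S$ and $\{A=V_1=\abs{\bm{N}_\Delta}=0\}\cap\overline{S}$ contain no invariant sub-structures beyond PW and JED respectively -- i.e., that there are no other recurrent orbits hidden inside either locus. This is a routine equilibrium enumeration that must be performed separately in each of the four Bianchi types, and is where the proof will spend most of its length.
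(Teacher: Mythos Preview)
Your proposal is correct and matches the paper's approach exactly: the paper's own proof is the one-liner ``Use of monotonic function $Z$ in combination with tables of equilibrium sets,'' which is precisely your monotone-function-plus-enumeration strategy, including your explicit acknowledgement that the equilibrium bookkeeping in each type is where the work lies. (Two minor slips that do not affect the argument: $Z$ is not decreasing when $Z<0$ --- rather $|Z|$ decreases toward $0$ --- and in $\mathcal{B}$(VII$_h$) the vanishing of $N_+$ on the $\alpha$-limit follows from the group-parameter relation \eref{h} with $A\to 0$, not from the type-IV condition $|\mathbf{N}_\Delta|^2=N_+^2/3$.)
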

\begin{proof}
Use of monotonic function $Z$ in combination with tables of equilibrium sets\setcounter{footnote}{0}\footnote{The available past and future attractor candidates may be found in Table \ref{tab:StabIV} and Table \ref{tab:StabVIIhNm}. For $\mathcal{B}$(V), use Theorem 9.1 in \cite{normann18} and also Table 6 therein.} 
\end{proof}
Note that the Milne solution (M) is included in PW. From the discussion of constraints in Section \ref{Sec:constraints} this proposition generally holds for $\mathcal{B}$(IV), $\mathcal{B}$(VI$_h$) and $\mathcal{B}$(VII$_h$) in the absence of a perfect fluid. Also note that JED is the only repeller found among the equilibrium sets in all the studied models for $0<\gamma<2$, and JS the only one for $\gamma=2.$

The particular point $\gamma=2/3$ might also be of some interest. Note therefore the following proposition. 
\begin{prop}[Future asymptotes for $\gamma=2/3$]
\label{prop:genProp}
The Bianchi invariant sets $\mathcal{B}$(IV),  $\mathcal{B}$(VI$_0$), $\mathcal{B}$(VI$_h$), $\mathcal{B}$(VII$_0$) and  $\mathcal{B}$(VII$_h$)  with a perfect fluid and a $j$-form fluid are asymptotically shear-free with
\begin{equation}
1=\Omega_{\rm pf}+A^2,
\end{equation}
and $\Omega_{\rm pf}'=A'=N_+'=0$.
\end{prop}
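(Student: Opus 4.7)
The plan is to exploit the monotonicity that the value $\gamma=2/3$ builds into the equation for $\Omega_{\rm pf}$. First I would substitute $\gamma=2/3$ and (by assumption) $\Omega_\Lambda=0$ into \eqref{q}; this collapses the deceleration parameter to $q=2\Sigma^2+2\Theta^2\ge 0$, whereupon \eqref{EnCons} reads $\Omega_{\rm pf}'=2q\,\Omega_{\rm pf}\ge 0$. Since the Hamiltonian constraint \eqref{Constr1} pins $\Omega_{\rm pf}\le 1$, the fluid density is monotonically non-decreasing and bounded above, hence converges along any orbit with $\Omega_{\rm pf}(0)>0$ to some $\Omega_{\rm pf}^{\infty}\in(0,1]$. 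LaSalle's invariance principle then places the $\omega$-limit set of such an orbit in the largest invariant subset of $\{q\,\Omega_{\rm pf}=0\}$, and because $\Omega_{\rm pf}^\infty>0$ this forces $q\equiv 0$ there, equivalently $\Sigma_+=\bm{\Sigma}_\Delta=\bm{\Sigma}_1=\Theta=0$ identically on the limit set.

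The second step is to extract algebraic constraints on the remaining variables from the fact that the $\omega$-limit set is invariant. Imposing $\Sigma_+'=\bm{\Sigma}_\Delta'=\bm{\Sigma}_1'=\Theta'=0$ in \eqref{FluidEqs}--\eqref{EinstEq} with $\Sigma=\Theta=0$ delivers
\[
AV_1=0,\qquad V_1\bm{V}_c=0,\qquad |\bm{V}_c|^2=2V_1^{\,2}+2|\bm{N}_\Delta|^2,\qquad \sqrt{3}\,\bm{V}_c^{\,2}=2\bm{N}_\Delta(iA+N_+).
\]
Taking the modulus squared of the last identity and inserting the third---after the short case-split enforced by $V_1\bm{V}_c=0$---reduces the system to the single polynomial condition $|\bm{N}_\Delta|^2\bigl(3|\bm{N}_\Delta|^2-A^2-N_+^2\bigr)=0$. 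A case-analysis over the Bianchi types defined in Section~\ref{Subsec:GenDynSys} then excludes the non-trivial branch: $3|\bm{N}_\Delta|^2=N_+^2$ combined with $A\ne 0$ in $\mathcal{B}(\mathrm{IV})$, the strict inequality $|\bm{N}_\Delta|^2<N_+^2/3$ in $\mathcal{B}(\mathrm{VII}_0)$ and $\mathcal{B}(\mathrm{VII}_h)$, and the strict inequality $3|\bm{N}_\Delta|^2>N_+^2$ in $\mathcal{B}(\mathrm{VI}_0)$ and $\mathcal{B}(\mathrm{VI}_h)$ all rule it out.

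Once $V_1=\bm{V}_c=\bm{N}_\Delta=0$ is established on the $\omega$-limit set, the Hamiltonian constraint \eqref{Constr1} reduces to $1=\Omega_{\rm pf}+A^2$; substituting back into \eqref{EnCons} and \eqref{JacId2} then yields $\Omega_{\rm pf}'=2q\,\Omega_{\rm pf}=0$, $A'=(q+2\Sigma_+)A=0$ and $N_+'=(q+2\Sigma_+)N_++6\Re\{\bm{\Sigma}_\Delta^*\bm{N}_\Delta\}=0$, completing the claim. The main obstacle I anticipate is the case-analysis in the second step: the algebraic bookkeeping must be carried out carefully within each Bianchi type, and in particular the marginal $h=-1$ sub-case of $\mathcal{B}(\mathrm{VI}_h)$---where the spurious branch $3|\bm{N}_\Delta|^2=A^2+N_+^2$ is consistent with the very definition of the type---demands a separate treatment, most naturally by invoking the monotonic function $Z$ introduced just before the statement of this proposition.
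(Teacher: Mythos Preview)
Your overall strategy---$q=2\Sigma^2+2\Theta^2\ge 0$ at $\gamma=2/3$, hence $\Omega_{\rm pf}$ is monotone and bounded, then LaSalle to pin the $\omega$-limit set in $\{q=0\}$---is exactly the mechanism behind the paper's proof (which simply refers back to Theorem~8.1 of \cite{normann18}). The gap is in your second step. The Bianchi-type defining relations you invoke to kill the branch $3|\bm{N}_\Delta|^2=A^2+N_+^2$ are \emph{strict} inequalities (or the condition $A\neq 0$), and these need not survive passage to the $\omega$-limit set: the limit lies in the \emph{closure} of the invariant set, where $|\bm{N}_\Delta|^2<N_+^2/3$ may become equality, or $A$ may vanish. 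So, for instance, your argument does not actually exclude a $\mathcal{B}(\mathrm{VII}_h)$ orbit limiting onto a point with $A=0$, $3|\bm{N}_\Delta|^2=N_+^2$, $|\bm{V}_c|^2=2|\bm{N}_\Delta|^2>0$; such a point sits in the type~II boundary and is invariant under the flow with $q=\Sigma=\Theta=0$.

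The fix is the constraint analysis of Section~\ref{Sec:constraints}: for every type listed in the proposition (with the caveat $h\neq -1$ that you already flagged) one has $\bm{V}_c=\bm{\Sigma}_1=0$ \emph{identically along the orbit}, hence also on its closure. With $\bm{V}_c=0$ in hand, your third displayed relation collapses to $0=2V_1^2+2|\bm{N}_\Delta|^2$, giving $V_1=\bm{N}_\Delta=0$ immediately---no polynomial identity, no case split, no appeal to the type inequalities. This is precisely why the paper works in the reduced system of Section~\ref{Sec:BefGChice} from the outset; the $\Sigma_+'$-equation there reads $(q-2)\Sigma_+-2(N_-^2+N_\times^2+V_1^2)$, and on $\{q=\Sigma_+=0\}$ it forces everything to zero in one line. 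Your instinct that $\mathcal{B}(\mathrm{VI}_{-1})$ requires separate handling is correct and consistent with the paper's remarks in Section~\ref{Sec:constraints}.
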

\begin{proof}
We refer to the proof of theorem 8.1 in \cite{normann18}, of which the above proposition is an extension. The proof is similar.
\end{proof}

\paragraph{} In what follows we give particular results for each invariant set.

\subsection{The invariant set $\mathcal{B}$(VII$_0$)}
The general results cover the future asymptotes of $\gamma\,\leq\,2/3$, where FLRW is the attractor. The remaining $\gamma$-range is discussed in Section \ref{Sec:VII0} and summarized in the following theorems. 
\begin{thm}[Anisotropic hairs for $2/3<\gamma\,\leq\,2$ in $\mathcal{B}$(VII$_0$)] All $\mathcal{B}$(VII$_0$) orbits with $V_1^2\,>\,0$ and $\Omega_{\rm pf}\,>\,0$ are future asymptotic to W($\nu_1$) for $2/3<\gamma\,<\,2$ and to $T_1$\setcounter{footnote}{0}\footnote{This is the Taub section of flat space-time. Consider \cite[Sec. 6.2.2 and 9.1.6]{dynSys} for more. In this paper the Taub points are not devoted any particular attention.} ($\Sigma_+=-1$) for $\gamma=2$.
\end{thm}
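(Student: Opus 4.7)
The plan is to combine a complete linear stability analysis at the candidate future attractors in $\mathcal{B}(\text{VII}_0)$ with a LaSalle-type argument based on a monotonic function restricted to the relevant invariant subset, and then rule out any other $\omega$-limit by exploiting compactness of the state space enforced by the Hamiltonian constraint $C_1$.

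First I would specialise the general system \eref{FluidEqs}--\eref{JacId2} to $\mathcal{B}(\text{VII}_0)$, i.e.\ set $A=0$ and impose $|\mathbf{N}_\Delta|^2-N_+^2/3<0$, together with the hypothesis $V_1^2>0$, $\Omega_{\rm pf}>0$ and $\Omega_\Lambda=0$. Using the constraints \eref{Constr2}--\eref{Constr4} together with $V_1\neq 0$ one can eliminate some of the shear and momentum variables and locate the full list of equilibria compatible with $V_1\neq 0$; these will be exactly the ones tabulated for $\mathcal{B}(\text{VII}_0)$ in the detailed section (the analogue of Table \ref{tab:StabIV}), and the only non-degenerate candidate future attractors with $V_1\neq 0$ and $\Omega_{\rm pf}\neq 0$ are Wonderland W$(\nu_1)$ and, in the boundary case $\gamma=2$, the Taub point $T_1$ with $\Sigma_+=-1$.

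Next I would carry out the linear stability analysis at W$(\nu_1)$. This is a direct computation: substitute the equilibrium values into the Jacobian of \eref{FluidEqs}--\eref{JacId2} together with the linearised constraints, and verify that all transverse eigenvalues have strictly negative real part for $2/3<\gamma<2$, and that they pick up a zero mode exactly at $\gamma=2/3$ (where the no-hair theorem takes over) and at $\gamma=2$ (where stability transfers to $T_1$). For $\gamma=2$ the point $T_1$ must then be checked separately, using that the dust-like scaling of $\Omega_{\rm pf}$ becomes critical.

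For the global step, which is the main obstacle, the monotonic function $Z=(1+\Sigma_+)^2-A^2=(1+\Sigma_+)^2$ in $\mathcal{B}(\text{VII}_0)$ does not work directly because $\Omega_{\rm pf},\mathbf{\Sigma}_1,\mathbf{V}_{\rm c}$ need not vanish. The strategy is therefore LaSalle's invariance principle: compactness of $D$ (from $C_1=0$ which bounds every variable) guarantees a non-empty $\omega$-limit $\omega(x)$; the evolution of $\Omega_{\rm pf}$ together with $Z'$ forces $\omega(x)$ to lie in the set where $(2-\gamma)\Omega_{\rm pf}+2|\bm{\Sigma}_1|^2+\tfrac23|\bm{V}_{\rm c}|^2$ and $(q+1-\tfrac32\gamma)\Omega_{\rm pf}$ vanish, which in turn constrains $\omega(x)$ to an invariant subset on which $Z$ becomes monotonic and standard orthonormal-frame arguments (as in \cite{hewitt93,dynSys}) apply. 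Combining this with the local stability, and the tabulated list of equilibria, rules out every equilibrium set except W$(\nu_1)$ (respectively $T_1$ when $\gamma=2$), and concludes that $\omega(x)=\{\text{W}(\nu_1)\}$ or $\{T_1\}$ as claimed. The delicate bookkeeping, especially showing that $\Omega_{\rm pf}>0$ is preserved along the flow so that the hypothesis is truly non-vacuous in the limit, is where care is needed.
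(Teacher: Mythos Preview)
Your local analysis is fine, but the global step has a genuine gap. In $\mathcal{B}(\textrm{VII}_0)$ the constraints already force $\mathbf{\Sigma}_1=\mathbf{V}_c=0$ (Option~1a in Section~\ref{Sec:constraints}), so the only obstruction to $Z=(1+\Sigma_+)^2$ being monotone is the $\Omega_{\rm pf}$ term. But then
\[
Z'=-2(2-q)(1+\Sigma_+)^2+3(1+\Sigma_+)(2-\gamma)\Omega_{\rm pf}
\]
is a sum of a non-positive and a non-negative term and is \emph{not} sign-definite; likewise $\Omega_{\rm pf}'=2(q+1-\tfrac{3}{2}\gamma)\Omega_{\rm pf}$ changes sign with $q$. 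Neither function is Lyapunov, and the LaSalle step you sketch (``$\omega(x)$ must lie where both brackets vanish'') does not follow: LaSalle requires a single function with definite-sign derivative, and combining two non-monotone quantities does not give one. Without a genuine monotone function you cannot rule out periodic orbits or more complicated $\omega$-limit sets, so the classification of equilibria plus local stability is not enough to conclude.

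The paper's proof closes exactly this gap with a tailored monotone function
\[
Z_6=\frac{V_1^{3\gamma-2}\,\Omega_{\rm pf}^{\,2}}{\phi^{\,3\gamma+2}},\qquad \phi=1+\tfrac{1}{4}(3\gamma-2)\Sigma_+,
\]
whose logarithmic derivative is $\phi^{-1}\bigl[8(\Sigma_++m)^2+\tfrac{3}{2}(3\gamma+2)(2-\gamma)(|\mathbf{\Sigma}_\Delta|^2+\Theta^2)\bigr]\geq 0$. This is strictly increasing on orbits with $V_1^2>0$, $\Omega_{\rm pf}>0$, so the $\omega$-limit must satisfy $F=0$ (giving $\Sigma_+=-m$, $\mathbf{\Sigma}_\Delta=\Theta=0$, i.e.\ W$(\nu_1)$) or $\phi=0$ (giving $\Sigma_+=-1$, i.e.\ $T_1$, only possible at $\gamma=2$). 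The hypotheses $V_1^2>0$ and $\Omega_{\rm pf}>0$ are precisely what make $Z_6$ well-defined and nonzero; this is the missing ingredient in your plan.
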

We note that this implies that the $p$-form regularises the self-similarity breaking that occurs for this model when only a perfect fluid is present (see \cite{wainwright99, nilsson00} for non-tilted, and \cite{hervik06a} for tilted perfect fluid). Hence, the $\mathcal{B}$(VII$_0$) model with a $p$-form \emph{is future asymptotic to a self-similar model.} 

\begin{thm}[Past asymptotes in $\mathcal{B}$(VII$_0$)] All $\mathcal{B}$(VII$_0$) orbits with $V_1^2\,>\,0$ and $\Omega_{\rm pf}\,>\,0$ are past asymptotic to JED for $2/3\,<\,\gamma\,<\,2$ and to JS for $\gamma=2$.
\end{thm}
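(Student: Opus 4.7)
The strategy is to reduce the past asymptotics in $\mathcal{B}$(VII$_0$) to those of its $\mathcal{B}$(I) boundary, for which JED (for $2/3<\gamma<2$) and JS (for $\gamma=2$) have already been established as past attractors in \cite{normann18}. The reduction hinges on showing that the curvature variables $|\mathbf{N}_\Delta|$ and $N_+$---which distinguish $\mathcal{B}$(VII$_0$) from $\mathcal{B}$(I)---and, for $\gamma<2$, also $\Omega_{\rm pf}$, all tend to zero as $\tau\to-\infty$.

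First, I would argue that at any $\omega^{-}$-limit point the deceleration parameter is forced to $q=2$. On a Kasner-type past state, the perfect fluid and curvature vanish, so \eref{q} reduces to $q=2\Sigma^2+2\Theta^2$, and saturation of the Hamiltonian constraint \eref{Constr1} among the surviving variables gives $q=2$. Linearising the energy conservation $\Omega'_{\rm pf}=2(q+1-\frac{3}{2}\gamma)\Omega_{\rm pf}$ at such a state yields the coefficient $6-3\gamma$, which is strictly positive for $\gamma<2$; hence $\Omega_{\rm pf}$ grows forward and decays backward in $\tau$. The same principle applied to $|\mathbf{N}_\Delta|^2$ and $N_+^2$, whose evolution from \eref{JacId2} has leading coefficient $2(q+2\Sigma_+)$, yields exponential decay into the past away from the Taub subset $\Sigma_+=-1$. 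Once one verifies via a Lyapunov combination of $\Omega_{\rm pf}$ with the curvature invariants that the $\omega^{-}$-limit set lies in $\mathcal{B}$(I)$\cap\{\Omega_{\rm pf}=0\}$ (or in $\mathcal{B}$(I) for $\gamma=2$), the conclusion follows from the $\mathcal{B}$(I) result of \cite{normann18}.

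The main technical obstacle is controlling the dynamics near the Taub subset, where the naive linear decay argument breaks down; this typically requires a centre-manifold analysis or a tailored monotonic function ruling out Taub points as $\omega^{-}$-limits for orbits with $V_1^2>0$ and $\Omega_{\rm pf}>0$. A secondary challenge is establishing simultaneous decay of $\mathbf{\Sigma}_1$ and $\mathbf{V}_c$ toward their JED/JS values, which should follow by combining the constraints \eref{Constr1}--\eref{Constr4} with the already established decay of $\Omega_{\rm pf}$ and the curvature. For $\gamma=2$ the same scheme applies, except that $\Omega_{\rm pf}$ need no longer decay, so the attractor set correspondingly enlarges from JED to JS.
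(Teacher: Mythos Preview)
Your proposal has a genuine gap and also a misunderstanding of the setup.

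\textbf{Circularity in the first step.} You write that ``at any $\omega^-$-limit point $q=2$,'' and justify this by noting that ``on a Kasner-type past state, the perfect fluid and curvature vanish.'' But that the past state is Kasner-type (i.e.\ $\Omega_{\rm pf}=\nu^2=0$) is precisely what you are trying to prove. The subsequent linearisation of $\Omega_{\rm pf}'$ and $(\nu^2)'$ at $q=2$ only reproduces the local eigenvalues of JED (confirming it is a repeller); it does not establish that \emph{every} orbit with $V_1^2>0$ and $\Omega_{\rm pf}>0$ has its $\alpha$-limit there. The sentence ``once one verifies via a Lyapunov combination of $\Omega_{\rm pf}$ with the curvature invariants\ldots'' is the entire content of the theorem, and you do not construct such a function.

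\textbf{The paper's route.} The paper closes exactly this gap by exhibiting the monotone function
\[
Z_6=\frac{V_1^{3\gamma-2}\Omega_{\rm pf}^2}{\phi^{3\gamma+2}},\qquad \phi=1+\tfrac14(3\gamma-2)\Sigma_+,
\]
which satisfies $Z_6'/Z_6=\phi^{-1}F$ with $F\geq 0$ (eqs.~\eref{b1}--\eref{b2}). Since $V_1^2>0$ and $\Omega_{\rm pf}>0$ along the orbit, $Z_6$ is strictly increasing except where $F=0$. The $\alpha$-limit must therefore lie in $\{Z_6=0\}\cup\{F=0\}$; checking $F=0$ against Table~\ref{tab:StabVII0} leaves no past-attractor candidates, so $Z_6\to 0$ into the past, forcing $\Omega_{\rm pf}\to 0$. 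The table then identifies JED (respectively JS when $\gamma=2$) as the only available repeller. This single function simultaneously handles the Taub obstruction you flag, since $\phi>0$ except at $\Sigma_+=-1$, $\gamma=2$.

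\textbf{A misreading of the constraints.} You list as a ``secondary challenge'' the decay of $\mathbf{\Sigma}_1$ and $\mathbf{V}_c$. In $\mathcal{B}$(VII$_0$) these vanish identically by the constraint analysis of Section~\ref{Sec:constraints} (Option~1a: $\det M\neq 0$ and $\det N\neq 0$ force $\mathbf{V}_c=\mathbf{\Sigma}_1=0$). There is nothing to control; the reduced system of Section~\ref{Sec:BefGChice} already has them removed.
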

\begin{proof}
To prove the two above theorems, use the monotonic function $Z_6$ in \ref{App:B} and Table \ref{tab:StabVII0} of available future and past attractors. We refer to Section \ref{Sec:VII0} for details.
\end{proof}

\begin{figure}[h]
	\centering	
	\begin{overpic}[width=\textwidth,tics=10]{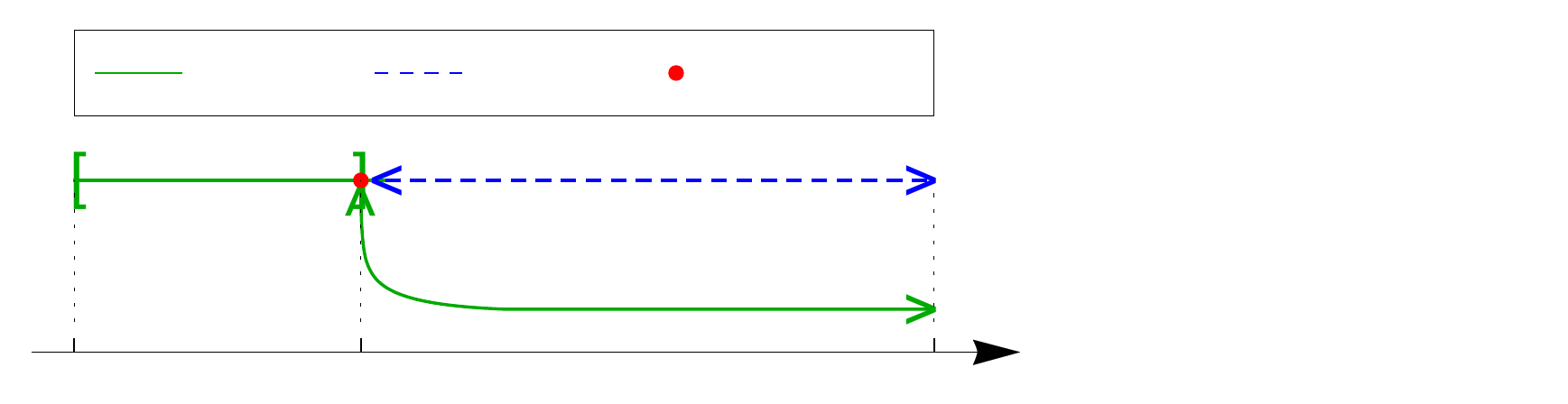}	
		\put (13,20.8) {\scriptsize attractor}
		\put (31,20.8) {\scriptsize saddle}
		\put (45,20.8) {\scriptsize bifurcation}		
		\put (65,13.7) {\footnotesize flat FLRW, $\displaystyle \gamma\in[0,2)$}
		\put (65,5.7) {\footnotesize Wonderland (W.), $\displaystyle \gamma\in\left(2/3,2\right)$}	
		\put (65.5,2.9) { $\displaystyle \gamma$ }	
		\put (3.4,0) { \scriptsize $\displaystyle 0$ }
		\put (22.3,-1) {\scriptsize $\displaystyle \frac{2}{3}$ }		
		\put (58.3,0) { \scriptsize $\displaystyle 2$ }		
	\end{overpic}	
	\caption{The diagram shows the results of the stability analysis of the set $\mathcal{B}$(VII$_0$).} 
	\label{Fig:stability_BVII0}
\end{figure}

\subsection{The invariant sets $\mathcal{B}$(IV) and $\mathcal{B}$(VII$_h$)}
The general results cover the future asymptotes for $\gamma<2/3$. Due to the lack of known monotonic functions for these sets, the results are of local character. JED ($2/3\,<\,\gamma\,<\,2$) and JS ($\gamma=2$) were the only repellers found for the sets. The attractors previously identified in $\mathcal{B}$(V) have both been found as extended families. The Plane Waves,  PW($\alpha,\nu_1,\nu^2$), exist for $\Omega_{\rm pf}=0$ and is according to Proposition \ref{prop:pw} a global attractor in the absence of the perfect fluid. In the presence of the perfect fluid, it is an attractor for $\beta_1\,>\,-\frac{3}{4}\left( \gamma-\frac{2}{3}\right)$. These results hold also for $\mathcal{B}$(IV), where $\alpha^2=1$.

The other attractor identified in $\mathcal{B}$(VII$_h$) is Wonderland, which comes as a two-paramter family $W(\kappa,\nu_1)$. It is an attractor on all of its existence, and exists for $\Omega_{\rm pf}>0.$ The subset $\nu_1=0$, called W($\kappa$), is found in $\mathcal{B}$(V), and is stable in $\mathcal{B}$(IV).

\begin{figure}[H]
\centering
\begin{tikzpicture}
\node[inner sep=0pt] (russell) at (0,0)
    {
    \begin{overpic}[width=0.7\textwidth,tics=10]{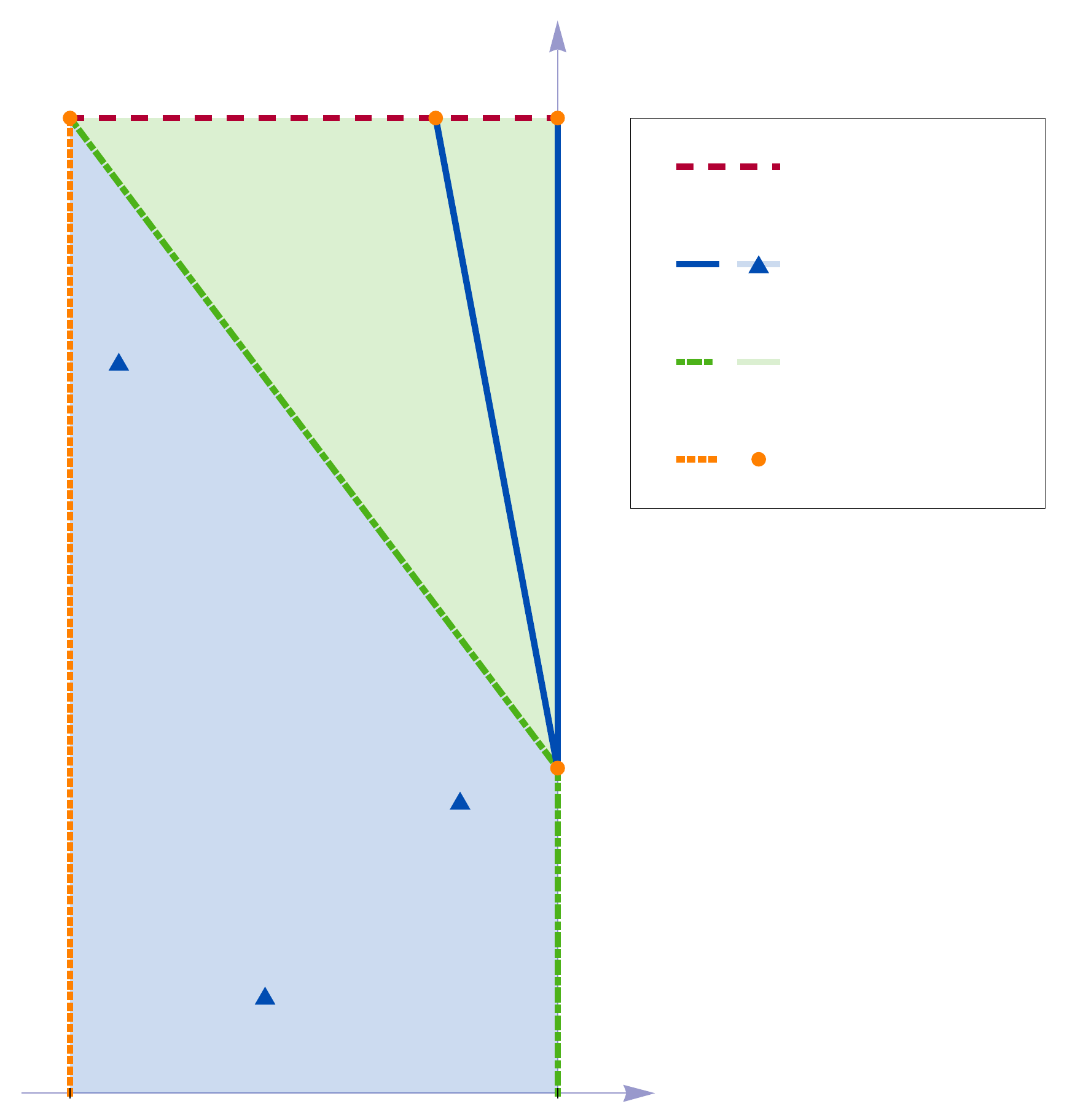}
\put (47,10) {\scriptsize{\rotatebox{90}{flat FLRW}} }
\put (47,65) {\scriptsize{\rotatebox{90}{flat FLRW}} }
\put (26,65) {\scriptsize{\rotatebox{-53.5}{Wonderland}} }
\put (17,25) {\scriptsize{\rotatebox{36.5}{Plane Waves}} }
\put (20,75) {\scriptsize{\rotatebox{36.5}{Plane Waves}} }
\put (40.6,65) {\scriptsize{\rotatebox{-78.3}{Collins-Stewart}} }
\put (25,90.5) {\scriptsize Jacobs' Sphere}
\put (2,90.5) {\scriptsize Taub ($T_1$)}
\put (7.3,50) {\scriptsize{\rotatebox{-90}{Kasner}} }
\put (50,96) { \transparent{0.5}%
{\textcolor{axisColor}{$\displaystyle \gamma$ }}}
\put (3.4,0) { \scriptsize $\displaystyle -1$ }
\put (51.7,31) {\scriptsize \transparent{0.5}%
{\textcolor{axisColor}{$\displaystyle \frac{2}{3}$ }}}
\put (51.7,88.7) {\scriptsize \transparent{0.5}%
{\textcolor{axisColor}{$\displaystyle 2$ }}}
\put (58,1.6) { \transparent{0.5}%
{\textcolor{axisColor}{\scriptsize $\displaystyle \Sigma_+$ }}}	

\put (76,84.5) {\scriptsize{Repeller}}
\put (76,76) {\scriptsize{Saddle}}
\put (64.5,76) {\scriptsize{$/$}}
\put (76,66.9) {\scriptsize{Attractor}}
\put (64.5,66.9) {\scriptsize{$/$}}
\put (76,58.5) {\scriptsize{Inconclusive}}

	\end{overpic}};
\node[] at (-0.31,-5.8) {\scriptsize $\displaystyle \Omega_{\rm pf}=1\,\uparrow$};
\node [align=left] at (3.1,-2.3) {
Schematic illustration \\
showing how the\\
different attractors \\
(green) relate to each \\
other. Note that $\gamma$ \\
increases upwards, and \\
$\Sigma_+$ is non-positive.\\
Also, Collins-Stewart\\
and FLRW are in the \\
boundaries.
};
\end{tikzpicture}
\caption{$\overline{\mathcal{B}\textrm{(IV)}}$ and $\overline{\mathcal{B}\textrm{(VII}_h)}$.}
\label{fig:B47}
\end{figure}
\subsection{The invariant set $\mathcal{B}$(II)}
The general results cover the future asymptotes of $\gamma\,\leq\,2/3$, where FLRW is the attractor. For $\gamma\,>\,2/3$, the set $\mathcal{B}$(II) is more subtle. Because of an extra degree of freedom, the vector rotations previously found in $\mathcal{B}$(I), are found also here. The following theorem has been proven to hold.
\begin{thm}[Anisotropic hairs]
\label{anisBII}
The set $\mathcal{B}$(II) with a $j$-form fluid and a perfect fluid with $\Omega_{\rm pf}>0$ is for $\Theta^2>0$ and $\gamma<2$ ($\gamma=2$) past asymptotic to JED (JS). Also, for $2/3\,<\gamma\,<\,2$ and $V_1^2>0$ it is future asymptotic to the following. 
\begin{itemize}
\item{} If $2/3<\gamma<2$ and $\Sigma_3=0$, \emph{Wonderland (W)}; 
\item{} if $2/3<\gamma\leq 6/5$ and $\Sigma_3\neq 0$,  \emph{Wonderland (W)}; 
\item{} if $6/5<\gamma< 4/3$ and $\Sigma_3\neq 0$,  \emph{the Rope (R)}; 
\item{} if $4/3\leq \gamma< 2$ and $\Sigma_3\neq 0$, \emph{the Edge (E)}.
\end{itemize}
\end{thm}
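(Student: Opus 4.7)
The plan is to follow the standard three-step strategy used throughout the paper: (i) enumerate the equilibrium sets admitted by \eref{FluidEqs}--\eref{JacId2} inside $\mathcal{B}$(II); (ii) establish the past-asymptotic claim via a monotonic function combined with the list of candidate $\alpha$-limits; and (iii) establish the future-asymptotic dichotomy via a careful linearization, whose eigenvalues change sign at the quoted values of $\gamma$.

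First, I would restrict the general system to $\mathcal{B}$(II), i.e.\ set $A=0$ and impose $|\mathbf{N}_\Delta|^2=N_+^2/3$ with $|\mathbf{N}_\Delta|^2>0$, and resolve the constraints \eref{Constr1}--\eref{Constr4}. Because $A=0$, the remaining gauge freedom is the single rotation around $\mathbf{e}_1$ (parametrized by $R_1$), so I would fix this freedom following the prescription in the spatial-frame paragraph of the introduction. Setting every derivative in \eref{FluidEqs}--\eref{JacId2} to zero then yields the candidate equilibrium families already named in the statement: JED and JS along the vacuum Kasner boundary $\Omega_{\rm pf}=V_1^2=|\mathbf{V}_c|^2=0$, Wonderland W along the $V_1>0$, $\Sigma_3=0$ axis, and, crucially, two new equilibrium sets that exist only when $\Sigma_3\neq 0$, which I would identify with Rope (R) and Edge (E). The latter arise precisely because the extra rotational freedom of $\mathcal{B}$(II) compared with $\mathcal{B}$(IV) permits a second shear mode to balance the $\mathbf{N}_\Delta$ term in the $\mathbf{\Sigma}_\Delta'$ equation.

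For the past asymptote, under the hypothesis $\Theta^2>0$ one has $V_1$ (or at least the $j$-form sector) nontrivial, and I would invoke the monotonic function $Z_6$ of \ref{App:B} (the same device used in the proof of the past-asymptote theorem for $\mathcal{B}$(VII$_0$)). Monotonicity of $Z_6$ along orbits rules out any $\alpha$-limit set inside the interior of $\mathcal{B}$(II); thus the past limit lies on the boundary. The only candidate past attractors in the list compiled during step~(i) are JED for $0<\gamma<2$ and JS for $\gamma=2$, yielding the first sentence of the theorem. This part is essentially a check that the hypotheses of the LaSalle-type argument of Theorem~\ref{NoHair} and its companions apply without modification.

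The substantive work is the future-asymptote list. I would linearize the full system about each of W, R, E and compute the transverse eigenvalues as functions of $\gamma$, exploiting $q$ given by \eref{q} on the equilibrium manifold to reduce the eigenvalue problem to a manageable block form. In the $\Sigma_3=0$ sector, W persists as in $\mathcal{B}$(I),(IV),(VII$_0$), so the analysis of \cite{normann18} carries over and gives stability for all $2/3<\gamma<2$. In the $\Sigma_3\neq 0$ sector, the $\Sigma_3$-direction introduces a new eigenvalue at W that is proportional to a factor vanishing at $\gamma=6/5$; W loses stability as this eigenvalue crosses zero, at which point the Rope bifurcates off it. I would then repeat the calculation at R to find a second transverse eigenvalue changing sign at $\gamma=4/3$, giving way to the Edge. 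Finally, I would combine the local stability conclusions with the existence ranges (R exists for $6/5<\gamma<4/3$, E for $4/3\le\gamma<2$), and verify with the constraints that no other equilibrium competes inside these ranges.

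The main obstacle is the linearization at R and E. Because both live off the LRS-like section, the Jacobian is full-rank in all nine dynamical directions plus the constraint surface tangency, and the characteristic polynomial does not factor over the rationals. I would handle this by working in the basis that diagonalizes the shear-curvature block at the bifurcation parameter, so that the critical eigenvalue factors out linearly in $(\gamma-6/5)$ or $(\gamma-4/3)$ and the remaining block has manifestly negative-real-part eigenvalues for $\Omega_{\rm pf}>0$. A secondary obstacle is that, unlike in $\mathcal{B}$(VII$_0$), no monotonic function covering the $\Sigma_3\neq 0$ regime is established in \ref{App:B}, so the future-asymptotic statement will be local; I would state this limitation explicitly and leave global attraction for future work, matching the tone of the corresponding claim for $\mathcal{B}$(IV) and $\mathcal{B}$(VII$_h$).
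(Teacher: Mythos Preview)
Your proposal has two genuine gaps, both of which the paper avoids by a route you did not anticipate.

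First, the monotonic function $Z_6$ in \ref{App:B} is stated only for $\mathcal{B}$(VI$_0$) and $\mathcal{B}$(VII$_0$); it does not apply to $\mathcal{B}$(II). For the past asymptote the paper instead uses $\Theta'=(q-2)\Theta$ directly: with $A=0$ this is monotone for $q<2$, so any $\alpha$-limit with $\Theta^2>0$ must have $q=2$, which singles out JED (and JS at $\gamma=2$). No LaSalle-type argument involving $Z_6$ is needed or available here.

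Second, and more importantly, you treat the future asymptote as a purely local linearization problem inside $\mathcal{B}$(II) and conclude that the result can only be local. The paper obtains a \emph{global} result by a reduction to $\mathcal{B}$(I). The key observation you missed is the constraint \eref{Constr2} with $A=0$, which reads $\Theta V_1+N_-\Sigma_\times=0$. Since $\Theta$ decays monotonically for $q<2$ and $V_1^2>0$, every orbit is forced into $N_-\Sigma_\times\to 0$, i.e.\ into the invariant subsets $\{N_-=0\}$ (which \emph{is} $\mathcal{B}$(I)) or $\{\Sigma_\times=0\}$. A further monotonicity argument (the $\Sigma_-'$ equation with $\Sigma_\times=0$) and a flow analysis in the introduced $(\eta,\zeta,\theta)$ parametrization then show that the $\{\Sigma_\times=0\}$ branch also funnels into $\mathcal{B}$(I). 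Since W, R and E all lie in $\mathcal{D}^+$(I)$\subset\mathcal{B}$(I), the global attractor statement follows directly from the $\mathcal{B}$(I) analysis already done in \cite{normann18}. Your remark that R and E ``arise because the extra rotational freedom of $\mathcal{B}$(II) permits a second shear mode to balance the $\mathbf{N}_\Delta$ term'' is therefore misleading: R and E sit on the $\mathcal{B}$(I) boundary, not in the interior of $\mathcal{B}$(II), and the reason they appear here but not in $\overline{\mathcal{B}(\textrm{IV})}$ or $\overline{\mathcal{B}(\textrm{VII}_h)}$ is that the closure of $\mathcal{B}$(II) contains all of $\mathcal{B}$(I), whereas those other closures contain only $\mathcal{C}$(I).

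Your linearization plan for W, R, E is carried out in the paper (Table~\ref{tab:StabISS1}), including centre-manifold analyses at W and E, and does give the bifurcation values $6/5$ and $4/3$ you anticipate; but it serves only to identify which $\mathcal{B}$(I) attractor is selected, not as the backbone of the argument.
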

\begin{proof}
Refer to the discussion in Section \ref{Sec:typeII} for details.
\end{proof}

\begin{figure}[h]
	\centering	
	\begin{overpic}[width=0.9\textwidth,tics=10]{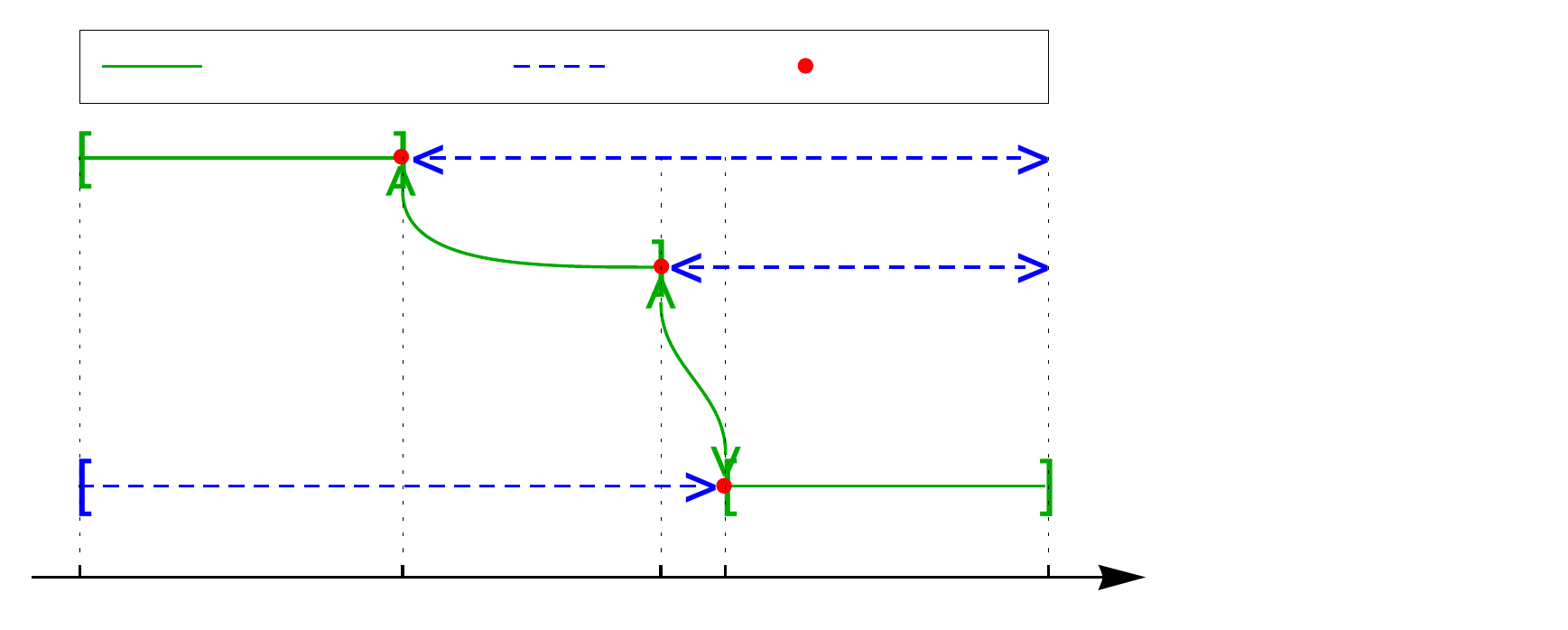}		
		\put (14,35) {\scriptsize attractor}
		\put (40,35) {\scriptsize saddle}
		\put (53,35) {\scriptsize bifurcation}		
		\put (70,29) {\footnotesize flat FLRW, $\displaystyle \gamma\in[0,2)$}
		\put (70,22) {\footnotesize Wonderland (W.), $\displaystyle \gamma\in\left(2/3,2\right)$}
		\put (70,16) {\footnotesize the Rope (R.), $\displaystyle \gamma\in\left(6/5,4/3\right)$}		
		\put (70,9) {\footnotesize the Edge (E.), $\displaystyle \gamma\in\left[0, 2\right]$}	
		\put (74,3) {\large $\displaystyle \gamma$ }	
		\put (3,0) { \scriptsize $\displaystyle 0$ }
		\put (25,-1) {\scriptsize $\displaystyle \frac{2}{3}$ }	
        \put (41,-1) {\scriptsize $\displaystyle \frac{6}{5}$ }	
		\put (45,-1) {\scriptsize $\displaystyle \frac{4}{3}$ }
		\put (65,0) { \scriptsize $\displaystyle 2$ }		
	\end{overpic}	
	\caption{A schematic stability diagram for $\mathcal{B}$(II). Closed and open intervals are indicated using $[$ , $]$ and $($ , $)$, respectively.} 
	\label{Fig:stability_BII}
\end{figure}
\section{A general discussion of constraints}
\label{Sec:constraints}
Throughout the analysis, we will use the Hamiltonian constraint  \eref{Constr1} to remove one variable from the dynamical system\setcounter{footnote}{0}\footnote{In our case, we consequently remove $\Omega_{\rm pf}$}. The constraint \eref{Constr2} will be treated on an individual basis for each equilibrium set. The remaining two constraints, however, equations \eref{Constr3} and \eref{Constr4}, allow for a collective analysis and implementation, as discussed below.
\label{Sec:constraints}
\subsection{Constraint $C_3$}
The constraint \eref{Constr3} may be rewritten to 
\begin{equation}
\label{Constr1R}
\left[ {\begin{array}{cc}
\sqrt{3}\mathbf{N}_\Delta^*     &N_+-iA\\
N_++iA& \sqrt{3}\mathbf{N}_\Delta   \\
\end{array} } \right]
\left[ {\begin{array}{cc}
\mathbf{V}_c\\
\mathbf{V}_c^*\\
\end{array} } \right]
=
\left[ {\begin{array}{cc}
0\\
0\\
\end{array} } \right],
\end{equation}
where both the constraint and its complex conjugate are written down. The system is on the form $Mx=0$ where $M\,\in\,\mathcal{M}_{2\times2}$. There are two possibilties, either $\det(M)\neq 0~ \Rightarrow ~\mathbf{V}_c=0$, or  $\det(M)=0$. The latter permits a non-zero ${\bf V}_c$ as long as ${\bf V}_c$ is a zero-eigenvector.  Explicitly, we have
\begin{equation}
\label{CharEqConstr1}
\det(M)= 3\abs{\mathbf{N}_\Delta}^2-N_+^2-A^2.
\end{equation}
 Hence, the two possibilities are:
\begin{eqnarray*}
&\fl\textbf{Option 1: }\quad\mathbf{V}_c=0\quad\quad\quad&\textrm{(allowed in all sets)}\\
&\fl\textbf{Option 2: }\quad \mathbf{V}_c\neq 0, ~~3\abs{\mathbf{N}_\Delta}^2-N_+^2-A^2=0\quad\quad\quad&\textrm{(allowed in $\mathcal{B}$(I), $\mathcal{B}$(II) and $\mathcal{B}$(III).} 
\end{eqnarray*}
Note that the lone non-zero eigenvalue in $\mathcal{B}$(II) and $\mathcal{B}$(III) is
\begin{equation}
\lambda_{\rm II/III}=2\sqrt{3}N_-,
\end{equation} 
which vanishes whenever $N_-=0$. We shall not need any further discussion of this constraint, as it turns out that also in $\mathcal{B}$(II) (which is the only one studied in this paper among the option 2 types) will we be able to set $\mathbf{V}_c=0$ using a rotation of frame. It is instructive at this point to go to the real plane and look at the eigenvectors of the system with only one non-zero eigenvalue. Substituting one of the equations into the other one ends up with 
\begin{eqnarray}
&\left(3(N_-^2+N_\times^2)-N_+^2-A^2\right)V_3=0\\
&(\sqrt{3}N_\times+A)V_2=(\sqrt{3}N_--N_+)V_3.
\end{eqnarray}
From the above we can clearly see (first equation) that in $\mathcal{B}$(I), $\mathcal{B}$(II) or $\mathcal{B}$(III), $V_3$ is a free parameter (bound only by the Hamiltonian constraint). And so (by the second equation) must $V_2$ be also.

\subsection{Constraint $C_4$}
Note the simplification of \eref{Constr4} that follows directly from $\mathbf{V}_c=0$:
\begin{equation}
\left[ {\begin{array}{cc}
\sqrt{3}\mathbf{N}_\Delta^*&N_+-3iA\\
N_++3iA& \sqrt{3}\mathbf{N}_\Delta\\
\end{array} } \right]
\left[ {\begin{array}{cc}
\mathbf{\Sigma}_1\\
\mathbf{\Sigma}_1^*\\
\end{array} } \right]
=
\left[ {\begin{array}{cc}
0\\
0\\
\end{array} } \right]
\end{equation}
Again the system is on the form $Nx=0$ where $N\,\in\,\mathcal{M}_{2\times2}$. Again the solutions are either $\det(N)\neq 0~ \Rightarrow ~\mathbf{\Sigma}_1=0$, or $\det(N)=0$. The latter permits for a non-zero $\mathbf{\Sigma}_1$. Explicitly, we have
\begin{equation}
\label{CharEqConstr2}
\det(N)= 3\abs{\mathbf{N}_\Delta}^2-N_+^2-9A^2.
\end{equation}

The \textbf{Option 1} from above can thus be further divided into  two subclasses. We write (including the allowed sets)
\begin{eqnarray*}
&\fl\textbf{Option 1a: }\quad\mathbf{V}_c=0\quad\textrm{and}\quad\mathbf{\Sigma}_1=0\quad\phantom{..}\quad\quad\quad&\textrm{(allowed in all sets)}.\\
&\fl\textbf{Option 1b: }\quad\mathbf{V}_c=0\quad\textrm{and}\quad\mathbf{\Sigma}_1\neq 0\quad\quad\quad\quad&\textrm{(allowed in $\mathcal{B}$(I), $\mathcal{B}$(II) and $\mathcal{B}$(VI$_{-1/9}$).} 
\end{eqnarray*}

\subsection{The sets $\mathcal{B}$(I), $\mathcal{B}$(II), $\mathcal{B}$(III) and $\mathcal{B}$(VI$_{-1/9}$)}
\paragraph{}From the Options 1b and 2 above we find that the sets $\mathcal{B}$(I), $\mathcal{B}$(II), $\mathcal{B}$(III) and $\mathcal{B}$(VI$_{-1/9}$) require special and separate treatments. Also invoking the constraint $C_2$ we reach the following conclusions.
\begin{itemize}
\item$\mathcal{B}$(I): $\mathbf{V}_c=V_1=0\quad\textrm{or}\quad\Theta=0$.
\item$\mathcal{B}$(II): Here, $A=0$ so we still have some gauge freedom left in rotating the spatial frame. The non-zero eigenvector of the matrix $M$ (refer to discussion above) defines a spatial direction, and the $(V_1,{\bf V}_c)$ needs to be orthogonal to this. By using a spatial rotation orthogonal to this eigendirection of $M$ we can align $(V_1,\mathbf{V}_c)$, for instance, along  $\mathbf{e}_1$. Hence, in this gauge we have $\mathbf{V}_c=0$. 
\item$\mathcal{B}$(III): This special case ($h=-1$) of $\mathcal{B}$(VI$_h$) needs a different treatment because it allows for an extra ${\bf V}_c$ degree of freedom. A general study of shear-free solutions with $p$-form gauge fields has revealed that this is the only type in which a shear-free solution with a lower-bounded Hamiltonian exists \cite{thorsrud18}. Note also the recent work~\cite{thorsrud20}.
\item $\mathcal{B}$(VI$_{-1/9}$): The particular value $h=-1/9$ allows for an extra shear degree of freedom, and again particular treatment is needed. It is called the exceptional case, denoted $\mathcal{B}$(VI$^*_{-1/9}$) whenever this extra degree of freedom is included. For the perfect fluid case, consult Chapter 8 in \cite{dynSys} for the non-tilted case and \cite{hervik07} for the tilted case.
\end{itemize}

\section{Dynamical system for $\mathcal{B}$(IV), $\mathcal{B}$(VI$_0$), $\mathcal{B}$(VI$_h$), $\mathcal{B}$(VII$_0$) and $\mathcal{B}$(VII$_h$)}
\label{Sec:SysEqIVEtc}
Based on the discussion of constraints in the previous section, we may reduce the dynamical system introduced in Section \ref{Intro} to a general set of equations that cover the Bianchi type A invariant sets $\mathcal{B}$(VI$_0$) and $\mathcal{B}$(VII$_0$) alongside the Bianchi type B invariant sets $\mathcal{B}$(IV), $\mathcal{B}$(VI$_h$) and $\mathcal{B}$(VII$_h$). These are the sets for which the curvature can be used to define the spatial frame unambiguously (once the gauge is chosen). This is done in the following. Note that the relations between the complex variables used in previous sections and the real variables used here (and in the rest of the paper) is given in \ref{App:CompVar}.

\subsection{System before gauge choice}
\label{Sec:BefGChice}
The five above mentioned sets are Option 1a types (refer to the discussion of constraints in the previous section) and hence we must have $\mathbf{V}_c=\mathbf{\Sigma}_1=0$ to fulfill the constraints \eref{Constr3} and \eref{Constr4}. The general system of equations before gauge choice therefore reads as follows.
\begin{eqnarray}
&\fl\textit{j}\textrm{-form eq.s }\quad\quad\cases{\label{FluidEqsOpt1a}
V_1'=\left(q+2\Sigma_+\right)V_1,\\
\Theta'=(q-2)\Theta-2AV_1\,,}&\\
&\fl\textrm{Einst. eq.s }\quad\phantom{00}\cases{\label{EinstEqOpt1a}
\Sigma_-'=(q-2)\Sigma_-+2R_1\Sigma_\times+2 (A N_\times -N_- N_+)\\
\Sigma_\times'=(q-2)\Sigma_\times-2R_1\Sigma_--2 (A N_- +N_\times N_+)\\
\Sigma_+'=\left(q-2\right)\Sigma_+-2\left(N_-^2+N_\times^2+V_1^2\right)\,,}&\\
&\fl\textrm{En. cons. }\quad\quad\phantom{0}
\cases{\label{EnConsOpt1a}
\Omega_{\rm pf}'=2\left(q+1-\frac{3}{2}\gamma \right)\Omega_{\rm pf}\,,}&\\
&\fl\textrm{Jacobi id. }\quad\quad\phantom{.}
\cases{\label{JacId2Opt1a}
N_-'=\left(q+2\Sigma_+\right)N_-+2(R_1 N_\times+\Sigma_- N_+)\,,\\
N_\times'=\left(q+2\Sigma_+\right)N_\times-2(R_1 N_--\Sigma_\times N_+)\,,\\
N_+'=\left(q+2\Sigma_+\right)N_++6\left(\Sigma_-N_-+\Sigma_\times N_\times\right)\,,\\
A'=\left(q+2\Sigma_+\right)A.}&
\end{eqnarray}
The remaining constraints \eref{Constr1} and \eref{Constr2} now read, respectively,
\begin{eqnarray}
&\fl\label{Constr1Option1a} C_1=1-\Omega_{\rm pf}-\Sigma_{+}^2-\Sigma_-^2-\Sigma_\times^2-\Theta^2-V_1^2-A^2-N_-^2-N_\times^2=0\,,\\
&\fl\label{Constr2Option1a} C_2=\Theta V_1-A\Sigma_+-N_\times\Sigma_-+N_-\Sigma_\times=0\,.
\end{eqnarray}
\paragraph{Useful observation:} Also note that following directly from the system of equations above is the result that $V_1/A$ is a constant of motion;
\begin{equation}
\label{useful}
\left(\frac{V_1}{A}\right)'=0,
\end{equation}

\subsection{Choosing gauge}
\label{Subsec:Gauge}
The question now becomes that of choosing gauge \cite{coley05}. We will make use of two different choices in our analysis.
\begin{itemize}
\item Use the gauge freedom to diagonalize  $N_{ab}$. This means we let $N_+=\sqrt{3}\alpha\Re\{\mathbf{N}_\Delta\}$, by appropriately choosing $R_1$. We find 
\begin{equation*}
R_1=\sqrt{3}\alpha \Sigma_\times\phantom{00000}\textrm{and}\phantom{00000}N_+=\sqrt{3}\alpha N_-\quad\quad\quad\quad\textbf{($N_-$\,-\,gauge).}
\end{equation*}
for some function $\alpha(\tau)$. If we use our remaining freedom (choosing $\phi_1(\tau=0)$) to say that $N_\times(\tau=0)=0$, then $N_\times$ will remain zero. Such a choice is possible\setcounter{footnote}{0}\footnote{Specifically it may be shown that the choice will obey
\begin{equation*}
\phi_1(\tau=0)=-\frac{1}{2}\tan\left(\frac{\tilde{\Sigma}_\times}{\tilde{\Sigma}_-}\right)
\end{equation*}
where $\tilde{\Sigma}_\times,\,\tilde{\Sigma}_-$ are variables referring to the frame following gyroscopes.} 
\item A second choice that proves useful whenever $N_\times=N_+=0$ is
\begin{equation*}
R_1=0.\quad\quad\quad\quad\quad\quad\quad\quad\quad\quad\quad\quad\quad\quad\quad\quad\quad\quad\quad\quad\textbf{($F$\,-\,gauge).}
\end{equation*}
In this case we should keep in mind that we have a constant gauge freedom left (namely $\phi_1(\tau=0)$).
\end{itemize}  
\subsection{Equilibrium sets and scalars}
The definition of an equilibrium point is
\begin{dfn}[Equilibrium point] 
An equilibrium point P is a set on which all scalars are constants on $P$ as functions of $\tau$. 
\end{dfn}
It is in place, therefore, at this point with a reminder of what the scalars of the dynamical system are. In particular, for the $\mathcal{B}$(VI$_h$) and $\mathcal{B}$(VII$_h$) systems we shall note that $N_-$ and $N_\times$, $\Sigma_-$ and $\Sigma_\times$ are not scalars \cite{normann18}. Rather, from the complex entities $\bm{\Sigma}_\Delta$ and $\bm{N}_\Delta$ left in the dynamical system, we may construct three independent scalars as follows.
\begin{eqnarray}
&\label{sig}\sigma^2\,\equiv\,\bm{\Sigma}_\Delta\bm{\Sigma}_\Delta^*=\Sigma_-^2+\Sigma_\times^2,\\
&\label{nu}\nu^2\,\equiv\,\bm{N}_\Delta\bm{N}_\Delta^*=N_-^2+N_\times^2,\\
&\label{delta}\delta^2\,\equiv\,\bm{N}_\Delta\bm{\Sigma}_\Delta^*=N_-\Sigma_-+N_\times\Sigma_\times+i(N_\times\Sigma_--N_-\Sigma_\times).
\end{eqnarray}
Equilibrium sets in these spaces may therefore have evolving $\Sigma_-,\Sigma_\times,N_-, N_\times$, as long as $\sigma,\nu,\delta$ are constants on the motion.

\paragraph{}In the dynamical systems analysis, we will use both the $N_-$\,-\,gauge and the $F$\,-\,gauge, so in the following subsections we will spell out the dynamical systems for these two choices.
\subsection{Dynamical system in $N_-$\,-\,gauge}
\label{Sec:DynSysN}
Doing the math, one finds that the group parameter $h$ \eref{h} in $\mathcal{B}$(VI$_h$) and $\mathcal{B}$(VII$_h$) is now given by 
\begin{equation}
A^2=3h\left(\alpha^2-1\right)N_-^2\quad\rightarrow\quad h=\frac{1}{\left(\alpha^2-1\right)}\left(\frac{A}{\sqrt{3}N_-}\right)^2
\end{equation}
From the type specifications (of the invariant Bianchi sets) detailed out in a previous section we hence find the following.

\begin{eqnarray}
&\fl\mathcal{B}(\textrm{VII}_h)\,:\quad\quad &\alpha^2>1\,\quad\rightarrow\quad h>0,\\
&\fl\mathcal{B}(\textrm{VI}_h)\phantom{0}:\quad\quad &\alpha^2<1\,\quad\rightarrow\quad h<0,\\
&\fl\mathcal{B}(\textrm{IV})\phantom{00}:\quad\quad &\alpha^2=\,1\quad\rightarrow\quad h =\infty.
\end{eqnarray}
The dynamical system described in Section \ref{Intro} is now 7 -dimensional and takes the following form.
\begin{eqnarray}
&\fl\textit{j}\textrm{-form eq.s }\quad\quad\cases{\label{FluidEqsOpt1a2}
V_1'=\left(q+2\Sigma_+\right)V_1,\\
\Theta'=(q-2)\Theta-2A V_1\,,
}&\\
&\fl\textrm{Einst. eq.s }\quad\phantom{00}\cases{\label{EinstEqOpt1a2}
\Sigma_-'=(q-2)\Sigma_-+2\sqrt{3}\alpha(\Sigma_\times^2-N_-^2),\\
\Sigma_\times'=(q-2-2\sqrt{3}\alpha\Sigma_-)\Sigma_\times-2 A N_-,\\
\Sigma_+'=\left(q-2\right)\Sigma_+-2\left(N_-^2+V_1^2\right)\,,
}&\\
&\fl\textrm{En. cons. }\quad\quad\phantom{0}
\cases{\label{EnConsOpt1a2}
\Omega_{\rm pf}'=2\left(q+1-\frac{3}{2}\gamma \right)\Omega_{\rm pf}\,,}&\\
&\fl\textrm{Jacobi id. }\quad\quad\phantom{.}
\cases{\label{JacId2Opt1a2}
N_-'=\left(q+2\Sigma_++2\sqrt{3}\alpha\Sigma_-\right)N_-\,,\\
\alpha'=2\sqrt{3}\Sigma_-(1-\alpha^2)\,,\\
A'=(q+2\Sigma_+)A.}&
\end{eqnarray}
Note that the equation for $\alpha^\prime$ is found by employing the defining condition $N_+=\sqrt{3}\alpha N_-$ for the $N_-$\,-\,gauge. Additionally, the two remaining constraints are
\begin{eqnarray}
&\fl\label{Constr1Option1aN} C_1=1-\Omega_{\rm pf}-\Sigma_{+}^2-\Sigma_-^2-\Sigma_\times^2-\Theta^2-V_1^2-A^2-N_-^2-N_\times^2=0\,,\\
&\fl\label{Constr2Option1aN} C_2=\Theta V_1-A\Sigma_++N_-\Sigma_\times=0\,.
\end{eqnarray}
\paragraph{Symmetries.} The dynamical system described in this section has the following symmetries:
\begin{eqnarray}
&\label{NSym1}\fl(V_1',V_1,\Theta',\Theta)\,\rightarrow\,(-V_1',-V_1,-\Theta',-\Theta),\\
&\label{NSym2}\fl (\Sigma_-',\alpha',\Sigma_-,\alpha)\,\rightarrow\,(-\Sigma_-',-\alpha',-\Sigma_-,-\alpha),\\
&\label{NSym3}\fl(N_-',N_-,\Sigma_\times',\Sigma_\times)\rightarrow(-N_-',-N_-,-\Sigma_\times',-\Sigma_\times).
\end{eqnarray}
\paragraph{Remark.} For any equilibrium point in $\mathcal{B}$(VI$_h$) and $\mathcal{B}$(VII$_h$) , it is clear from the $\alpha'$-equation that we must have $\Sigma_-=0$. For $\mathcal{B}$(IV), however, this is not necessarily true, since $\alpha^2=1$.

\subsection{Fermi\,-\,gauge ($F$\,-\,gauge)}
\label{Sec:DynSysF}
Another choice is the so-called $F$\,-\,gauge; the frame where $R_1=0$ and hence one (or more) of the axis in the tetrad follows the gyroscopes. The dynamical system of Section \ref{Intro} is now 8-dimensional, and takes the following form.
\begin{eqnarray}
&\fl\textit{j}\textrm{-form eq.s }\quad\quad\cases{\label{FluidEqsOpt1aR}
V_1'=\left(q+2\Sigma_+\right)V_1,\\
\Theta'=(q-2)\Theta-2AV_1\,,
}&\\
&\fl\textrm{Einst. eq.s }\quad\phantom{00}\cases{\label{EinstEqOpt1aR}
\Sigma_-'=(q-2)\Sigma_-+2 (A N_\times -N_- N_+)\\
\Sigma_\times'=(q-2)\Sigma_\times-2 (A N_- +N_\times N_+)\\
\Sigma_+'=\left(q-2\right)\Sigma_+-2\left(N_-^2+N_\times^2+V_1^2\right)\,,
}&\\
&\fl\textrm{En. cons. }\quad\quad\phantom{0}
\cases{\label{EnConsOpt1aR}
\Omega_{\rm pf}'=2\left(q+1-\frac{3}{2}\gamma \right)\Omega_{\rm pf}\,,
}&\\
&\fl\textrm{Jacobi id. }\quad\quad\phantom{.}
\cases{\label{JacId2Opt1aR}
N_-'=\left(q+2\Sigma_+\right)N_-+2\Sigma_- N_+\,,\\
N_\times'=\left(q+2\Sigma_+\right)N_\times+2\Sigma_\times N_+\,,\\
N_+'=\left(q+2\Sigma_+\right)N_++6\left(\Sigma_-N_-+\Sigma_\times N_\times\right)\,,\\
A'=\left(q+2\Sigma_+\right)A.}&
\end{eqnarray}
The two remaining constraints are 
\begin{eqnarray}
&\fl\label{Constr1Option1aF} C_1=1-\Omega_{\rm pf}-\Sigma_{+}^2-\Sigma_-^2-\Sigma_\times^2-\Theta^2-V_1^2-A^2-N_-^2-N_\times^2=0\,,\\
&\fl\label{Constr2Option1aF} C_2=\Theta V_1-A\Sigma_+-N_\times\Sigma_-+N_-\Sigma_\times=0\,.
\end{eqnarray}
As already mentioned there is a constant gauge freedom left (choosing of initial angle $\phi_1(\tau=0)$) that we need to be aware of.

\paragraph{Symmetries} The dynamical system described in this section has the following symmetry:
\begin{eqnarray}
\label{FSym1}
(V_1',V_1,\Theta',\Theta)\,\rightarrow\,(-V_1',-V_1,-\Theta',-\Theta)
\end{eqnarray}

\subsection{Dynamical systems analysis}
Based on the first order stability analysis we adopt the ordinary procedure of dividing hyperbolic equilibrium points (sets) into three classes. We use the following language.
\begin{itemize}
\item Attractor: All eigenvalues have negative real parts.
\item Saddle: A mixture of eigenvalues with positive and negative real parts.
\item Repeller: All eigenvalues have positive real parts.
\end{itemize} 
In the case of sets, one must bear in mind that the interesting perturbations are those that are orthogonal to the set. 

For the non-hyperbolic equilibrium points (sets), we sometimes adopt \textit{center-manifold analysis}. The reader is referred to the classic text by Perko \cite{perko} for a thorough introduction to dynamical systems in general, and to chapter 4 of \cite{dynSys} for a brief introduction to dynamical systems applied to cosmology.

\section{Anisotropic hairs}
\label{Sec:Anis}
In order to have expanding, self-similar space-times with pertaining anisotropies, one must have a source. In the general dynamical system of Section \ref{Sec:BefGChice}, where $\bm{\Sigma}_1=\bm{V}_{\rm c}=0$, two vectors are available: (i) The geometric option is $A$, the expansion-normalized trace of the structure coefficients of the Lie algebra, and (ii) the remaining vector part of the matter sector: $V_1$. In both cases, we must have 
\begin{equation}
q=-2\Sigma_+\,,
\end{equation}
in order for the derivatives of either source to vanish non-trivially. In a self-similar space-time, all scalars will be constant. Henceforth, starting from the equations of the dynamical system, we can make further restrictions. First, $N_+'=0$ implies that $\Re\{\delta^2\}$, as defined in \eref{delta}, must vanish. A parametric choice that enforces $\Re\{\delta^2\}=0$ is 
\begin{eqnarray}
\label{nn}
 N_-=\nu_2\quad,\quad N_\times=\nu_3\\
\label{ss}\Sigma_-=-\kappa\nu_3\quad,\quad\Sigma_\times=\kappa\nu_2.
\end{eqnarray}
From the definitions \eref{sig}-\eref{delta} we now find 
\begin{eqnarray}
\fl\nu^2=\nu_2^2+\nu_3^2\quad\quad,\quad\quad\sigma^2\,=\kappa^2\nu^2\quad\quad,\quad\quad &\delta^2\,=-i\kappa^2\nu^2.
\end{eqnarray}
Hence, to be at an equilibrium point, we must require $\nu'=\kappa'=0$. Inserting the parameterization \eref{nn}-\eref{ss} into the dynamical system of Section \ref{Sec:BefGChice} (also changing names such that $N_+=\nu_1$ and $\Sigma_+=\beta_1$) the parameter derivatives are found to fulfill the equations 
\begin{eqnarray}
\nu_2^\prime=-2\kappa\nu_1\nu_3,\\
\nu_3^\prime=2\kappa\nu_1\nu_2,\\
\label{nnl1}\nu_2\nu_3\cdot\kappa'=2 \nu _1 \nu _2^2\left(1-\kappa^2\right)-2\left(A+\kappa(\beta _1+1) \right)\nu_2\nu_3,\\
\label{nnl2}\nu_2\nu_3\cdot\kappa'=-2 \nu _1 \nu _3^2\left(1-\kappa^2\right)-2\left(A+\kappa(\beta _1+1) \right)\nu_2\nu_3.
\end{eqnarray}
Equating the two latter equations above (following from the $\Sigma_-^\prime$- and $\Sigma_\times$-equation, respectively) we find the algebraic constraint
\begin{equation}
\label{cond}
\nu_1\left(1-\kappa^2\right)\nu^2=0.
\end{equation}
Consequently we are left with the following three options.
\begin{enumerate}
\item $\kappa^2=1$. Going back to equations \eref{nnl1}, and requiring a positive $A$, we must in this case have $\kappa=-1$ and hence $A=\beta_1+1$.
\item $\nu=0$. In this case $\kappa^\prime$ remains unspecified from eqs. \eref{nnl1}-\eref{nnl2} above. This, however, must be seen as an artefact of the parameterization, and represents no real physical degree of freedom. 
\item $\nu_1=0.$ In this case we find $\kappa^\prime=-2\left(A+\kappa(\beta _1+1) \right)$. Since $\kappa^\prime=0$ is required for $(\sigma^{2})^\prime=0$, we must have
\begin{equation}
A=-\kappa(\beta _1+1).
\end{equation}
\end{enumerate}
Furthermore, the constraint $C_2$, the equations $\Omega_{\rm pf}^\prime=0$ and $\beta_1^\prime=0$ and $\Theta^\prime=0$ and the definition of $q$ (eq. \eref{q}), give the following restrictions, respectively.
\begin{eqnarray}
\label{cond2}A \beta _1-\Theta  V_1=\kappa\nu^2,\\
\label{cond1}0=2\left(q+1-\frac{3}{2}\gamma \right)\Omega_{\rm pf}\\
\label{cond3}0=-2 \left(\beta _1^2+\beta _1+\nu^2+V_1^2\right),\\
\label{cond4}0=-2 \left(A V_1+\left(\beta _1+1\right)\Theta\right),\\
\label{cond5}-2\Sigma_+(1+\Sigma_{+})=2\left(\Theta^2+\kappa^2\nu^2\right)+\frac{1}{2}(3\gamma -2)\Omega_{\rm pf}.
\end{eqnarray}
From these one may derive two general sets of equilibrium points. The so-called Plane Waves and Wonderland. In the following two subsections we describe these sets more carefully. Throughout we use the same parameterization as in this section, unless otherwise is explicitly specified. In particular,
\begin{equation}
\label{param}
(N_+,N_-,N_\times,\Sigma_+,\Sigma_-,\Sigma_\times)\rightarrow(\nu_1,\nu_2,\nu_3,\beta_1,-\kappa\nu_3,\kappa\nu_2)
\end{equation}

\subsection*{Plane Waves, PW$(\beta_1,\nu_1,\nu^2)$}
With option (i) above, $\kappa$ is fixed. Specifically, $\kappa=-1$ \setcounter{footnote}{0}\footnote{One could have $\kappa=1$ as well, but we have (without loss of generality) chosen to align our frame along $A$, so $\kappa$ must be negative.}. The physical freedom is now in the tuple $\nu_1,\nu$. It turns out that the remaining equations \eref{cond1}-\eref{cond5} solve to give the so-called Plane Waves equilibrium set. With the parameterization as before, the further specifications of PWs is as follows.
\begin{eqnarray}
\label{PW}\fl
\kappa=-1\quad,\quad A=1+\beta_1\quad,\quad V_1^2=-\beta_1(1+\beta_1)-\nu^2\quad,\quad\Theta=-V_1.
\end{eqnarray}
It is straight forward to verify that for $\nu_1\,\neq\,0$, the decoupled $\nu_2^\prime$ and $\nu_3^\prime$ equations are solved by
\begin{equation}
\nu_2=\nu\sin(2\nu_1\tau)\quad\textrm{and}\quad\nu_3=\nu\cos(2\nu_1\tau)
\end{equation}
where $\nu$ is the constant of motion defined in eq. \eref{nu}. The family $PW(\beta_1,\nu_1,\nu^2)$ stretches over several Bianchi invariant sets, and may be divided into different invariant subsets using the Bianchi classification. We shall only name the following two subsets.
\begin{itemize}
\item $\mathcal{S}^+$(VII$_h$)$\,\supset\,\mathcal{P}_{PW(\beta_1,\nu_1)}\equiv \lim_{\nu\rightarrow 0}PW(\beta_1,\nu_1,\nu^2)$.
\item $\mathcal{S}^+$(V)$\,\supset\,\mathcal{P}_{M}\equiv \lim_{\beta_1,\nu_1,\nu\rightarrow 0}PW(\beta_1,\nu_1,\nu^2)$.
\end{itemize}
$M$ is here the Milne exact vacuum solution, and we have $M\,\subset\,PW(\beta_1,\nu_1)\,\subset\,PW(\beta_1,\nu_1,\nu^2)$.
\paragraph{}As a final remark; the options (ii) and (iii) will only produce parts of these invariant subspaces of PW.

\subsection*{The Wonderland fabric, $W(\kappa,\nu_1,\nu^2)$}
Starting from Options (ii) $\nu_1=0$ and (iii) $\nu=0$ gives another family of equilibrium sets: Wonderland, denoted $W(\lambda,\nu_1,\nu^2)$.\setcounter{footnote}{0}\footnote{This new family of equilibrium points is an extension to the previously found $\mathcal{B}$(V)-equilibrium set with the same name \cite{normann18}}. With the parameterization as before, it has the following specifications:
\begin{eqnarray}\fl
\beta_1=\frac{1}{4}(2-3\gamma)\quad,\quad \nu_1\nu^2=0\\
\fl A=-\kappa(1+\beta_1)\quad,\quad V_1^2=-\beta_1(1+\beta_1)-\nu^2\quad,\quad\Theta=\kappa V_1.
\end{eqnarray}
Note that instead of having $\kappa$ fixed, as with the Plane waves, we now have a one-to-one relation between the shear $\beta_1$ and $\gamma$. The family $W(\kappa,\nu_1,\nu^2)$ may be divided into several subsets that belong to different invariant sets. They are as follows.
\begin{itemize}
\item $\mathcal{S}^+(I)\,\supset\,\mathcal{P}_{W}\equiv \lim_{\kappa,\nu_1,\nu\rightarrow 0}W(\kappa,\nu_1,\nu^2)$.
\item $\mathcal{S}^+(V)\,\supset\,\mathcal{P}_{W(\kappa)}\equiv \lim_{\nu_1,\nu\rightarrow 0}W(\kappa,\nu_1,\nu^2)$.
\item $\mathcal{S}^+(\textrm{VII}_h)\,\supset\,\mathcal{P}_{W(\kappa,\nu_1)}\equiv \lim_{\nu\rightarrow 0}W(\kappa,\nu_1,\nu^2)$.
\item  $\mathcal{S}^+(\textrm{VII}_0)\,\supset\,\mathcal{P}_{W(\nu_1)}\equiv \lim_{\kappa,\nu\rightarrow 0}W(\kappa,\nu_1,\nu^2)$.
\item $\mathcal{C}^+(\textrm{VI}_h)\,\supset\,\mathcal{P}_{W(\kappa,\nu^2)}\equiv \lim_{\nu_1\rightarrow 0}W(\kappa,\nu_1,\nu^2)$.
\item $\mathcal{S}^+(\textrm{VI}_0)\,\supset\,\mathcal{P}_{W(\nu^2)}\equiv \lim_{\kappa,\nu_1\rightarrow 0}W(\kappa,\nu_1,\nu^2)$.
\end{itemize}

\paragraph{In following sections} we will treat the Bianchi sets separately, and the subsets of the Plane Waves and Wonderland belonging therein. As we shall see, there is an anisotropic attractor in all the invariant Bianchi sets we consider.

\newpage
\section*{Dynamical systems analysis}
In the rest of the paper, we perform a dynamical systems analysis of each of the Bianchi sets $\mathcal{B}$(VII$_0$), $\mathcal{B}$(VII$_h$), $\mathcal{B}$(IV) and $\mathcal{B}$(II) separately. In doing a dynamical systems analysis of one of the Bianchi sets, we must consider its closure, since a past or future attractor might be on the boundary of the invariant set.  For those equilibrium points where the remaining constraint is singular, we follow \cite{hewitt93}, performing the analysis in the extended state space instead of the physical part.
\section{Analysing the set $\mathcal{B}$(VII$_h$)}
\label{Sec:typeVIIh}
The closure of $\mathcal{B}$(VII$_h$) is
\begin{equation}
\overline{\mathcal{B}(\textrm{VII}_h)}=\mathcal{B}(\textrm{VII}_h)\cup\mathcal{B}(\textrm{VII}_0)\cup\mathcal{B}(\textrm{V})\cup\mathcal{B}(\textrm{IV})\cup\mathcal{C}(\textrm{II})\cup\mathcal{C}(\textrm{I}).
\end{equation}
As we see, equilibrium sets from many other Bianchi invaraint sets are expected. Some equilibrium sets were analysed in $N_-$\,-\,gauge (Sec. \ref{Sec:DynSysN}) and others in $F$\,-\,gauge (Sec. \ref{Sec:DynSysF}), all according to what we found easiest to implement for each particular equilibrium set.The Tables \ref{tab:VIIhNm} and \ref{tab:VIIhGg} provide an overview.
\begin{table}[H]
	\centering
	\resizebox{\textwidth}{!}{\begin{tabular}{llccccccccccc}
			\toprule
			\multicolumn{13}{c}{\textbf{Equilibrium sets in $\overline{\mathcal{B}(\textrm{VII}_h})$ analysed in $N_-$\,-\, gauge.}} \\
			\hline
Set&$\mathcal{P}$ & $q$ & $\gamma $&$\alpha^2$&$A$&$\Omega_{\rm pf}$&$\Sigma_+$&$\Sigma_-$&$\Sigma_\times$&$N_-$&$\Theta$&$V_1$\\
\hline
$\mathcal{C}^0$(II)& CS & $-1+\frac{3}{2}\gamma$& $(\frac{2}{3},2)$&$1$&0&$\frac{3}{16}(6-\gamma)$&$-\frac{3}{16}(\gamma-\frac{2}{3})$&$\pm\sqrt{3}\frac{3}{16}(\gamma-\frac{2}{3})$&0&$\pm\frac{3}{8}\sqrt{(2-\gamma)(\gamma-\frac{2}{3})}$&0&0\\
$\mathcal{C}^+$(VII$_h$)&	PW$(\alpha,\beta_1,\nu^2)$& $-2\beta_1$& $[0,2]$&$>1$&$1+\beta_1$&0&$\beta_1\,\leq\,0$&0&$-\nu$& $\nu$&$-V_1$&$\pm\sqrt{-\beta_1(1+\beta_1)-\nu^2}$ \\
{\color{gray}$\mathcal{S}^0$(V)}&{\color{gray}M}&{\color{gray} 0}& {\color{gray}$[0,2]$}&{\color{gray}free}&{\color{gray}$1$}&{\color{gray}0}&{\color{gray}0}&{\color{gray}0}&{\color{gray}0}&{\color{gray}0}&{\color{gray}0}&{\color{gray}0}\\
\hline

	\end{tabular}}
	\caption{Summary of equilibrium sets $\mathcal{P}$ analyzed in $N_-$\,-\,gauge, where $N_+=\sqrt{3}\alpha N_-$. Here $\beta_1\,>\,-1$. M is per definition part of PW($\alpha, \beta_1,\nu^2$), and is therefore shadow-faced.} 
	\label{tab:VIIhNm}
\end{table}
\begin{table}[H]
	\centering
	\resizebox{\textwidth}{!}{\begin{tabular}{llccccccccccc}
			\toprule
			\multicolumn{13}{c}{\textbf{Equilibrium sets of $\overline{\mathcal{B}(\textrm{VII}_h})$ analysed in $F$\,-\,gauge.}} \\
			\hline
Set&$\mathcal{P}$ & $q$ & $\gamma $&h&$A$&$\Omega_{\rm pf}$&$\Sigma_+$&$\Sigma_-$&$\Sigma_\times$&$N_+$&$\Theta$&$V_1$\\
\hline
$\mathcal{S}^0$(I)& flat FLRW & $-1+\frac{3}{2}\gamma$& $[0,2]$&undef.&$0$&1&0&0&0&0&0&0 \\
$\mathcal{S}^0$(V)& open FLRW & 0& $\frac{2}{3}$&$h\rightarrow\infty$&$A\,\in\,[0,1]$&$1-A^2$&0&0&0&0&0&0\\
$\mathcal{C}^0$(I)&JED$(\beta_1,\beta_2,\beta_3)$&2&$[0,2)$&undef.&0&0&$\beta_1$&$\beta_2$&$\beta_3$&0&$[-\sqrt{1-\beta^2},\sqrt{1-\beta^2}]$&0\\
$\mathcal{C}^0$(I)&K$(\beta_1,\beta_2)$&2&$[0,2)$&undef.&0&0&$\beta_1$&$\beta_2$&$[-\sqrt{1-\beta_1^2-\beta_2^2},\sqrt{1-\beta_1^2-\beta_2^2}]$&0&0&0\\
$\mathcal{C}^0$(I)&JS$(\beta_1,\beta_2,\beta_3,\Theta)$&2&$2$&undef.&0&$\sqrt{1-\beta^2-\Theta^2}$&$\beta_1$&$\beta_2$&$\beta_3$&0&$\Theta$&0\\
$\mathcal{S}^+$(VII$_h$)&	W($\kappa,\nu_1$)& $-1+\frac{3}{2}\gamma$& $(\frac{2}{3},2)$&$h>0$&$-\frac{3}{4}(2-\gamma)\kappa$&$\frac{3}{4}(2-\gamma)(1-\kappa^2)$&$ \frac{1}{2} -\frac{3}{4}\gamma$&0&0&$\nu_1$& $\kappa\,V_1$&$\mp\frac{3}{4} \sqrt{(2-\gamma ) (\gamma -\frac{2}{3})}$ \\
\hline
	\end{tabular}}
	\caption{Summary of equilibrium sets $\mathcal{P}$ analyzed in $F$\,-\,gauge. In all the Equilibrium sets above, $\nu^2=0$. Notation is such that $\beta^2\,\equiv\,\beta_1^2+\beta_2^2+\beta_3^2$. The parameter $\kappa$ is restricted according to $-1<\,\kappa\,\leq 0$. The group parameter $h$ has to be positive in $\mathcal{B}$(VII$_h$).}
\label{tab:VIIhGg}
\end{table}
One may wonder, perhaps, why the two equilibrium sets Edge and Rope do not show up in this analysis. After all, they are situated in $\mathcal{B}$(I). Not, however, in the part of $\mathcal{B}$(I) that is included in the boundary of $\mathcal{B}$(VII$_h$); namely $\mathcal{C}$(I). Recall that the constraint analysis resulted in $\mathbf{V}_{\rm c}=\mathbf{\Sigma}_1=0$ for $\mathcal{B}$(VII$_h$), but not for $\mathcal{B}$(I) in general. The rotating vectors are in $\mathcal{D}^+$(I)$\,\subset\,\mathcal{B}$(I), and not reachable from $\mathcal{B}$(VII$_h$). These matters have been thoroughly investigated~\cite{thorsrud19}.
	
\subsection{Discussion of stability}
The eigenvalues around each equilibrium point reveals the local stability. In \ref{App:Eigenvalues} the two tables \ref{tab:EigVIIhNm}  and \ref{tab:EigVIIhGg} provide the eigenvalues of the equilibrium sets found in $\mathcal{B}$(VII$_h$). In the following we use these tables to determine the local stability of each equilibrium set.

\paragraph{FLRW: }The no-hair theorem \ref{NoHair} states that \textit{flat} FLRW is the \textit{global attractor} for $0\,<\,\gamma\,<\,2/3$. The local stability analysis confirms this: flat FLRW is \textit{stable} in the extended state space for $\gamma<2/3$.
With (five) unexplained zero-eigenvalues, center-manifold analysis is required for the \textit{open} FLRW model. The remaining eigenvalues are all negative, so open FLRW  is either an attractor or a saddle.\\

\paragraph{Collins-Stewart, CS:} The Collins-Stewart equilibrium point has two positive and four negative eigenvalues, and is therefore a saddle. This remains true also if any one of the eigenvalues is removed. Hence CS will be a saddle also in $\mathcal{B}$(IV).  From the symmetries \eref{NSym2}-\eref{NSym3} of the dynamical system (Sec. \ref{Sec:DynSysN}), we may conclude that CS comes in four copies ($\pm\,\abs{(N_-)_{\rm CS}},(\Sigma_-)_{\rm CS}<0,\alpha=1$ and $\pm\,\abs{(N_-)_{\rm CS}},(\Sigma_-)_{\rm CS}>0,\alpha=-1$) with the same stability.

\paragraph{Plane Waves, PW($\alpha,\beta_1,\nu$):} For $\beta_1\,>\,-\frac{3}{4}\left( \gamma-\frac{2}{3}\right)$, one finds that all the eight eigenvectors in the physical state space are negative. Hence, for this parameter range, the equilibrium set is an attractor. The analysis is inconclusive for $\beta_1\,=\,-\frac{3}{4}\left( \gamma-\frac{2}{3}\right)$, where there is one eigenvalue too many. For $\beta_1\,<\,-\frac{3}{4}\left( \gamma-\frac{2}{3}\right)$, it is a saddle, with a one-dimensional unstable manifold. Note that PW($\alpha,\beta_1,\nu$) comes in two copies. The symmetry \eref{FSym1} of the dynamical system (Sec. \ref{Sec:DynSysN}) ensures that the two copies have the same stability. Note that the local stability analysis is insensitive to the value of $\alpha$. This means that the conclusion will be the same also in $\mathcal{B}$(IV), where $\alpha^2=1$, and in $\mathcal{B}$(VI$_h$), where $\alpha^2<1$. The same must be true for the Milne subset below. 

\paragraph{Milne, M:} The Milne equilibrium set is the subset of the PW equilibrium set where $(\beta_1,\nu)\rightarrow (0,0)$. This gives three zero-eigenvalues, which correspond to the three parameters of the PW set. From table \ref{tab:EigVIIhNm} we see that Milne therefore is an attractor for $\gamma>2/3$ and a saddle for $\gamma<2/3$. center-manifold analysis is required for the point $\gamma =2/3$. 

\paragraph{Wonderland, W($\kappa,\nu_1$)}: To analyse the Wonderland fabric, we note that the linearisation matrix around the equilibrium set consists of two block diagonal matrices $M_1$ and $M_2$. As more thoroughly explained in \ref{App:WonderlandStab}, the matrix $M_2$ consists of $N_+\,\times\,$($\mathcal{B}$(V))$\backslash\Sigma_-$. For this matrix we should therefore expect the same results as was obtained in the analysis of the $\mathcal{B}$(V) subsystem in Section 9.4 of \cite{normann18} (see Table 4 therein). The conclusion of the local analysis there was that Wonderland is an attractor for all parameter values. We find the same here, alongside an extra zero-eigenvalue along $N_+$, because of the group parameter $h$. Also, $M_1$ has eigenvalues with negative real part for all parameter values except $\nu_1\,=\,0$, which is a junction point with the Wonderland in $\mathcal{B}$(VI), W($\kappa,\nu^2$). Hence, Wonderland in $\mathcal{B}$(VII$_h$) is an attractor. Refer to the appendix for further details. Also note the invariance  of the dynamical system (Sec. \ref{Sec:DynSysF}) under the transformation \eref{FSym1}. This ensures that different copies of Wonderland have the same stability. Figure \ref{Fig:LRSVIIh} shows phase flow in the $(\Sigma_+,V_1)$\,-\,plane  of the LRS subset $\mathcal{S}$(VII$_h$), where W($\kappa,\nu_1$) lies, for some value of $\nu_1$. The phase flow is the same as that of the set $\mathcal{B}$(V), displayed in Fig. 1 in \cite{normann18}. The semi circle in our figure shows the path along which the equilibrium set $W(\nu_1)$ will move from FLRW ($\gamma=2/3$) towards $K_-$ ($\gamma=2$) as a function of $\gamma.$

\paragraph{Jacobs' Extended Disk, JED ($\beta_1,\beta_2,\beta_3$):} Analysing JED in the physical state space we find that it is a repeller for $\beta_1\,>\,\sqrt{3}\sqrt{\beta_2^2+\beta_3^2}-1$. If $\beta_1\,<\,\sqrt{3}\sqrt{\beta_2^2+\beta_3^2}-1$, then JED is a saddle. For the remaining case where $\beta_1\,=\,\sqrt{3}\sqrt{\beta_2^2+\beta_3^2}-1$ the analysis is inconclusive. Specifying the JED family to the Kasner subset ($\Theta=0\,\rightarrow\,\beta_3^2=1-\beta_1^2-\beta_2^2$) one finds that K is a repeller if $\beta_1>1/2$ and a saddle if $-1<\beta_1<1/2$. Otherwise, if $\beta_1=-1\,\cup\,1/2$, the analysis is inconclusive. By such, the three zero-eigenvalues of Kasner are explained by K being part of a three-parameter equilibrium set. The analysis holds regardless of the value of $h$, and hence it is valid for $\mathcal{B}$(VI$_h$) also.
\paragraph{Jacobs' Sphere, JS($\beta_1,\beta_2,\beta_3,\Theta$): } The eigenvalues are the same as in the JED case, except for $3(2-\gamma)$, which is now $0$, since $\gamma=2$. This corresponds to the extra parameter $\Theta$ compared to JED. The stability categories, however, must be the same as for JED.

\paragraph{}Table \ref{tab:StabVIIhNm} summarizes the overall stability of the equilibrium sets found in $\mathcal{B}$(VII$_h$).
\begin{table}[H]
	\centering
	\resizebox{\textwidth}{!}{\begin{tabular}{lcclclclclc}
			\toprule
			\multicolumn{11}{c}{\textbf{Classification of equilibrium sets in $\overline{\mathcal{B}(\textrm{VII}_h})$}} \\
			\hline
$\mathcal{P}$&& Existence && Attractor && Saddle && Repeller && Inconclusive \\
			\hline
PW($\alpha,\beta,\nu^2$) && $\gamma\in[0,2]$ &&$\beta_1\,>\,-\frac{3}{4}\left( \gamma-\frac{2}{3}\right)$&&$\beta_1\,<\,-\frac{3}{4}\left( \gamma-\frac{2}{3}\right)$&&&&$\beta_1\,=\,-\frac{3}{4}\left( \gamma-\frac{2}{3}\right)$\\
W($\kappa,\nu_1$)&& $\gamma\in(\frac{2}{3},2)$&& $\forall\, \kappa,\gamma$  && &&&& \\
open FLRW  && $\gamma = \frac{2}{3}$ &&&&&&&&$\forall$\\
flat FLRW && $\gamma\in[0,2)$ &&$\gamma\,\in\,[0,\frac{2}{3})$&&$\gamma\,\in\,(\frac{2}{3},2)$&&&&$\gamma=\frac{2}{3}$\\
K$(\beta_1,\beta_2)$&& $\gamma\in[0,2)$ &&&&\textrm{else}&&$\beta_1>\frac{1}{2}$&&$\beta_1=-1\,\cup\,\frac{1}{2}$\\
JED$(\beta_1,\beta_2,\beta_3)$&& $\gamma\in[0,2)$ &&&&\textrm{else}&&$\beta_1>-1+\sqrt{3}\sqrt{\beta_2^2+\beta_3^2}$&&$\beta_1=-1+\sqrt{3}\sqrt{\beta_2^2+\beta_3^2}$\\
JS$(\beta_1,\beta_2,\beta_3,\Theta)$&& $\gamma=2$ &&&&\textrm{else}&&$\beta_1>-1+\sqrt{3}\sqrt{\beta_2^2+\beta_3^2}$&&$\beta_1=-1+\sqrt{3}\sqrt{\beta_2^2+\beta_3^2}$\\
CS&&$(\frac{2}{3},2)$&&&&$\forall\,\gamma$&&&&\\\hline
	\end{tabular}}
	\caption{The domains where the local stability analysis is conclusive are divided into attractor, saddle and repeller subdomains.  The rightmost column shows the domains where the linear stability analysis is inconclusive. Refer to the text for details regarding the classification of PW and W. }
	\label{tab:StabVIIhNm}
\end{table} 

\begin{figure}[t!]
	\centering
	\includegraphics[width=\textwidth]{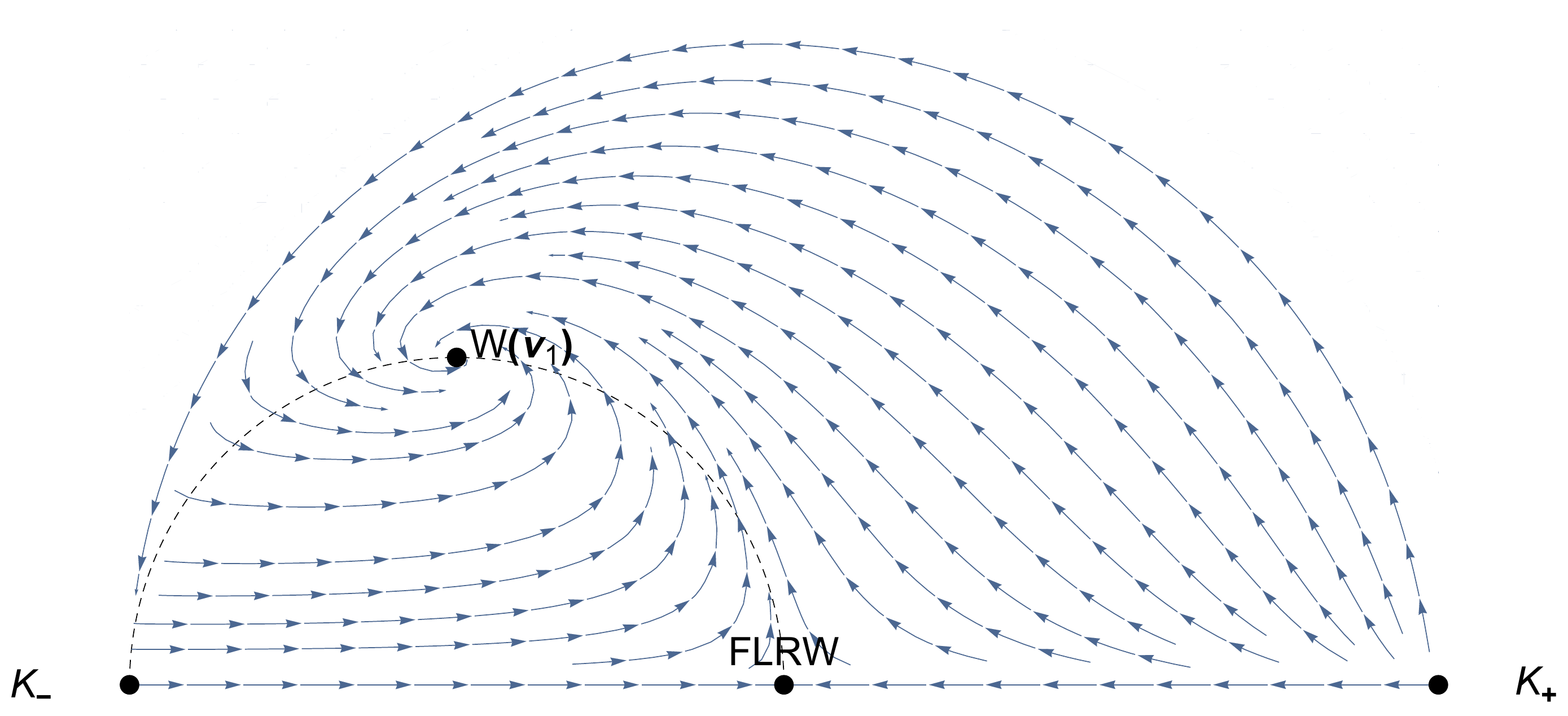}\caption{Phase flow in the LRS subsystem of $\mathcal{B}$(VII$_h$), where $\Sigma_-=\Sigma_\times=\Sigma_3=0$. Here with $\gamma=4/3$. ``W'' denotes Wonderland and $K_\pm$ denotes Kasner solutions with $\Sigma_+= \pm1$. The dashed semicircle shows the circle on which W$(\kappa,\nu_1$) moves as a function of $\gamma$. }
	\label{Fig:LRSVIIh}
\end{figure}


\section{Analysing the set $\mathcal{B}$(VII$_0$)}
\label{Sec:VII0}
The closure of $\mathcal{B}$(VII$_0$) is
\begin{equation}
\overline{\mathcal{B}(\textrm{VII}_0)}=\mathcal{B}(\textrm{VII}_0)\cup\mathcal{C}(\textrm{II})\cup\mathcal{C}(\textrm{I}).
\end{equation}
Table \ref{tab:VII0Nm0} gives an overview of the equilibrium sets found in $\overline{\mathcal{B}(\textrm{VII}_0)}$ in $F$\,-\,gauge. With $A=0$ it is evident from the equations that the timelike part of the $j$-form field will vanish asymptotically (except for $q=2$). The spatial part of the form field, however, which is the part responsible for the anisotropies, does not necessarily die away. The dynamical system (Sec. \ref{Sec:DynSysF}) is 7-dimensional (since $A=0$). 
\begin{table}[H]
	\centering
	\resizebox{\textwidth}{!}{\begin{tabular}{llccccccccccc}
			\toprule
			\multicolumn{13}{c}{\textbf{Equilibrium sets of $\overline{\mathcal{B}(\textrm{VII}_0})$ analysed in  $F$\,-\,gauge}} \\
			\hline
Set&$\mathcal{P}$& $q$ & $\gamma $&h&$\Omega_{\rm pf}$&$\Sigma_+$&$\Sigma_-$&$\Sigma_\times$&$N_+$&$N_\times$&$\Theta$&$V_1$\\
\hline
$\mathcal{S}^0$(I)&flat FLRW&$-1+\frac{3}{2}\gamma$&$[0,2)$&undef.&1&0&0&0&0&0&0&0\\
$\mathcal{C}^0$(I)&JS($\beta_1,\beta_2,\beta_3,\Theta$)&2&2&undef.&$\sqrt{1-\Theta^2-\beta^2}$&$\beta_1\,\in\,[-1,1]$&$\beta_2\,\in\,[-\sqrt{1-\beta_1^2},\sqrt{1-\beta_1^2}]$&$\beta_3\,\in\,[-\sqrt{1-\beta_1^2-\beta_2^2},\sqrt{1-\beta_1^2-\beta_2^2}]$&0&0&$[-\sqrt{1-\beta^2},\sqrt{1-\beta^2}]$&0\\
$\mathcal{C}^0$(I)&JED($\beta_1,\beta_2,\beta_3$)&2&$[0,2)$&undef.&0&$\beta_1\,\in\,[-1,1]$&$\beta_2\,\in\,[-\sqrt{1-\beta_1^2},\sqrt{1-\beta_1^2}]$&$\beta_3\,\in\,[-\sqrt{1-\beta_1^2-\beta_2^2},\sqrt{1-\beta_1^2-\beta_2^2}]$&0&0&$[-\sqrt{1-\beta^2},\sqrt{1-\beta^2}]$&0\\
$\mathcal{C}^0$(II)&	CS& $-1+\frac{3}{2}\gamma$& $(\frac{2}{3},2)$&undef.&$\frac{3}{16} (-\gamma +6)$&$\frac{3}{16} \left(\frac{2}{3}-\gamma \right)$&0&$\sqrt{3}\frac{3}{16}\left(\frac{2}{3}-\gamma\right)$&$\pm\sqrt{3}\frac{3}{8} \sqrt{(2-\gamma ) \left(\gamma -\frac{2}{3}\right)} $&$\pm\frac{3}{8} \sqrt{(2-\gamma ) \left(\gamma -\frac{2}{3}\right)} $&0&0\\
$\mathcal{S}^+$(VII$_0$)&	W($\nu_1$)& $-1+\frac{3}{2}\gamma$& $(\frac{2}{3},2)$&$0$&$\frac{3}{2}-\frac{3 \gamma }{4}$&$ \frac{1}{2} - \frac{3}{4}\gamma$&0&0&$\nu_1$& 0&0&$\mp\frac{3}{4} \sqrt{(2-\gamma ) (\gamma -\frac{2}{3})}$\\
\hline
	\end{tabular}}
	\caption{Summary of equilibrium sets analyzed in $F$\,-\,gauge.  $N_-=0$ for all the equilibrium sets.  Here notation is such that $\beta^2\,\equiv\,\beta_1^2+\beta_2^2+\beta_3^2$. Keep in mind that every solution has to fulfill the Hamiltonian constraint. Note that $h=0$ in $\mathcal{B}(\textrm{VII}_0)$.}
	\label{tab:VII0Nm0}
\end{table}
\subsection{Discussion of stability}
\paragraph{Local stability.} The local stability analysis is very similar to that of $\mathcal{B}(\textrm{VII}_h)$, as the only difference is that we now have the restriction $A=0$. Table \ref{tab:EigVII0R} gives the eigenvalues of the equilibrium sets computed in the $F$\,-\,gauge, and Table \ref{tab:StabVII0} summarizes the local stability.

\paragraph{Global Stability:} By use of the monotonic function $Z_6$  of \ref{App:B}, re-expressed here as $Z_6^\prime=\phi^{-1}Z_6F$,  where $F$ is the expression inside the square brackets in eq.\eref{b2}, it is possible to give results on the global stability. We note that $\phi$ in the expression for the monotonic function is always positive, except if $\gamma=2$ and $\Sigma_+=-1$. We therefore reach the following global conclusions:
\begin{itemize}
\item $0\,<\gamma\,<\frac{2}{3}$: The no-hair theorem\ref{NoHair} shows that the global future attractor is the  flat FLRW space-time. As a past attractor, the only option is $Z_6(\tau\rightarrow\,-\infty)\rightarrow\,0$, since there are no candidates satisfying $F=0$. $Z_6(\tau\rightarrow\,-\infty)\rightarrow\,0$ is possible if $\Omega\rightarrow 0$, which identifies JED as the global past attractor.
\item $\gamma=\frac{2}{3}$: For this value of $\gamma$ there are no future attractor candidates with $\Omega=0$. Hence the only available option is now $F=0$, which in this case implies $\Sigma_+=\Sigma_-=\Sigma_\times=\Theta=0$, and again flat FLRW is the future attractor. Looking at the repellers, we see that there are no past attractor candidates satisfying $F=0$. Hence, in this case we must require $\Omega=0$ and again JED is the global past attractor.
\item $\frac{2}{3}\,<\gamma\,< 2$: In this case the future attractor is found by requiring $F=0$, which is the only option, judging from the table. This uniquely identifies Wonderland as global future attractor. The only valid option for $Z_6^\prime=0$ in the past, is $\Omega=0$. The global past attractor must therefore again be JED.
\item $\gamma=2$: In this case, only JS is found in the table. The global past attractor is therefore JS with $\beta_1>\sqrt{3}\sqrt{\beta_2^2+\beta_3^2}-1$ (or possibly also including $\beta_1=\sqrt{3}\sqrt{\beta_2^2+\beta_3^2}-1$). The only option for a global future attractor is JS with $\Sigma_+=-1$, where the first order stability analysis breaks down. This identifies one of the Taub points, namely $T_1$ (the Taub form of flat space-time, cf.~\cite[Sec. 9.1.6]{dynSys}) where $\Sigma_+=-1$, as the global future attractor.
\end{itemize} 

Table \ref{tab:StabVII0} shows the overall stability of the equilibrium sets found in $\overline{\mathcal{B}(\textrm{VII}_0})$.
\begin{table}[h]
	\centering
	\resizebox{\textwidth}{!}{\begin{tabular}{lcclclclclc}
			\toprule
			\multicolumn{11}{c}{\textbf{Classification of equilibrium sets in $\overline{\mathcal{B}(\textrm{VII}_0})$}} \\
			\hline
$\mathcal{P}$&& Existence && Attractor && Saddle && Repeller && Inconclusive \\
\hline
W($\nu_1$)&& $\gamma\in(\frac{2}{3},2)$ && $\forall\,\gamma,\,\nu_1$  && &&&& \\
flat FLRW && $\gamma\in[0,2)$ &&$\gamma\,\in\,[0,\frac{2}{3})$&&$\gamma\,\in\,(\frac{2}{3},2)$&&&&$\gamma=\frac{2}{3}$\\
K$(\beta_1,\beta_2)$ && $\gamma\in[0,2)$ &&&&\textrm{else}&&$\beta_1>\frac{1}{2}$&&$\beta_1=-1\,\cup\,\frac{1}{2}$\\
JED$(\beta_1,\beta_2,\beta_3)$&& $\gamma\in[0,2)$ &&&&\textrm{else}&&$\beta_1>-1+\sqrt{3}\sqrt{\beta_2^2+\beta_3^2}$&&$\beta_1=-1+\sqrt{3}\sqrt{\beta_2^2+\beta_3^2}$\\
JS$(\beta_1,\beta_2,\beta_3,\Theta)$&& $\gamma=2$ &&&&\textrm{else}&&$\beta_1>-1+\sqrt{3}\sqrt{\beta_2^2+\beta_3^2}$&&$\beta_1=-1+\sqrt{3}\sqrt{\beta_2^2+\beta_3^2}$\\
CS&& $\gamma\in(\frac{2}{3},2)$ &&&&$\forall\,\gamma$ &&&&  \\
\hline
\end{tabular}}
	\caption{The domains where the stability analysis is conclusive are divided into attractor, saddle and repeller subdomains. The rightmost column shows the domains where the linear stability analysis is inconclusive.}
	\label{tab:StabVII0}
\end{table} 
\newpage
\section{Dynamical system for $\mathcal{B}$(II) and $\mathcal{B}$(IV)}
\subsection{Detailed discussion of constraints}
Earlier we discussed the constraints rather generally. In the following we specify to the particular sets $\mathcal{B}$(II) and $\mathcal{B}$(IV), giving a more detailed discussion of the two constraints $C_1$ and $C_2$,  equations \eref{Constr1} and \eref{Constr4} respectively.

\paragraph{The constraint $C_1$.} As already discussed in a previous section, this constraint gives for $\mathcal{B}$(IV) the only option $\mathbf{V}=V^1\mathbf{e}_1$. In $\mathcal{B}$(II)  however, this constraint is identically fulfilled\setcounter{footnote}{0}\footnote{With $A=0$, the constraint reads $
\sqrt{3}\mathbf{N}_\Delta^*\mathbf{V}_c   +N_+\mathbf{V}_c^*=0$ (alongside the complex conjugate eq.). Rewriting to real, scalar form one readily verifies that this eq. is identically fulfilled in $\mathcal{B}$(II), where $3(N_-^2+N_\times^2)-N_+^2=0$.}. Since the matrix $N_{ab}$ only has one non-zero eigenvalue in $\mathcal{B}$(II), and since $A=0$, we need another vector along which to align our frame without ambiguity. Equivalently: We have  got extra gauge freedom. We use this freedom to align our frame along $\mathbf{V}$ instead. By such we may still conveniently choose $\mathbf{V}=V^1\mathbf{e}_1$. Henceforth we find
\begin{eqnarray}
\mathbf{V}_{\rm c}=0,\quad\cases{\textrm{only option in $\mathcal{B}$(IV),}\\
\textrm{gauge choice in $\mathcal{B}$(II).}}
\end{eqnarray}
In both cases there is still gauge freedom left: The unspecified variables are $\{\phi_1,R_1\}$.

\paragraph{The second constraint $C_2$} is strategically  simplified by $\mathbf{V}_c=0$. As previously discussed, there is for the $\mathcal{B}$(IV) no choice but $\mathbf{\Sigma}_{\rm 1}=0 $. For $\mathcal{B}$(II), the situation is again more delicate. Going to real variables, the constraint takes the form as follows.
\begin{eqnarray}
&(\sqrt{3}N_-+N_+)\Sigma_2+(\sqrt{3}N_\times-3A)\Sigma_3=0\\
&(\sqrt{3}N_--N_+)\Sigma_3-(\sqrt{3}N_\times+3A)\Sigma_2=0.
\end{eqnarray}
Implementing next the $\mathcal{B}$(II) specifications $3(N_-^2+N_\times^2)-N_+^2=0$ and $A=0$ we find from the above set of equations that only one of them is non-identically satisfied at the time. This conclusion is reached by substituting away $\Sigma_2$ or $\Sigma_3$ in one equation from the other. The two options we thus have are 
\begin{eqnarray}\fl
(\sqrt{3}N_-+N_+)\Sigma_2=-\sqrt{3}N_\times\Sigma_3=0\quad\quad\textrm{or}\quad\quad(\sqrt{3}N_--N_+)\Sigma_3=\sqrt{3}N_\times\Sigma_2=0.
\end{eqnarray}

Actually, either of these options are well studied in the $N_-$\,-\,gauge (see Sec. \ref{Subsec:Gauge}). Using the remaining gauge freedom $\{\phi_1,R_1\}$, we specify to $N_\times=0$ and $N_+=\sqrt{3}\alpha N_-$. Including $A$, the two options we have become
\begin{eqnarray}
\fl\sqrt{3}N_-(1+\alpha)\Sigma_2-3A\Sigma_3=0\quad\quad\textrm{or}\quad\quad\sqrt{3}N_-(1-\alpha)\Sigma_3-3A\Sigma_2=0.
\end{eqnarray}
To be in $\mathcal{B}$(II) or $\mathcal{B}$(IV) one must have $N_-\,\neq\,0$ and $\alpha=\pm\,1$. In summary we are therefore left with options as follows:
 \begin{eqnarray}
&\fl\label{o1}\mathcal{B}\textrm{(II):}\quad\quad\quad &(\alpha,\Sigma_2,\Sigma_3)\rightarrow (-1,0,\Sigma_3)\,\cup\,(1,\Sigma_2,0).\\
&\label{o2}\fl\mathcal{B}\textrm{(IV):}\quad\quad\quad & (\alpha,\Sigma_2,\Sigma_3)\rightarrow (\pm 1,0,0).
 \end{eqnarray}
At this point all gauge freedom is used. It is in place with a final comment about the case $N_-=0$, where the parameterization breaks down. $N_-=0$ corresponds to a $\mathcal{B}$(V) subset of $\overline{\mathcal{B}\textrm{(IV)}}$ and a $\mathcal{B}$(I) subset of $\overline{\mathcal{B}\textrm{(II)}}$. Both $\mathcal{B}$(I) and $\mathcal{B}$(V) were analysed in \cite{normann18}. 

\subsection{System of equations in $N_-$\,-\,gauge}
\label{Subsec:B24SystEq}
Based on the above analysis of the constraints, we find the following set of equations, from which we will specify either to $\mathcal{B}$(II) or $\mathcal{B}$(IV). 
\begin{eqnarray}
&\fl\textit{j}\textrm{-form eq.s }\quad\quad\cases{\label{FluidEqsT24}
V_1^\prime=\left(q+2\Sigma_+\right)V_1,\\
\Theta^\prime=(q-2)\Theta-2AV_1\,,
}&\\
&\fl\textrm{Einst. eq.s }\quad\phantom{00}\cases{\label{EinstEqT24}
\Sigma_-^\prime=(q-2)\Sigma_-+2\sqrt{3}\alpha(\Sigma_\times^2 -N_-^2)+\sqrt{3}\left(\Sigma_2^2-\Sigma_3^2\right)\,,\\
\Sigma_\times^\prime=(q-2-2\sqrt{3}\alpha\Sigma_-)\Sigma_\times-2A N_- +2\sqrt{3}\Sigma_2\Sigma_3\,,\\
\Sigma_2^\prime=(q-2-3\Sigma_+-\sqrt{3}\Sigma_-)\Sigma_2+\sqrt{3}\Sigma_\times\Sigma_3(\alpha-1)\,,\\
\Sigma_3^\prime=(q-2-3\Sigma_++\sqrt{3}\Sigma_-)\Sigma_3-\sqrt{3}\Sigma_\times\Sigma_2(\alpha+1)\,,\\
\Sigma_+^\prime=\left(q-2\right)\Sigma_+-2\left(N_-^2+V_1^2\right)+3\left(\Sigma_2^2+\Sigma_3^2\right)\,,
}&\\
&\fl\textrm{En. cons. }\quad\quad\phantom{0}
\cases{\label{EnConsT24}
\Omega_{\rm pf}^\prime=2\left(q+1-\frac{3}{2}\gamma \right)\Omega_{\rm pf}\,,
}&\\
&\fl\textrm{Jacobi id. }\quad\quad\phantom{.}
\cases{\label{JacId2T24}
N_-^\prime=\left(q+2\Sigma_++2\sqrt{3}\alpha\Sigma_-\right)N_-\,,\\
\alpha^\prime=2\sqrt{3}(1-\alpha^2)\Sigma_-\,,\\
A^\prime=\left(q+2\Sigma_+\right)A.
}&
\end{eqnarray}
These dynamical equations are subject to the two remaining constraints, Eq.s. \eref{Constr1} and \eref{Constr2}, which now read
\begin{eqnarray}
&\fl \label{Constr2T24} C_2=1-\Omega_{\rm pf}-\Sigma_+^2-\Sigma_-^2-\Sigma_\times^2-\Sigma_2^2-\Sigma_3^2-\Theta^2-V_1^2-A^2-N_-^2=0\,,\\
&\fl\label{Constr3T24} C_1=\Theta V_1-A\Sigma_++N_-\Sigma_\times=0.
\end{eqnarray}

\paragraph{Symmetry: } Note that the above system of equations is symmetric under
\begin{equation}\fl
(\alpha^\prime,\alpha,\Sigma_-^\prime,\Sigma_-,\Sigma_2^\prime,\Sigma_2,\Sigma_3^\prime,\Sigma_3)\quad\rightarrow\quad(-\alpha^\prime,-\alpha,-\Sigma_-^\prime,-\Sigma_-,\Sigma_3^\prime,\Sigma_3,\Sigma_2^\prime,\Sigma_2)
\end{equation}
Remembering the analysis we performed of the constraints, we can utilize this symmetry to study only one of the two options we found for $\mathcal{B}$(II) in Eq. \eref{o1} and for $\mathcal{B}$(IV) in Eq. \eref{o2}. We make the following choices, without loss of generality.
\begin{eqnarray}
&\fl\label{IIspecs}\mathcal{B}\textrm{(II)}&:\,\,\alpha=1\quad\textrm{and}\quad\Sigma_2=0\quad\textrm{and}\quad A=0.\\
&\fl\label{IVspecs}\mathcal{B}\textrm{(IV)}&:\,\,\alpha=1\quad\textrm{and}\quad\Sigma_2=\Sigma_3=0.
\end{eqnarray}
\newpage
\section{Analysing $\mathcal{B}$(IV)}
In this section we report the results of analysing the invariant set $\mathcal{B}$(IV). From the type constraint (see Sec. \ref{Subsec:GenDynSys}), we conclude that the closure is 
\begin{equation}
\overline{\mathcal{B}\textrm{(IV)}}=\mathcal{B}\textrm{(IV)}\cup \mathcal{C}\textrm{(I)}\cup \mathcal{C}\textrm{(II)}\cup \mathcal{B}\textrm{(V)}.
\end{equation}
We may therefore find equilibrium points from all of these invariant sets. Table \ref{tab:IVNm} shows the equilibrium sets found in the $N_-$\,-\,gauge. The dynamical system is 6-dimensional. Figure~\ref{subfig:B4} shows some of the equilibrium sets as projected over the past attractor region.
\begin{table}[H]
	\centering
	\resizebox{\textwidth}{!}{\begin{tabular}{llcccccccccc}
			\toprule
			\multicolumn{12}{c}{\textbf{Equilibrium sets of $\mathcal{B}$(IV)} analysed in $N_-$\,-\,gauge} \\
			\hline
Set&$\mathcal{P}$ & $q$ & $\gamma $&$A$&$\Omega_{\rm pf}$&$\Sigma_+$&$\Sigma_-$&$\Sigma_\times$&$N_-$&$\Theta$&$V_1$\\
\hline
$\mathcal{C}^0$(I)&JED($\beta_1,\beta_2$)&2&$[0,2]$&0&0&$\beta_1\,\in\,[-1,1]$&$[-\sqrt{1-\beta_1}<\beta_2<\sqrt{1-\beta_1}]$&0&0&$-\sqrt{1-\beta^2},\sqrt{1-\beta^2}$&0\\
$\mathcal{C}^0$(I)&JS$(\beta_1,\beta_2,\Theta)$&2&$2$&0&$\sqrt{1-\beta^2-\Theta^2}$&$\beta_1$&$\beta_2$&0&0&$\Theta$&0\\
$\mathcal{S}^0$(I)& flat FLRW & $-1+\frac{3}{2}\gamma$& $[0,2]$&0&1&0&0&0&0&0&0 \\
$\mathcal{C}$(II)&CS& $-1+\frac{3}{2}\gamma$& $(\frac{2}{3},2)$&0&$-\frac{3}{16} (\gamma -6)$&$\frac{1}{16} (2-3 \gamma )$&$\frac{1}{16} \sqrt{3} (2-3 \gamma )$&0&$\mp\frac{1}{8} \sqrt{3} \sqrt{(8-3 \gamma ) \gamma -4}$&0&0\\
$\mathcal{C}^+$(IV)&	PW$(\beta_1,\nu^2)$& $-2\beta_1$& $[0,2]$&$1+\beta_1$&0&$\beta_1\,\in\,(-1,0)$&0&$-\nu$& $\nu$&$\sqrt{-\beta_1  (\beta_1 +1)-\nu^2}$&$-\sqrt{-\beta_1  (\beta_1 +1)-\nu^2}$ \\
$\mathcal{S}^0$(V)& open FLRW & $-1+\frac{3}{2}\gamma$& $\frac{2}{3}$&$[-1,1]$&$1-A^2$&0&0&0&0&0&0 \\
$\mathcal{S}^0$(V)&M& 0& $[0,2]$&1&0&0&0&0&0&0 \\
$\mathcal{S}^+$(V)&	W($\kappa$)& $-1+\frac{3}{2}\gamma$& $(\frac{2}{3},2)$&$\frac{3}{4}(2-\gamma)\kappa$&$\frac{3}{4}(2-\gamma)(1-\kappa^2)$&$ \frac{1}{2} -\frac{3}{4}\gamma$&0&0& 0&$\kappa\,V_1$&$\mp\frac{3}{4} \sqrt{(2-\gamma ) (\gamma -\frac{2}{3})}$ \\
\hline
	\end{tabular}}
	\caption{Summary of equilibrium sets of $\mathcal{B}$(IV) analysed in $N_-$\,-\,gauge in. Note that since we are in $N_-$\,-\,gauge, $\nu_2^2=\nu^2$. }
	\label{tab:IVNm}
\end{table}

\subsubsection*{Eigenvalues and stability}
\paragraph{Jacobs' Extended Disk, JED($\beta_1,\beta_2$), and Jacobs' Sphere, JS($\beta_1,\beta_2,\Theta$).} These two equilibrium sets have the same eigenvalues, except for the $\gamma$-dependent one, which is 0 for JS, since $\gamma=2$. The stability regions fall into the same categories for these Eq. sets. In particular they are repellers for $0<\,\sqrt{3}\beta_2\,<\beta_1+1$.

\paragraph{FLRW:} The no-hair theorem \ref{NoHair} states that the spatially flat FLRW is the \textit{global attractor} for $0\,<\,\gamma\,<\,2/3$. The local stability analysis confirms this: the flat FLRW is \textit{stable} in the extended state space for $\gamma<2/3$. The \textit{Open FLRW} branches off the flat equilibrium point at $\gamma=2/3$. It is either stable or a saddle, but center-manifold analysis is required in order to find out.

\paragraph{Plane Waves, PW($\beta_1,\nu^2$), and Milne, M:} In the analysis of VII$_h$ we found that the stability of this set was independent of $\alpha$. We therefore refer the reader back to the more general analysis of $\mathcal{B}$(VII$_h$) in Section \ref{Sec:typeVIIh}.

\paragraph{Wonderland, W($\kappa$): } This is the part of Wonderland found in $\mathcal{B}$(V). In the present treatment we find one zero-eigenvalue that cannot be explained by parameters of the equilibrium set. Recalling that  $\nu_1=0$ corresponds to the origo of the parametrization, we may use the results of the analysis of $\mathcal{B}$(VII$_h$) instead. In \ref{App:WonderlandStab} the $\mathcal{B}$(VII$_h$) version of Wonderland is analysed. Here we show that two extra eigenvalues go to zero in the particular point $\nu_1=0$. These correspond to the parameters $\nu_2,\nu_3$ in Wonderland of $\mathcal{B}$(VI$_h$). Hence all zero-eigenvalues may be accounted for, also in the point $\nu_1=0$. Therefore W($\kappa$) in $\mathcal{B}$(IV) must also be stable. 

\paragraph{Collins-Stewart, CS:} All copies of CS are saddles, as already found and explained in Section \ref{Sec:typeVIIh} on $\mathcal{B}$(VII$_h$). 

\paragraph{}Also, Proposition \ref{prop:genProp} shows that FLRW or Milne is the global future attractor for $\gamma=2/3$.

\paragraph{} Table \ref{tab:StabIV} summarizes the overall stability of the equilibrium sets found in $\mathcal{B}$(IV). The zero-eigenvalues correspond either to parameters of the equilibrium set, or they result from an inconclusive first order stability analysis. In the present case, however, we can account for them by counting parameters of the Equilibrium sets.

\begin{table}[h]
\centering
\resizebox{\textwidth}{!}{\begin{tabular}{lcclclclclc}
\toprule
\multicolumn{11}{c}{\textbf{Classification of equilibrium sets in $\mathcal{B}$(IV)}} \\
\hline
$\mathcal{P}$&& Existence && Attractor && Saddle && Repeller && Inconclusive \\
\hline
flat FLRW && $\gamma\in[0,2)$ &&$\gamma\,\in\,[0,\frac{2}{3})$&&$\gamma\,\in\,(\frac{2}{3},2)$&&&&$\gamma=\frac{2}{3}$\\
JED$(\beta_1,\beta_2)$ &&  $\gamma\in[0,2)$ &&&&\textrm{else}&&$0<\,\sqrt{3}\beta_2\,<\beta_1+1$&&$0<\,\sqrt{3}\beta_2\,=\beta_1+1$\\
JS$(\beta_1,\beta_2,\Theta)$&& $\gamma=2$ &&&&\textrm{else}&&$0<\,\sqrt{3}\beta_2\,<\beta_1+1$&&$0<\,\sqrt{3}\beta_2\,=\beta_1+1$\\
C.-S.&& $\gamma\in(\frac{2}{3},2)$ &&&&$\forall\,\gamma$ &&&&\\
PW$(\beta_1,\nu^2)$ && $\gamma\in[0,2]$ &&$\gamma>\frac{2}{3}-\frac{4}{3}\Sigma_+$&&$\gamma<\frac{2}{3}-\frac{4}{3}\Sigma_+$&&&&$\gamma\,=\,\frac{2}{3}-\frac{4}{3}\Sigma_+$\\
open FLRW &&&&&&&&&&$\forall\,\gamma$\\
W($\kappa$)&& $\gamma\in(\frac{2}{3},2)$ && $\forall\,\gamma$  && &&&&\\
 \hline
	\end{tabular}}
	\caption{The domains where the stability analysis is conclusive are divided into attractor, saddle and repeller subdomains by the conditions above.  The rightmost column shows the domains where the linear stability analysis is inconclusive.}
	\label{tab:StabIV}
\end{table} 
\section{Analysing $\mathcal{B}$(II)}
\label{Sec:typeII}
The closure of this set is
\begin{equation}
\overline{\mathcal{B}\textrm{(II)}}=\mathcal{B}\textrm{(II)}\,\cup\,\mathcal{B}\textrm{(I)}.
\end{equation}
Over the course of the following subsections we will show that the future asymptotes are contained in the boundary $\mathcal{B}$(I). Figure~\ref{subfig:B2} shows some of the equilibrium points as projected over the past attractor region.
\subsection{Asymptotic subspaces}
The constraint equation \eref{Constr3T24} reads
\begin{equation}
\Theta V_1+N_-\Sigma_\times=0
\end{equation}
when the $\mathcal{B}$(II) constraint is satisfied. Also, the $\Theta^\prime$ -equation shows that $\Theta$ will asymptotically tend towards zero for $q<2$. Henceforth, one must either have $q=2$ (which uniquely characterizes JED in the absence of a perfect fluid ($\gamma<2$) and JS in the presence of a perfect fluid ($\gamma=2$), or one finds that any path must asymptotically tend towards $N_-=0$ or $\Sigma_\times=0$. These are invariant subsets.

\begin{lem}[Asymptotic subspaces of $\mathcal{B}$(II)]
\label{lemBII1}
For $q<2$ all orbits with $V_1^2\,>\,0$ in the set $\mathcal{B}$(II) are future asymptotic to $\mathcal{B}$(II)$\vert_{N_-=0}\,\cup\,\mathcal{B}$(II)$\vert_{\Sigma_\times=0}$.
\end{lem}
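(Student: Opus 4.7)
The plan is to exploit the simplification $A=0$ that characterizes $\mathcal{B}\textrm{(II)}$, which reduces the $\Theta$-equation in \eref{FluidEqsT24} and the Codazzi constraint \eref{Constr3T24} to
\[
\Theta' = (q-2)\,\Theta, \qquad \Theta V_1 + N_-\Sigma_\times = 0.
\]

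From the evolution equation I derive $(\Theta^2)' = 2(q-2)\Theta^2$. Under the hypothesis $q<2$, this makes $\Theta^2$ strictly monotonically decreasing along any orbit on which $\Theta \neq 0$. Since $\Theta^2 \in [0,1]$ by the Hamiltonian constraint, it converges to a limit $L \geq 0$ as $\tau\to\infty$. I then apply a standard $\omega$-limit argument to conclude $L=0$: the $\omega$-limit set $\omega$ is non-empty, compact, invariant, and contained in $\overline{\mathcal{B}\textrm{(II)}}$; by monotone convergence $\Theta^2 \equiv L$ on $\omega$, so $(\Theta^2)' = 2(q-2)L = 0$ there. Combined with $q<2$ carried to the closure by continuity of $q$, this forces $L=0$, hence $\Theta \to 0$.

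With $\Theta \to 0$ and $V_1$ uniformly bounded by the Hamiltonian constraint, the Codazzi constraint yields $N_-\Sigma_\times \to 0$, so $\omega \subset \{N_-=0\} \cup \{\Sigma_\times = 0\}$. To close, I would verify invariance of each of these sets: in the chosen $\mathcal{B}\textrm{(II)}$ gauge ($\alpha = 1$, $\Sigma_2 = 0$, $A = 0$), the $\Sigma_\times$- and $N_-$-equations of \eref{EinstEqT24} and \eref{JacId2T24} reduce to forms homogeneous in $\Sigma_\times$ and $N_-$ respectively, so both subsets are invariant. This gives exactly the stated asymptotic conclusion.

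The main obstacle, I expect, is the possibility that the $\omega$-limit set could touch the exceptional sector $\{q=2\}$ (the JED/JS equilibria), where the monotonicity argument for $\Theta^2$ collapses. The hypothesis $q<2$, together with $V_1^2 > 0$ which is preserved by the flow $V_1' = (q+2\Sigma_+)V_1$, is precisely what keeps the orbit away from JED and JS: these sit in the $V_1=0$ sector $\mathcal{C}^0\textrm{(II)}$, the only place $q=2$ is realized as an equilibrium in the Bianchi II closure.
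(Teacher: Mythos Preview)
Your proof is correct and follows precisely the paper's approach: use the monotone function $\Theta$ (via $\Theta'=(q-2)\Theta$ when $A=0$) together with the reduced Codazzi constraint $\Theta V_1+N_-\Sigma_\times=0$. The paper's own proof consists of a single sentence invoking exactly these two ingredients; you have simply supplied the $\omega$-limit details and the invariance check that the paper leaves implicit.
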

\begin{proof}
The monotonic function $\Theta$ in conjunction with the constraint  \eref{Constr3T24} with $A=0$. 
\end{proof}
Since we know that the orbits decay into these two invariant subsets, the key to understand the future development of the $\mathcal{B}$(II) set is to understand these two subsets. 
\begin{itemize}
\item $\mathcal{B}$(II)$\vert_{N_-=0}$: This is $\mathcal{B}$(I), and may be reached asymptotically. The resulting dynamical system is identical to the case treated in \cite{normann18}, where W,R and E were established as global future attractors for $\gamma\,>\,2/3$. With the parametrization \eref{param} to be introduced below, the subspace $\mathcal{B}$(II)$\vert_{N_-=0}$ corresponds to $\theta=\left(n+\frac{1}{2}\pi\right)$.
\item $\mathcal{B}$(II)$\vert_{\Sigma_\times=0}$: This is the remaining part of $\lim_{\tau\rightarrow\infty}\mathcal{B}$(II). With the parametrization \eref{param} to be introduced below this corresponds to $\zeta=0$. Note the particular form of the equation for $\Sigma_-^\prime$ in this set. From \ref{Subsec:B24SystEq} with the specifications \eref{IIspecs} and $\Sigma_\times=0$ we find
\begin{equation}
\Sigma_-^\prime=(q-2)\Sigma_--2\sqrt{3}N_-^2-\Sigma_3^2.
\end{equation}
This equation decreases monotonically for $\Sigma_->0$. Hence, the asymptotic value of $\Sigma_-$ in this set must be non-positive. We summarize the findings in the following lemma.
\begin{lem}[Future asymptotes of $\mathcal{B}$(II)$\vert_{\Sigma_\times=0}$] 
\label{lemBII2}The set $\mathcal{B}$(II)$\vert_{\Sigma_\times=0}$ is future asymptotic to $\mathcal{B}$(II)$\vert_{\Sigma_\times=0,\Sigma_-\,\leq\,0.}$
\end{lem}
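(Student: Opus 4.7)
The plan is to exhibit $\Sigma_-$ itself as a monotonic scalar on the subset where $\Sigma_->0$ and then appeal to the monotonic function principle. First I would substitute the $\mathcal{B}$(II) specifications \eref{IIspecs} together with $\Sigma_\times=0$ into the $\Sigma_-^\prime$ equation of Section~\ref{Subsec:B24SystEq}, recovering the expression $\Sigma_-^\prime=(q-2)\Sigma_--2\sqrt{3}N_-^2-\sqrt{3}\Sigma_3^2$ displayed just above the lemma. A brief computation using the Hamiltonian constraint \eref{Constr1}, our standing assumption $\Omega_\Lambda=0$, and $\gamma\leq 2$ shows that $q\leq 2$ throughout the physical state space, so whenever $\Sigma_-\geq 0$ every term on the right-hand side is non-positive and the middle term is strictly negative on $\mathcal{B}$(II), where $N_-\neq 0$.

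Consequently $\Sigma_-$ is strictly decreasing on the invariant subset $U\equiv\mathcal{B}(\mathrm{II})\vert_{\Sigma_\times=0}\cap\{\Sigma_->0\}$. The monotonic function principle then guarantees that the $\omega$-limit set of every orbit in $U$ is contained in $\overline{U}\setminus U$, which within the closure of $\mathcal{B}(\mathrm{II})\vert_{\Sigma_\times=0}$ decomposes as $\{\Sigma_-=0\}\cup\{N_-=0\}$. Limit points lying in $\{\Sigma_-=0\}$ are already inside the target set $\mathcal{B}(\mathrm{II})\vert_{\Sigma_\times=0,\Sigma_-\leq 0}$, so the remaining task is to handle the boundary piece $\{N_-=0\}$.

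The main anticipated obstacle is to exclude $\omega$-limit points with $N_-=0$ and $\Sigma_-=L>0$. Such a point must satisfy $\Sigma_-^\prime=0$, and together with $q\leq 2$ this forces $q=2$, $N_-=0$ and $\Sigma_3=0$. Repeating the same type of argument on $\Sigma_+^\prime$ and (for $\gamma<2$) on $\Omega_{\rm pf}^\prime$ places the putative limit on the Jacobs family JED or, when $\gamma=2$, on JS. Linearising the $N_-$-equation at such an equilibrium yields the transverse eigenvalue $2+2\Sigma_++2\sqrt{3}L$; the Hamiltonian constraint bounds $\Sigma_+\geq -\sqrt{1-L^2}\geq -1$, so this eigenvalue is strictly positive for $L>0$. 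This contradicts $N_-\to 0$ and forces $L\leq 0$, completing the argument.
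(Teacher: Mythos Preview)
Your core approach is exactly the paper's: the paper's entire proof is the sentence ``This equation decreases monotonically for $\Sigma_->0$. Hence, the asymptotic value of $\Sigma_-$ in this set must be non-positive'', which is precisely your first paragraph. Your second and third paragraphs go well beyond what the paper does, carefully excluding the possibility that the orbit escapes to the $N_-=0$ boundary while $\Sigma_-$ is still positive; the paper simply does not address this and relies on the reader accepting the one-line monotonicity argument at face value.

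One technical slip to fix: you call $U=\mathcal{B}(\mathrm{II})\vert_{\Sigma_\times=0}\cap\{\Sigma_->0\}$ an \emph{invariant} subset, but it is not positively invariant. Indeed your own formula gives $\Sigma_-'\vert_{\Sigma_-=0}=-2\sqrt{3}N_-^2-\sqrt{3}\Sigma_3^2<0$ on $\mathcal{B}(\mathrm{II})$, so an orbit can cross through $\Sigma_-=0$ into $\{\Sigma_-<0\}$ in finite time. When that happens the monotonic-function principle as you invoke it does not apply, and the $\omega$-limit set need not lie in $\overline{U}\setminus U$. The repair is immediate: the same inequality shows that $\{\Sigma_-\leq 0\}$ is positively invariant within $\mathcal{B}(\mathrm{II})\vert_{\Sigma_\times=0}$, so any orbit that exits $U$ is thereafter trapped in the target set and the lemma holds trivially for it. Your detailed boundary analysis then handles the remaining case of orbits that stay in $U$ for all future time. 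With this two-line case split inserted before your appeal to the monotonicity principle, the argument is complete and considerably more rigorous than the paper's.
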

Refer to the text above for a proof.
\end{itemize}
In establishing the global behaviour of the set $\mathcal{B}$(II) we shall only have to analyse further the part which is not already analysed, namely $\mathcal{B}$(II)$\vert_{\Sigma_\times=0,N_-\,\neq\,0}$. In order to do so, it is useful to perform a local stability analysis, which we do for the whole set $\mathcal{B}$(II) in the following three subsections. But first, note the following theorem.

\begin{thm}[Future asymptote for $\gamma=2/3$]
\label{genThmBII}
The Bianchi invariant set $\mathcal{B}$(II) with a perfect fluid and a $j$-form fluid are future asymptotic to
\begin{equation}
\Omega_{\rm pf}=1.
\end{equation}
\end{thm}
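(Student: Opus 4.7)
The plan is to mimic the proof of Proposition~\ref{prop:genProp} (and the underlying Theorem~8.1 of~\cite{normann18}), specialised by the identity $A\equiv 0$ that characterises $\mathcal{B}$(II). The standing assumption $\Omega_\Lambda=0$ is in force, and the hypothesis ``with a perfect fluid'' is read as $\Omega_{\rm pf}(\tau_0)>0$. With $\gamma=2/3$, equation \eref{q} reduces to $q=2\Sigma^2+2\Theta^2\geq 0$, so the energy-conservation equation \eref{EnConsT24} becomes $\Omega_{\rm pf}^\prime=2q\,\Omega_{\rm pf}\geq 0$. Combined with the upper bound $\Omega_{\rm pf}\leq 1$ from the Hamiltonian constraint \eref{Constr2T24}, this forces $\Omega_{\rm pf}$ to be monotone and bounded, hence convergent to some $\Omega_{\rm pf}^\infty\in(\Omega_{\rm pf}(\tau_0),1]$.

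The core task is to upgrade this monotonicity to the conclusion that every competing variable decays. Integrating the energy equation and using the positive lower bound $\Omega_{\rm pf}\geq\Omega_{\rm pf}(\tau_0)>0$ yields $\int_{\tau_0}^\infty(\Sigma^2+\Theta^2)\,d\tau<\infty$. Because the state space is compact (by \eref{Constr2T24}), the time derivative of the integrand is bounded, so Barbalat's lemma upgrades integrability to pointwise decay: $\Sigma_+,\Sigma_-,\Sigma_\times,\Sigma_2,\Sigma_3,\Theta\to 0$. A second application of Barbalat, using that $\Sigma_+$ has a finite limit and $\Sigma_+^{\prime\prime}$ is bounded (so $\Sigma_+^\prime$ is uniformly continuous and its improper integral converges), yields $\Sigma_+^\prime\to 0$. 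The $\Sigma_+^\prime$-equation in \eref{EinstEqT24}, with all the $\Sigma$-terms and $q$ already at zero, then collapses in the limit to $0=-2(N_-^2+V_1^2)$, forcing $N_-\to 0$ and $V_1\to 0$. A final appeal to the Hamiltonian constraint \eref{Constr2T24} gives $\Omega_{\rm pf}\to 1$, as claimed.

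The main technical obstacle is the two-stage use of Barbalat's lemma to convert time-integral bounds into pointwise limits: the estimate $\Omega_{\rm pf}^\prime=2q\Omega_{\rm pf}$ is too weak for a direct LaSalle-type argument, so one must explicitly verify uniform continuity at each stage. This uniform continuity follows for free from compactness of the state space, but the bookkeeping deserves care --- in particular, the step passing from $\Sigma_+\to 0$ to $\Sigma_+^\prime\to 0$ is not automatic and relies on the bounded-second-derivative version of Barbalat. Everything else is routine.
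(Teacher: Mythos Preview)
Your argument is correct and follows exactly the strategy the paper points to (namely the monotonicity of $\Omega_{\rm pf}$ when $\gamma=2/3$, as in Theorem~8.1 of \cite{normann18}); the paper itself gives no details beyond that reference, whereas you have spelled them out via Barbalat's lemma. The only cosmetic slip is listing $\Sigma_2$ among the decaying variables --- in the $\mathcal{B}$(II) gauge \eref{IIspecs} one has $\Sigma_2\equiv 0$ already --- but this does not affect the argument.
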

\begin{proof}
We refer to the proof of theorem 8.1 in \cite{normann18}, of which the above lemma is an extension. The proof is similar.
\end{proof}
\subsection{Parametrized dynamical system for $\mathcal{B}$(II)}
In the set $\mathcal{B}$(II), the constraint \eref{Constr3T24} ($A=0$) is well suited for the parametrization
\begin{eqnarray}
\label{param}
&(N_-,V_1)=\eta(\cos\theta,\sin\theta)\\
&(\Theta,\Sigma_\times)=\zeta(\cos\theta,-\sin\theta).
\end{eqnarray}
Rewriting the system \eref{FluidEqsOpt1a}-\eref{JacId2Opt1a} in terms of this parametrization, and the further restrictions for $\mathcal{B}$(II) mentioned earlier, the final system is
\begin{eqnarray}
\label{TypeIIFinal}
&\Sigma_3^\prime=(q-2+\sqrt{3}\Sigma_--3\Sigma_+)\Sigma_3\,,&\\
&\Sigma_-^\prime=(q-2)\Sigma_--\sqrt{3}\Sigma_3^2\,+\,2\sqrt{3}(\zeta^2\sin^2\theta-\eta^2\cos^2\theta)\,,&\\
&\Sigma_+^\prime=\left(q-2\right)\Sigma_++3\Sigma_3^2-2\eta^2\,,&\\
&\eta^\prime=\left(q+2\Sigma_+ +2\sqrt{3}\Sigma_-\cos^2\theta\right)\eta \,,&\\
&\zeta^\prime=\left(q-2-2\sqrt{3}\Sigma_-\sin^2\theta\right)\zeta \,,&\\
&\Omega^\prime=2(q+1-\frac{3}{2}\gamma)\Omega\,,&\\
&\theta^\prime=- 2\sqrt{3}\Sigma_-\sin\theta\cos\theta\,,&
\end{eqnarray}
together with the Hamiltonian constraint \eref{Constr2T24}, which now takes the form
\begin{equation}
C_1=1-\Omega_{\rm pf}-\Sigma_3^2-\Sigma_+^2-\Sigma_-^2-\eta^2-\zeta^2.
\end{equation}
Using this to remove one variable (we choose to remove $\Omega_{\rm pf}$), the resulting dynamical system is 6-dimensional. The deceleration parameter \eref{q} takes the form
\begin{equation}
\label{q2}\fl
q=2\Gamma^2+\frac 12({3}\gamma -2)\Omega_{\rm pf},\quad\textrm{where}\quad\Gamma^2\equiv\Sigma_{+}^2+\Sigma_-^2+\Sigma_3^2+\zeta^2,
\end{equation}
and one may note that the above system has the property
\begin{equation}
\left(\Gamma^2\right)^\prime-2(q-2)\Gamma^2=2q\eta^2-\left(\eta^2\right)^\prime.
\end{equation}
This equation proves that with $\eta=\eta^\prime=0$, we must have $\lim_{\tau\rightarrow\infty}\Gamma^2=0$. Hence, this asymptotic subset is future asymptotic to $\Omega_{\rm pf}=1$ (and hence FLRW), and past asymptotic to JED ($\gamma<2$) or JS ($\gamma=2$). This is in agreement with the conclusion reached in our previous work, where we analysed $\mathcal{B}$(I).

\paragraph{Symmetries.} We shall also note that the above system has the following symmetries:
\begin{eqnarray}
&\fl\label{IIsym1}(\eta^\prime,\eta)\,\rightarrow\,(-\eta^\prime,-\eta)\\
&\fl\label{IIsym2}(\zeta^\prime,\zeta)\,\rightarrow\,(-\zeta^\prime,-\zeta)\\
&\fl\label{IIsym3}(\Sigma_3^\prime,\Sigma_3)\,\rightarrow\,(-\Sigma_3^\prime,-\Sigma_3).\\
&\fl\label{IIsym4}\theta\,\rightarrow\,\theta+\pi
\end{eqnarray}

\subsection{Asymptotic decoupling in the matter content}
\label{subsec:assDec}
For equilibrium points, we must require that all scalars are constants. Henceforth we must generally require $\theta^\prime=0$. This requires $\theta=n\,\pi/2$, where $n\,\in\,\mathbb{Z}$, in which case one finds from the parametrization that $\Theta$ and $V_1$ decouple. This is expected, since $\Theta$ is a monotonically decreasing function for $q<2$. The only way to avoid decoupling would be to enforce $\Sigma_-=\Sigma_-^\prime=0$, while generally $\zeta,\eta\,\neq\,0$. In this case, however, $\zeta$ decreases monotonically. Consequently, the eq. for  $\Sigma_-^\prime$ enforces $\Sigma_3^2=\eta^2=0$ asymptotically. As a result, also $\Sigma_+$ decreases monotonically, and henceforth $\Omega_{\rm pf}\,\rightarrow\,1$ (which is flat FLRW). Otherwise, $\Theta$ and $V_1$ decouple. Furthermore, since $\theta=n\pi$ corresponds to $\mathcal{B}$(I), any new equilibrium sets visible in this gauge will be found in $\theta=n\,\pi/2$.

We also note that the stability of the variables except $\theta$ cannot be sensitive to the number $n$, since only trigonometric squares appear in these equations. For $\theta$, on the other hand, $n$ will be important. There are, however, only two physically distinct options: $\{n\,\in\,$odd $\}$ or $\{n\,\in\,$even $\}$. Mathematically it therefore suffices to study $n=0$ and $n=1$. From Lemma \ref{lemBII2} and from the discussion in the current paragraph, we recognize that the flow along $\theta$ for $\Sigma_-<0$ is important. As displayed in Figure \ref{fig:flowth}, the flow in this case is always towards $\theta=\left(n+1/2\right)\pi$. 
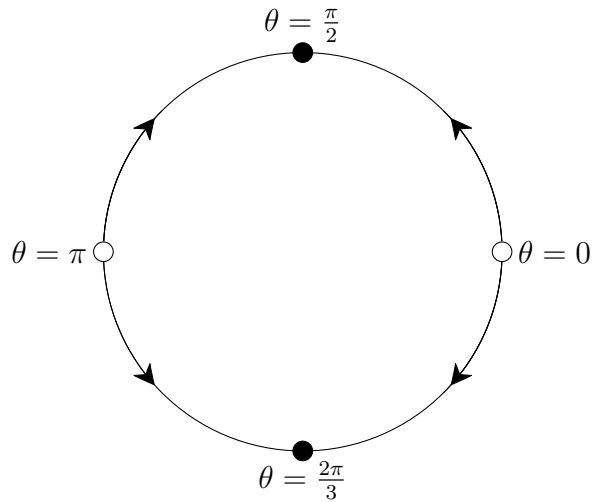
\begin{figure}[H]
\centering
\begin{tikzpicture}[scale=5.3,cap=round,>=latex]
  \def\Radius{0.5cm}

  \draw (0cm,0cm) circle[radius=\Radius];

  \begin{scope}[
    -{Stealth[round, length=8pt, width=8pt, bend]},
    shorten >=4pt,
    very thin,
  ]
    \draw (-45:\Radius) arc(-45:45:\Radius);
    \draw (180:\Radius) arc(180:135:\Radius);
    \draw (45:\Radius) arc(45:-45:\Radius);
    \draw (135:\Radius) arc(135:225:\Radius);
  \end{scope}

\draw[white, fill=white] (180:\Radius) circle(.7pt) node[] {};
\draw[black] (180:\Radius) circle(.7pt) node[left] { $\theta=\pi\,$};
\draw[white, fill=white] (0:\Radius) circle(.7pt) node[] {};
\draw[black] (0:\Radius) circle(.7pt) node[right] { $\,\theta=0$};

\draw[black, fill=black] (90:\Radius) circle(.7pt) node[above] { $\theta=\frac{\pi}{2}$};
\draw[black,fill=black] (270:\Radius) circle(.7pt) node[below] {$\theta=\frac{2\pi}{3}$};
\end{tikzpicture}
\caption{A schematic of the flow along $\theta$ for negative $\Sigma_-$. Black (white) dots represent stable (unstable) subsets of the dynamical system. This flow is generic in the invariant subset $\zeta=0$ for negative $\Sigma_-$.}
\label{fig:flowth}
\end{figure}

\subsection{Equilibrium sets and stability}
Table \ref{tab:IINparam} presents the equilibrium sets found in $\mathcal{B}$(II) in the $N_-$\,-\,gauge with the parametrization \eref{param}. The stability analysis is summarized in Table~\ref{tab:StabII}.

\begin{table}[H]
	\centering
	\resizebox{\textwidth}{!}{\begin{tabular}{llccccccccc}
			\toprule
			\multicolumn{11}{c}{\textbf{Equilibrium sets of $\mathcal{B}$(II) analysed in $N_-$\,-\,gauge.}} \\
			\hline
Set&$\mathcal{P}$& $q$ & $\gamma $&$\Omega_{\rm pf}$&$\Sigma_+$&$\Sigma_-$&$\Sigma_3$&$\eta$&$\zeta$&$\theta$\\
\hline
$\mathcal{S}^0$(I)& flat FLRW& $-1+\frac{3}{2}\gamma$& $[0,2)$&1&0&0&0&0&0&$(0,2\pi]$\\
$\mathcal{C}^0$(I)& JED($\beta_1,\beta_2$) &2& $[0,2]$&0&$\beta_1\in[-1,1]$&$\beta_2\,\in\,[-\sqrt{1-\beta_1},\sqrt{1-\beta_1}]$&0&0&$\zeta\,\in\,[-\sqrt{1-\beta ^2},-\sqrt{1-\beta^2}$]&$n\pi$ \\
$\mathcal{C}^0$(I)& JS($\beta_1,\beta_2,\zeta$)&2& $2$&$[0,1-\zeta^2-\beta^2]$&$\beta_1\in[-1,1]$&$\beta_2\,\in\,[-\sqrt{1-\beta_1},\sqrt{1-\beta_1}]$&0&0&$\zeta\,\in\,[-\sqrt{1-\beta ^2},-\sqrt{1-\beta^2}$]&$n\pi$ \\
$\mathcal{S}^+$(I)&W&$-1+\frac{3}{2}\gamma$&$(\frac{2}{3},2)$&$-\frac{3}{4} (\gamma -2)$&$\frac{1}{2}-\frac{3}{4}\gamma$&0&0&$\pm\frac{3}{4} \sqrt{(2-\gamma ) \left(\gamma -\frac{2}{3}\right)}$&0&$\left(n+\frac{1}{2}\right)\pi$\\
$\mathcal{D}^+$(I)&R&$-1+\frac{3}{2}\gamma$&$(\frac{6}{5},\frac{4}{3})$&$6-\frac{9}{2}\gamma$&$\frac{1}{2}-\frac{3}{4}\gamma$&$\frac{\sqrt{3}}{4}\left(6-5\gamma\right)$&$\pm\frac{1}{2}\sqrt{\frac{15}{2}\left(2-\gamma\right)\left(\gamma-\frac{6}{5}\right)}$&$\pm\frac{1}{2}\sqrt{\frac{27}{2} (2-\gamma) (\gamma -\frac{10}{9})}$&0&$\left(n+\frac{1}{2}\right)\pi$\\
$\mathcal{D}^+$(I)&E&$1$&$(0,2)$&0&$-\frac{1}{2}$&$-\frac{1}{2\sqrt{3}}$&$\pm\,\frac{1}{\sqrt{6}}$&$\pm\frac{1}{\sqrt{2}}$&0&$\left(n+\frac{1}{2}\right)\pi$\\
$\mathcal{C}^0$(II)&CS&$-1+\frac{3}{2}\gamma$&$(\frac{2}{3},2)$&$-\frac{3}{16}\left(\gamma-6\right)$&$\frac{1}{16}(2-3\gamma)$&$\frac{\sqrt{3}}{16}(2-3\gamma)$&0&$\pm\frac{3}{8} \sqrt{(2-\gamma ) \left(\gamma -\frac{2}{3}\right)}$&0&$n\pi$\\
\hline

	\end{tabular}}
	\caption{Summary of equilibrium sets in $\mathcal{B}$(II). Here $\beta^2\,\equiv\,\beta_1^2+\beta_2^2$, and $n$ is an integer.}
	\label{tab:IINparam}
\end{table}
\noindent Table \ref{tab:StabISS1} gives the eigenvalues of the linearized matrix around each equilibrium set given in \ref{tab:IINparam}. The stability of some equilibrium sets (marked with "not param." in the table) were best studied without the parametrization (but still in $N_-$ gauge. In particular this is true for FLRW and Kasner, since they belong to the origo of the parametrization.

\paragraph{FLRW:} This equilibrium set is in the origo of the parametrization. For this particular point, we therefore used a different gauge and parametrization to confirm our results from N$_-$\,-\,gauge. To prevent the paper from becoming any longer, we do not report further on this issue: the results agree. The particular point $\gamma=2/3$ required the use of center-manifold analysis, which was conducted. The result is that FLRW is stable for $\gamma\,\leq\,2/3$, which is in excellent agreement with Theorem \ref{genThmBII}.

\paragraph{Jacobs' Extended Disk, JED($\beta_1,\beta_2$), and Jacobs' Sphere, JS($\beta_1,\beta_2,\Theta$).} These two equilibrium sets have the same eigenvalues, except for the $\gamma$-dependent one, which is 0 for JS, since $\gamma=2$. The stability regions fall into the same categories for these Eq. sets, however. In particular they are repellers for $\beta_1,\beta_2<0\,\cap\,\frac{\abs{\beta_2}}{\sqrt{3}}<\abs{\beta_1}<1-\sqrt{3}\abs{\beta_2}$.

\paragraph{Collins-Stewart, CS:} This equilibrium set is stable w.r.t. perturbations in all variables except $\theta.$ It is therefore a saddle. In the invariant subset $V_1=\Sigma_\times=0$ (corresponding to $\theta=n\pi$) it is an attractor, however. This is in agreement with earlier works on $\mathcal{B}$(II) with a perfect fluid.

\paragraph{Wonderland (W):} This is the $\mathcal{B}$(I) version of Wonderland. The symmetry \eref{IIsym1} of the dynamical system ensures that the stability is not affected by the sign of $\eta$. The Wonderland equilibrium set has one eigenvalue that cannot be accounted for by free parameters. The center manifold is locally in the $\theta$ direction, and $\Sigma_-=0$. Hence we must perform a center-manifold analysis to understand the flow in this direction. Doing so, we find to leading order $$\theta^\prime=-\frac{9}{2}(\gamma-\frac{2}{3})\left(\theta\,\pm\frac{\pi}{2}\right)^3.$$ With the equilibrium sets at $\theta=\pm\frac{\pi}{2}$, we therefore find that both copies of Wonderland are stable. 
\paragraph{Rope: } The Rope also has several copies. For $\theta\,\in\,[0,2\pi)$ there are 8 copies, since one or more of the changes in the set $\{\theta\,\rightarrow\theta+\pi,\Sigma_3\,\rightarrow-\Sigma_3,\eta\,\rightarrow\,-\eta\}$ will produce a different copy of the equilibrium set Rope.  The symmetries \eref{IIsym2},\eref{IIsym3} and \eref{IIsym4} ensure that the stability is the same. The eigenvalues reveal that the Rope is an attractor on all of its existence. 

\paragraph{Edge: }Also the Edge has 8 different copies for $\theta\,\in\,[0,2\pi)$, since one or more of the changes in the set $\{\theta\,\rightarrow\theta+\pi,\Sigma_3\,\rightarrow-\Sigma_3,\eta\,\rightarrow\,-\eta\}$ will produce a different copy of the equilibrium point  Edge. With the same argument as for Rope, it therefore suffices to study one copy. Since there is one zero-eigenvalue not accounted for by free parameters, center-manifold analysis is required. The center manifold is one-dimensional and in the $\zeta$ -direction. In doing the analysis, one finds that the flow on the center manifold is to leading order governed by
$$\zeta^\prime=-12\zeta^3.$$
Hence we may conclude from the remaining eigenvalues that the Edge is an attractor for $\gamma\,>\,4/3$ and a saddle point for $\gamma<4/3$. The particular point $\gamma\,=\,4/3$ remains uncertain.

\begin{table}[H]
\centering
\resizebox{\textwidth}{!}{\begin{tabular}{lcclclclclc}
\toprule
\multicolumn{11}{c}{\textbf{Classification of equilibrium sets in $\mathcal{B}$(II)}} \\
\hline
Eq.set&& Existence && Attractor && Saddle && Repeller && Inconclusive \\
\hline
flat FLRW && $\gamma\in[0,2)$ &&$\gamma\,\in\,[0,\frac{2}{3}]$&&$\gamma\,\in\,(\frac{2}{3},2)$&&&&$\gamma=\frac{2}{3}$\\
JED($\beta_1,\beta_2$) &&  $\gamma\,\in\,[0,2)$ &&&&else&&$\beta_1,\beta_2<0\,\cap\,\frac{\abs{\beta_2}}{\sqrt{3}}<\abs{\beta_1}<1-\sqrt{3}\abs{\beta_2}$&&$\beta_2=0\,\cup\,\beta_2=\sqrt{3}\beta_1\,\cup\,\beta_1=-\sqrt{3}\beta_2-1$\\
JS($\beta_1,\beta_2,\zeta$)&& $\gamma=2$ &&&&else&&$\beta_1,\beta_2<0\,\cap\,\frac{\abs{\beta_2}}{\sqrt{3}}<\abs{\beta_1}<1-\sqrt{3}\abs{\beta_2}$&&$\beta_2=0\,\cup\,\beta_2=\sqrt{3}\beta_1\,\cup\,\beta_1=-\sqrt{3}\beta_2-1$\\
CS&& $\gamma\in(\frac{2}{3},2)$ &&&&$\forall\,\gamma$ &&&&  \\
W&&$\gamma\in(\frac{2}{3},2)$&&$\gamma\,\in\,(\frac{2}{3},\frac{6}{5})$&&&&&&$\gamma=\frac{6}{5}$ \\
R&&  $\gamma\in(\frac{6}{5},\frac{4}{3})$&&$\forall\gamma$&&&&&& \\
E&&  $\gamma\in(0,2)$&&$\gamma\,\in\,(\frac{4}{3},2)$&&&&&&$\gamma=\frac{4}{3}$ \\\hline
	\end{tabular}}
	\caption{The domains where the stability analysis is conclusive are divided into attractor, saddle and repeller subdomains by the conditions above.  The rightmost column shows the domains where the linear stability analysis is inconclusive.}
	\label{tab:StabII}
\end{table} 

\paragraph{Global results.} We summarize and extend the results of the analysis of $\mathcal{B}$(II) in the following two lemmas.
\begin{lem}[Future asymptotes of $\mathcal{B}$(II)]
\label{lemBII3}
The set $\mathcal{B}$(II) with a perfect fluid and a $j$-form fluid with $V_1^2>0$ is future asymptotic to $\mathcal{B}$(I) for $0\,\leq\,\gamma\,\leq\,2$.
\end{lem}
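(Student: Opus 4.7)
The plan is to exploit the two reduction lemmas established above and dispose of the boundary cases by invoking the earlier no-hair and decoupling results. For $\gamma\in[0,2/3)$ the second no-hair theorem (Thm~\ref{NoHair}) identifies flat FLRW as the global future attractor, and for $\gamma=2/3$ Theorem~\ref{genThmBII} gives $\Omega_{\rm pf}\to 1$; both land in $\mathcal{B}$(I). Thus the substantive work is in the range $2/3<\gamma\leq 2$.

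In that range I would begin by invoking Lemma~\ref{lemBII1}: as long as $q<2$ on the $\omega$-limit, the orbit is asymptotic to $\mathcal{B}$(II)$\vert_{N_-=0}\cup \mathcal{B}$(II)$\vert_{\Sigma_\times=0}$. The first piece is precisely $\mathcal{B}$(I). Should $q$ approach $2$ instead, the only available asymptotes from Table~\ref{tab:IINparam} are JED ($\gamma<2$) or JS ($\gamma=2$), both of which have $N_-=0$ and hence sit in $\mathcal{B}$(I). So it remains to handle the subset $\mathcal{B}$(II)$\vert_{\Sigma_\times=0}$.

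On this subset, Lemma~\ref{lemBII2} gives $\Sigma_-\leq 0$ asymptotically. Using the parametrisation \eref{param}, $\Sigma_\times=0$ is the union of the $\mathcal{B}$(I)-branch $\sin\theta=0$ and the invariant subset $\zeta=0$; on the latter the $\theta$-equation reduces to $\theta'=-2\sqrt{3}\Sigma_-\sin\theta\cos\theta$. For $\Sigma_-<0$ the flow, as depicted in Figure~\ref{fig:flowth}, drives $\theta\to(n+1/2)\pi$, so $N_-=\eta\cos\theta\to 0$ and we again land in $\mathcal{B}$(I). In the remaining edge case $\Sigma_-\to 0$, the argument of Section~\ref{subsec:assDec} applies: requiring $\Sigma_-=\Sigma_-^\prime=0$ with $\zeta=0$ forces $\Sigma_3=\eta\cos\theta=0$, which once more yields $N_-=0$ and indeed $\Omega_{\rm pf}\to 1$ in line with Theorem~\ref{genThmBII}.

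The main obstacle is precisely the degenerate branch $\Sigma_-\to 0$ in the subset $\zeta=0$: there the linear $\theta$-flow vanishes and one cannot merely quote the arrows of Figure~\ref{fig:flowth}. Making this step fully rigorous will likely require either a centre-manifold reduction at the $\Sigma_-=0$ slice, or the construction of an auxiliary monotonic function, modelled on the $Z_6$ used for $\mathcal{B}$(VII$_0$) but tailored to $\zeta=0$, whose derivative has definite sign for $\Sigma_-\leq 0$ and forces $N_-\to 0$. Once that technical gap is closed, the lemma follows by assembling the cases above.
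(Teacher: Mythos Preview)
Your overall strategy is the same as the paper's: split off $\gamma\leq 2/3$ via Theorems~\ref{NoHair} and~\ref{genThmBII}, invoke Lemmas~\ref{lemBII1} and~\ref{lemBII2} to reduce to $N_-=0$ or $\Sigma_\times=0$ with $\Sigma_-\leq 0$, and then use the $\theta$-flow of Figure~\ref{fig:flowth} together with the argument of Section~\ref{subsec:assDec}. That is exactly how the paper proceeds.

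Two remarks. First, you mislabel the branches of $\Sigma_\times=-\zeta\sin\theta=0$. The case $\sin\theta=0$ is \emph{not} the $\mathcal{B}$(I)-branch: from $(N_-,V_1)=\eta(\cos\theta,\sin\theta)$ it is the $V_1=0$ branch, which contains Collins--Stewart and is excluded by the hypothesis $V_1^2>0$. The $\mathcal{B}$(I) branch is $\cos\theta=0$. This does not break your argument, since with $V_1^2>0$ one is forced onto $\zeta=0$ anyway, and the paper makes precisely this observation by noting that $\theta=n\pi$ is CS with $V_1=0$ and hence excluded.

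Second, the ``main obstacle'' you flag at $\Sigma_-\to 0$ is not treated as a gap in the paper. The paper simply invokes Section~\ref{subsec:assDec}: on $\zeta=0$ with $\Sigma_-=\Sigma_-'=0$ the equation $\Sigma_-'=(q-2)\Sigma_- - \sqrt{3}\Sigma_3^2 - 2\sqrt{3}\eta^2\cos^2\theta$ forces $\Sigma_3=\eta\cos\theta=0$, hence $N_-=0$ and $\Omega_{\rm pf}\to 1$. You already state this correctly; the paper regards that as sufficient and does not pursue a centre-manifold or auxiliary monotone function. Your caution is mathematically fair, but at the level of rigour adopted here no further machinery is introduced.
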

\begin{proof} Theorems \ref{NoHair} and \ref{genThmBII} cover the intervall $0\,\leq\,\gamma\,\leq\,2/3$. Lemma \ref{lemBII1} and \ref{lemBII2} establish that $\mathcal{B}$(II) is asymptotic to $\mathcal{B}$(II)$\vert_{N_-=0}\,\cup\,\mathcal{B}$(II)$\vert_{\Sigma_\times=0,\,\Sigma_-\,\leq\,0}$. The first of these sets is $\mathcal{B}$(I). Furthermore, we have shown in Sec. \ref{subsec:assDec} that $\Sigma_-^\prime=\Sigma_-=0$ must correspond asymptotically to $\Omega_{\rm pf}=1\,\subset\mathcal{B}$(I) for $\theta\,\notin\,\{n\pi\}$. Otherwise, for $\Sigma_-<0$, the same discussion showed that for $\theta\,\notin\,\{n\pi\}$ (where $n\,\in\,\mathbb{Z}$), the flow will be towards  $\{\theta=(n+1/2)\pi\}$, which is again $\mathcal{B}$(I). Therefore $\{\theta\,=\,n\pi\}$, which is Collins-Stewart (CS), is the only exception, and here $V_1^2\,=\,0$. The Lemma follows.
\end{proof}
Note that the case where $V_1=0$ is asymptotically equivalent to the case with a perfect fluid only, since $\Theta$ decays monotonically. That case has already been investigated (cf. Prop. 6.1 in \cite{dynSys}), and CS is found to be the global future attractor for $\Omega_{\rm pf}\,>0$ and $2/3\,<\,\gamma\,<\,2$. This is in perfect agreement with our local analysis. 

\begin{lem}[Past asymptoticity of $\mathcal{B}$(II)]
The set $\mathcal{B}$(II) with a $j$-form fluid and a perfect fluid with $\Omega_{\rm pf}>0$ is for $\gamma<2$ ($\gamma=2$) and $\Theta^2\,>\,0$ past asymptotic to JED (JS).
\end{lem}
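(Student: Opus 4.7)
The plan rests on the observation that in $\mathcal{B}$(II) the curvature scalar $A$ vanishes, so the $\Theta$-equation reduces to $\Theta' = (q-2)\Theta$, and hence $(\Theta^2)' = 2(q-2)\Theta^2$. The first step is to use the Hamiltonian constraint together with the expression $q = 2\Gamma^2 + \tfrac{1}{2}(3\gamma-2)\Omega_{\rm pf}$ from \eref{q2} to derive the pointwise identity $q = 2 - 2\eta^2 - \tfrac{3}{2}(2-\gamma)\Omega_{\rm pf}$. This shows $q\leq 2$ throughout the state space when $\gamma\leq 2$ and $\Omega_{\rm pf}\geq 0$, with equality if and only if $\eta = 0$ and, for $\gamma < 2$, also $\Omega_{\rm pf} = 0$.

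The second step is a LaSalle-type argument. Since $\Theta^2$ is non-increasing forward in $\tau$ and bounded in $[0,1]$, it has a limit as $\tau \to -\infty$; the hypothesis $\Theta^2>0$ and monotonicity in reverse time ensure this limit is strictly positive. Consequently the $\alpha$-limit set of any orbit is contained in $\{\,(\Theta^2)'=0\,\}\cap\{\Theta^2>0\} = \{q=2\}$. Via the identity above this forces $\eta \equiv 0$ on the $\alpha$-limit set, and additionally $\Omega_{\rm pf}\equiv 0$ when $\gamma<2$.

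The third step iterates the monotonicity argument within this reduced invariant subset. Using the equations in Section~\ref{Subsec:B24SystEq} with $\eta = 0$ (and $\Omega_{\rm pf}=0$ if $\gamma<2$) and $q=2$, one finds $\Sigma_+' = 3\Sigma_3^2 \geq 0$; boundedness of $\Sigma_+$ on the compact $\alpha$-limit set forces $\Sigma_3 \equiv 0$. Substituting back yields $\Sigma_-' = 2\sqrt{3}\,\Sigma_\times^2 \geq 0$, and boundedness of $\Sigma_-$ then forces $\Sigma_\times \equiv 0$. Since $\Theta^2>0$, the parametrization \eref{param} gives $\sin\theta = 0$, i.e.\ $\theta = n\pi$, and the resulting configuration matches precisely the JED row in Table~\ref{tab:IINparam} when $\gamma<2$, and the JS row when $\gamma=2$.

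The main obstacle is the iterative nature of the LaSalle argument: the $\alpha$-limit set of an orbit need not consist only of equilibria but only of flow-invariant sets, so one cannot identify JED directly from $\{q=2\}$. Instead the variables must be peeled off one at a time using a chain of monotonic quantities ($\Theta^2$, then $\Sigma_+$, then $\Sigma_-$). A secondary care-point is the $\gamma=2$ case, where $\Omega_{\rm pf}$ is not forced to zero and one must verify that the limit is in JS proper; this is handled by noting that the subsequent monotonicities do not involve $\Omega_{\rm pf}$ and that the resulting configuration satisfies $q=2$ with $\gamma=2$, which is the defining content of JS in Table~\ref{tab:IINparam}.
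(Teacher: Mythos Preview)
Your proof is correct and considerably more detailed than the paper's own. The paper's argument is essentially a one-liner: ``The available past attractors in conjunction with the monotonic function $\Theta'=(q-2)\Theta$.'' That is, the paper simply observes that $\Theta$ is monotone for $q\le 2$, then appeals to Table~\ref{tab:StabII} to note that JED (for $\gamma<2$) and JS (for $\gamma=2$) are the only repellers with $q=2$, leaving the identification of the $\alpha$-limit set implicit.

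Your approach differs in that you do not rely on the eigenvalue classification at all: you derive the identity $q=2-2\eta^2-\tfrac32(2-\gamma)\Omega_{\rm pf}$ directly, then peel off $\eta$, $\Sigma_3$, and $\Sigma_\times$ one at a time using a chain of monotone quantities on successively smaller invariant sets. This makes the argument self-contained and sidesteps the linearisation tables entirely. The trade-off is that the iterated LaSalle step (``boundedness of $\Sigma_+$ on the compact $\alpha$-limit set forces $\Sigma_3\equiv 0$'') implicitly uses that the $\alpha$-limit set is internally chain transitive, so that a weak Lyapunov function on it must be constant; you flag this as the main obstacle but do not quite spell out why boundedness alone suffices. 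This is a standard fact and easily supplied, so the argument stands. A small bonus of your route: the combination $\Sigma_+ + \tfrac{1}{\sqrt 3}\Sigma_-$ satisfies $(\Sigma_+ + \tfrac{1}{\sqrt 3}\Sigma_-)'=2\Sigma_3^2+2\Sigma_\times^2$ on $\{q=2\}$, which would let you eliminate $\Sigma_3$ and $\Sigma_\times$ in a single step rather than two.
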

\begin{proof}
The available past attractors in conjunction with the monotonic function $\Theta^\prime=(q-2)\Theta$.
\end{proof}
The case $\Theta=V_1=0$ is covered by Proposition 6.2 in \cite{dynSys}. Also, the general behavior $\mathcal{B}$(I)  with a perfect fluid and a $j$-form fluid is already studied in our previous work. Hence Theorem \ref{anisBII} follows from the two lemmas above.

\paragraph{Heteroclinic sequence.} 
CS is the global future attractor for $V_1=0$  and JED is the global past attractor. The heteroclinic sequences established withouth the $j$-form in \cite{dynSys} (Figs. 6.8 and 6.9), should carry over. In particular, based on the discussion above, there must exist a heteroclinic sequence
\begin{equation}
\rm JED\,\longrightarrow F\,\longrightarrow\,CS\,\longrightarrow\,W.
\end{equation}

       \begin{figure}[h]
     \subfloat[$\mathcal{B}$(II). \label{subfig:B2}]{%
       \includegraphics[width=0.5\textwidth]{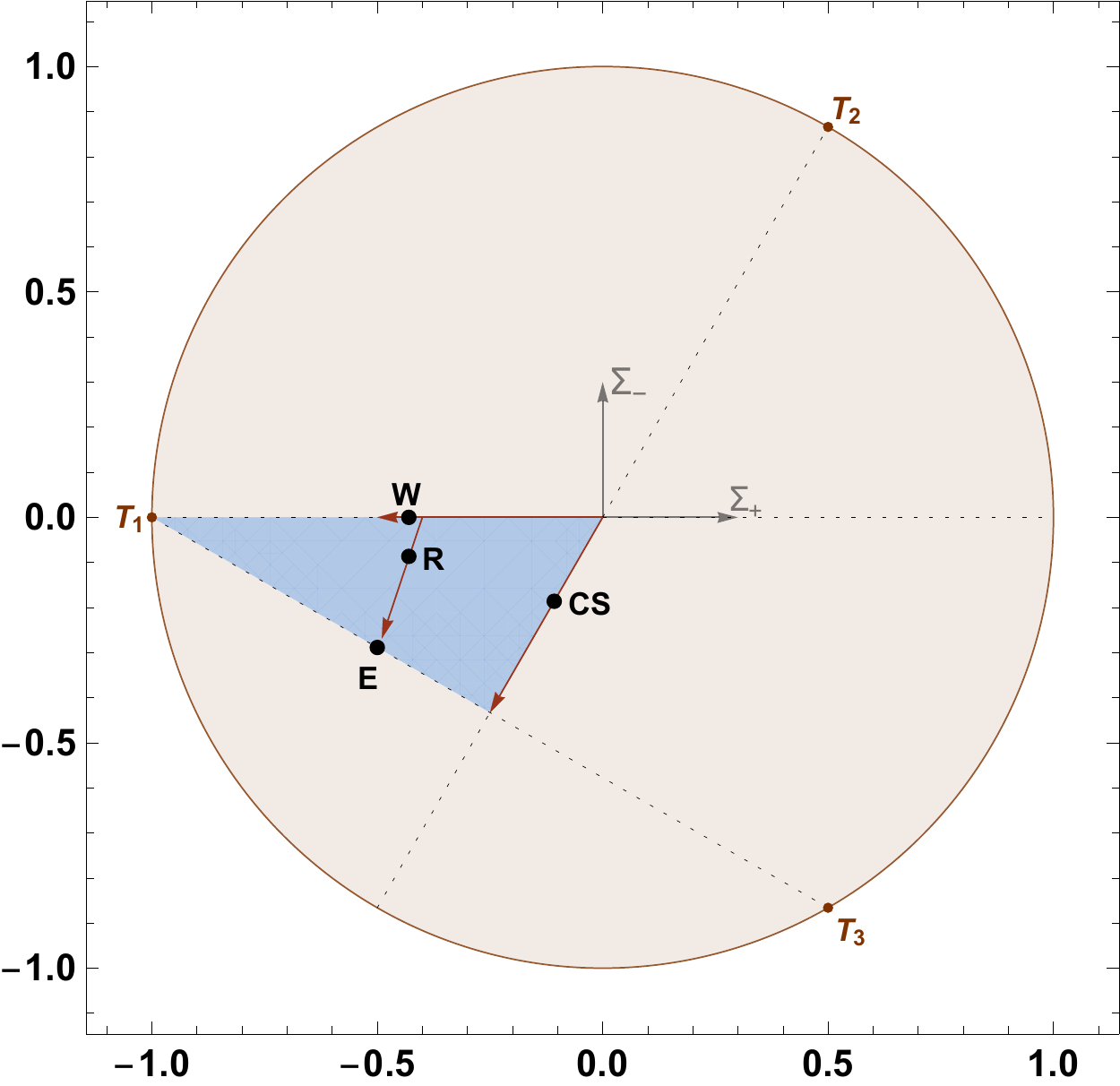}
     }
     \hfill
     \subfloat[$\mathcal{B}$(IV).\label{subfig:B4}]{%
       \includegraphics[width=0.5\textwidth]{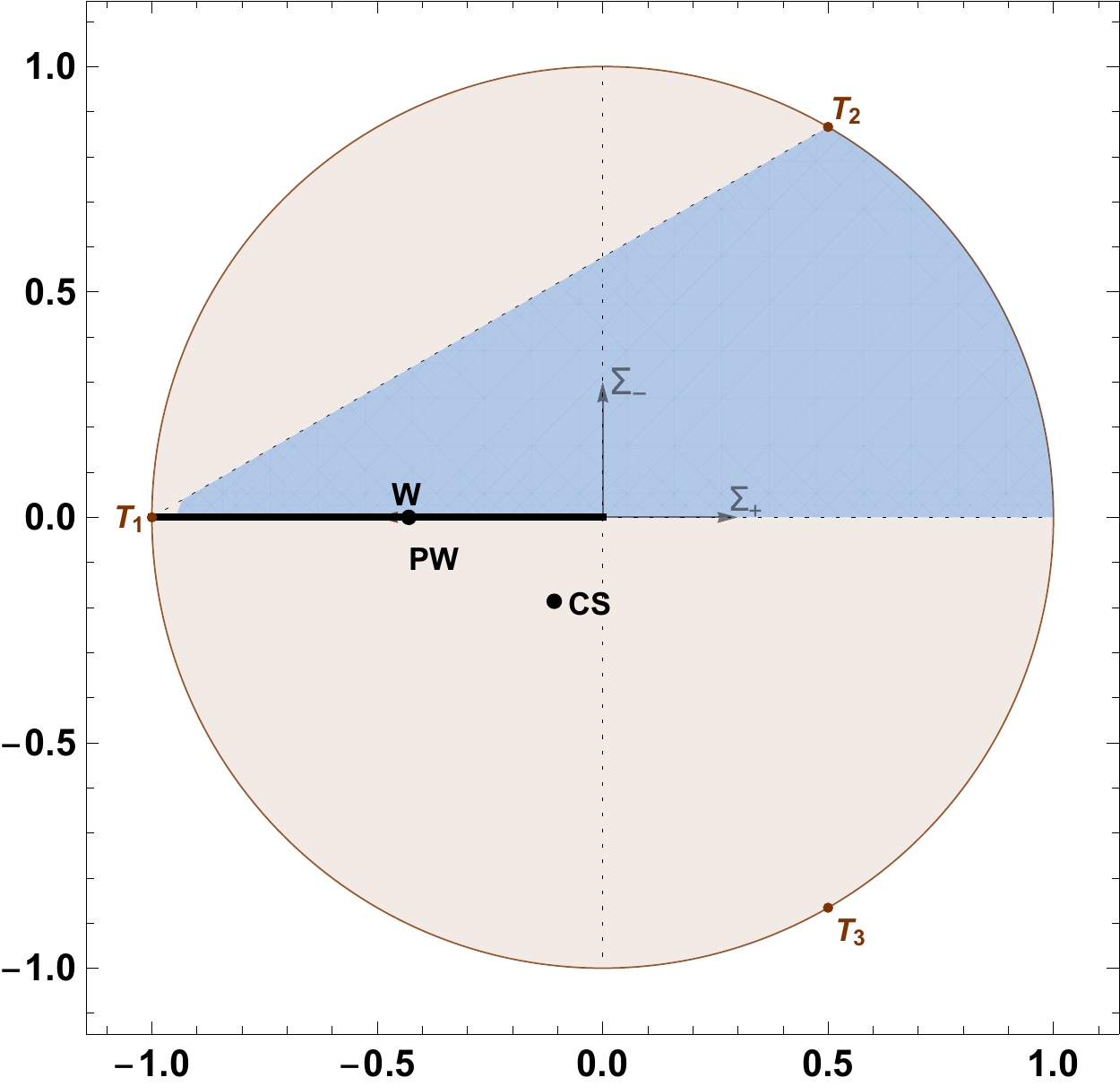}
     }
     \caption{The $(\Sigma_+,\Sigma_-)$ -plane, showing the shadow of the JED disk ($\beta_1^2+\beta_2^2<1$) and the Kasner circle ($\Sigma_+^2+\Sigma_-^2=1$). The blue-shaded regions inside the circles represent the projection of the stable part of JED. The dashed lines show where the first order stability analysis is inconclusive. $T_i$ represent Taub point $i$. The equilibrium points (black) are shown for $\gamma=1.24$, on top of arrows giving the line along which each equilibrium point shift its position as we very gamma from minimum to maximum for each eq.p.}
     \label{fig:B24Kasner}
   \end{figure}
   
\section{Conclusions} 
\label{Sec:Concl}
We have studied in detail the Bianchi invariant sets $\mathcal{B}$(II), $\mathcal{B}$(IV), $\mathcal{B}$(VII$_0$) and  $\mathcal{B}$(VII$_h$) with a perfect fluid and a $j$-form fluid (where $j\,\in\,\{1,3\}$). We have localized exact solutions to Einstein's field equations that represent stable, anisotropic space-times in all these sets for $2/3<\gamma<2$. In particular, Wonderland exists in all these sets and is stable on all of its existence, except in $\mathcal{B}$(II), where Edge and Rope takes over the stability for some parameter values. Moreover, in $\mathcal{B}$(IV) and $\mathcal{B}$(VII$_h$) plane gravitational wave solutions (with a non-zero $p$-form) serve as attractors whenever  $2/3<\,\gamma\,<2$. The results are summarized in Section \ref{Sec:Summary}. Anisotropic attractors in $\mathcal{B}$(VI) is left for future work.

\section{Acknowledgments}
 We take the opportunity to thank M. Thorsrud for useful discussions.
This work was supported through the Research Council of Norway, Toppforsk
grant no. 250367: \emph{Pseudo-Riemannian Geometry and Polynomial Curvature Invariants:
Classification, Characterisation and Applications.}

\newpage\phantom{0}
\newpage
\appendix
\section{Complex variables}
\label{App:CompVar}
In this appendix, we briefly recapitulate how the expansion-normalized variables used in this paper are formed. Refer to \cite{normann18} for further explanations. The expansion-normalization is as follows. 
\begin{eqnarray}
\label{defENC}
&\fl\Sigma_{+}=\frac{\sigma_+}{H}\,,\quad\quad \Omega_i=\frac{\rho_i}{3H^2}\phantom{0},\quad\quad A_i=\frac{a_i}{H}\phantom{000.},\nonumber\\
&\fl\Sigma_{-}=\frac{\sigma_-}{H}\,,\quad\quad \Omega_\Lambda=\frac{\Lambda}{3H^2}\,,\quad\quad N_{+}=\frac{n_+}{H}\phantom{00.},\nonumber\\
&\fl\Sigma_{\times}=\frac{\sigma_\times}{H}\,,\quad\quad V_i=\frac{v_i}{\sqrt{6}H}\,,\quad\quad N_{-}=\frac{n_-}{H}\phantom{00.},\\
&\fl\Sigma_2\,=\,\frac{\sigma_{2}}{H}\,,\quad\quad\Theta=\frac{w}{\sqrt{6} H}\,,\quad\quad N_{\times}=\frac{n_\times}{H}\phantom{00.},\nonumber\\
&\fl\Sigma_3\,=\,\frac{\sigma_{3}}{H}\,,\quad\quad\Xi_i=\frac{q_i}{3H^2}\,\phantom{0},\quad\quad\Sigma^2=\frac{\sigma_{ab}\sigma^{ab}}{6 H^2}.\nonumber
\end{eqnarray}
where $H$ is the Hubble parameter.
In this way the equations of motion become an autonomous system of differential equations and all equilibrium points will represent self-similar cosmologies. The above definitions differ slightly from other authors (e.g. \cite{coley05, dynSys}), since we decompose such that
\label{app:Decomp}
\begin{equation}
x_{ab}=
\left(\begin{array}{ccc}
-2x_+      & \sqrt{3}x_2&\sqrt{3}x_3\\
\sqrt{3}x_2     & x_++\sqrt{3}x_-&\sqrt{3}x_\times\\
\sqrt{3}x_3     & \sqrt{3}x_\times&x_+-\sqrt{3}x_-\\
\end{array}\right),
\end{equation}
where $x_{ab}$ is one of the trace-less matrices $n_{ab}$ or $\sigma_{ab}$ (their normalized equivalents $N_{ab}$ and $\Sigma_{ab}$ have the same structure). Note that $n_{1b}=0$ (for all $b$) for the considered Bianchi type I-VII$_h$.

We align our frame such that the basis vectors $\mathbf{e}_A$ (where $A=\{2,3\}$) are aligned with the orbits of the $G_{2}$ subgroup permitted by the isometry group in the Solvable Bianchi types. This 1+1+2 split of space-time effectually fixes the two (expansion-normalized) rotations $R_2$ and $R_3$, but leaves a rotational gauge freedom $R_1$. More specifically, $R_1$ is the rotation of the frame around the $\mathbf{e}_1$-axis), which is orthogonal to the orbits of the $G_{2}$ subgroup. Taking the angle $\phi$ to be constant on the orbits of $G_{2}$ the (expansion-normalized) local angular velocity $R_{1}$ of a Fermi-propagated axis with respect to the triad $\bf{e}_{a}$ is given as
\begin{equation}
R_1\,=\phi^\prime.
\end{equation}
Following~\cite{coley05}, we leave this gauge-freedom in the equations. Finally note the definitions
\begin{eqnarray}
\label{complexVar}
\eqalign{
\mathbf{N}_\Delta=N_-+iN_\times\,,\quad\quad &\mathbf{V}_c=V_2+iV_3\,,\\
\mathbf{\Sigma}_\Delta=\Sigma_-+i\Sigma_\times\,,\quad\quad &\mathbf{\Sigma}_1=\Sigma_2+i\Sigma_3\,.}
\end{eqnarray}

\newpage
\section{Comment on eigenvalue problem for Wonderland}
\label{App:WonderlandStab}
We shall note the following very useful observation. In computing the local stability around each equilibrium set $P$, we find from\eref{FluidEqs}-\eref{JacId2} a system on the form
\begin{equation}
X^\prime={\rm D}f(X_0)\,X\phantom{000}\rightarrow\phantom{000}\det({\rm D}f(X_0)-l\,I)=0.
\end{equation} 
where ${\rm D}f(X_0)\,\in\,\mathcal{M}_{8\times 8}$ (after reduction through constraints) is the jacobian around the Equilibrium set $X_0$. Furthermore let $M_1$ represent the Jacobian of the subsystem for the vector $(\mathbf{N}_\Delta,\mathbf{\Sigma}_\Delta)^{\rm T}.$ For an equilibrium point where $\mathbf{V}_c=\mathbf{\Sigma}_1=0$, the matrix $M_1$ reads
\begin{equation}
M_1=\left[ {\begin{array}{cc}
q+2\Sigma_+-2iR_1&2 N_+\\
    -2\left(iA+N_+\right) & q-2-2iR_1
\end{array} } \right],
\end{equation}
One may now find from the set of equations \eref{FluidEqs}-\eref{JacId2} that if $(\mathbf{N}_\Delta,\mathbf{\Sigma}_\Delta)\rightarrow 0$, the full linearization ${\rm D}f(X_0)$ around the equilibrium point $X_0$ is on block diagonal form ${\rm D}f(X_0)\,\in\,\mathcal{M}_{4\times 4}\times\mathcal{M}_{4\times 4}$ with $M_1$ as one of the blocks. Specifically, from determinant rules for block matrices we may use that 
\begin{equation}
\label{block}
\det({\rm D}f(X_0)-l\,I)=\det(M_1-l\,I)\cdot\det(M_2-l\,I)=0,
\end{equation}
Here $M_2$ represents the 4 by 4 matrix for the rest of the system. Hence we may compute the local stability for $M_1$ and $M_2$ separately. We may use this for the equilibrium set called Wonderland, as shown in the following.

\subsection*{Wonderland in $\mathcal{B}$(VII$_h$)}
(Some of) the specifications for Wonderland in $F$\,-\,gauge ($R_1=0$) are
\begin{equation}\fl
\label{WSpecss}
q=-1+\frac{3}{2}\gamma,\phantom{000}\Sigma_+=\frac{1}{2}-\frac{3}{4}\gamma,\phantom{000}A=\frac{3}{4}(\gamma-2)\kappa,\phantom{000}N_+=\nu_1.
\end{equation}
The parameter $\kappa$ is restricted such that $-1\,<\kappa\,\,\leq\,0$. The linearisation matrix $M_1$ now takes the form
\begin{equation}
M_1=\left[ {\begin{array}{cc}
0&2 N_+\\
    -2\left(iA+N_+\right) & \frac{3}{2}\left(\gamma-2\right)
\end{array} } \right],
\end{equation}
where we delay inserting for $N_+$ and $A$ for practical purposes only. Solving the characteristic equation we find the eigenvalues
\begin{equation}
l_{\pm}=-\frac{3}{4}\left(2-\gamma\right)\,\pm\,\sqrt{c}.
\end{equation}
where $c=a+ib$ and
\begin{eqnarray}
a=\left(\frac{3}{4}(\gamma -2)\right)^2-4N_+^2\quad\quad\quad\textrm{and}\quad\quad\quad b=-4A N_+.
\end{eqnarray}
We see that $l_+$ has always negative real part, but $l_-$ is a bit more elusive. To find out we go to polar coordinates in order to express the real part $M$ of the complex number $\sqrt{c}$ as
\begin{equation}
M=\sqrt{\abs{c}}\cos\left(\frac{\phi}{2}\right)=\sqrt{\frac{\abs{c}+a}{2}},
\end{equation}
where the last equality follows from a simple geometrical argument, and where $\abs{c}=\sqrt{cc^*}=\sqrt{a^2+b^2}$. One may readily show that 
\begin{equation}\fl
M=2^{-1/2}\sqrt{\sqrt{9 (\gamma -2)^2 \kappa ^2 \nu _1^2+\left(\frac{9}{16} (\gamma -2)^2-4 \nu _1^2\right){}^2}+\frac{9}{16} (\gamma -2)^2-4 \nu_1^2}.
\end{equation}
This is a purely real number, and one finds that 
\begin{equation}
Re\{l_{-}\}=-\frac{3}{4}\left(2-\gamma\right)\,\pm\,M<0\quad\forall\quad \gamma,\kappa\quad\textrm{and}\,\nu_1\,\neq\,0.
\end{equation}
Since $\abs{c}$ and $a$ are invariant under the complex conjugation $l_{\pm}\,\rightarrow\,l_{\pm}^*$, the condition for the vectors ($\mathbf{N}_\Delta^*,\mathbf{\Sigma}_\Delta^*$) to be stable will necessarily be the same as the condition for ($\mathbf{N}_\Delta,\mathbf{\Sigma}_\Delta$) to be stable. The  eigenvalues
\begin{equation}
\{l_+,l_-,l_+^*,l_-^*\}
\end{equation}
therefore all have negative real parts in the whole parameter domain except if $\nu_1=0$. The particular value $\nu_1=0$ corresponds to Wonderland in $\mathcal{B}$(V), W($\kappa$). If $\nu_1=\kappa=0$, then it is Wonderland in $\mathcal{B}$(I), W. In this limiting case, $l_-,l_-^*$ will both be zero. This is because $\nu_1=0$ is the point where W$(\kappa,\nu_1)$ meets W$(\kappa,\nu^2$); Wonderland in $\mathcal{B}$(VI$_h$).

\subsubsection*{The remaining matrix M$_2$}
Having inserted the constraints, the remaining physical part of state space is four dimensional. $M_2$ has one zero-eigenvalue, which can be shown to be in the $N_+$-direction. This zero is accounted for by the group parameter $h$: Every $h$ corresponds to an invariant set, so $h^\prime=0$. But $h$ is nevertheless a parameter of the dynamical system and hence of the equilibrium sets. It will therefore result in one extra zero-eigenvalue. The remaining part of the matrix $M_2$ (when $(\mathbf{N}_\Delta,\mathbf{\Sigma}_\Delta)\rightarrow 0$), is $\mathcal{B}$(V) modulo $\Sigma_-$. Now, $\Sigma_-=0$ corresponds to an invariant subspace in $\mathcal{B}$(V). We may therefore use the results of our study of $\mathcal{B}$(V) in \cite{normann18}, where the eigenvalues were found to be all negative except for one, which was zero, corresponding to the parameter $\kappa$.

The specification $A=\Theta=0$ in the Wonderland specifications (corresponding to $\kappa=0$), is also a subspace of $\mathcal{B}$(V), and again we can use the results of the $\mathcal{B}$(V) analysis. 

\section{Monotonic functions}
\label{App:B}
The following monotonic functions are known to exist.

\begin{itemize}
\item In the sets $\mathcal{B}$(VI$_0$) and $\mathcal{B}$(VII$_0$) models,
\begin{eqnarray}
\label{b1}Z_6=\frac{V_1^{3\gamma-2}\Omega^2}{\phi^{3\gamma+2}}, \qquad \phi=1+m\Sigma_+, \qquad m=\frac 14(3\gamma-2), \\
\label{b2}\frac{Z_6^\prime}{Z_6}=\phi^{-1}\left[8(\Sigma_++m)^2+\frac 32(3\gamma+2)(2-\gamma)\left(|{\mathbf\Sigma}_{\Delta}|^2+\Theta^2\right)\right]
\end{eqnarray} 
\item In the set $\mathcal{B}$(II), we find that $\Theta$ decreases monotonically, as discussed in the text.Additionally, we find the following.
\begin{itemize}
\item The monotonic function $Z_1$ which increases for $\gamma\,>\,4/3$, and decreases for  $\gamma\,<\,4/3$:
\begin{eqnarray}
Z_1&=&\frac{\Sigma_\times\Sigma_3^2V_1^3}{\Omega^3_{\rm pf}}, \qquad Z_1^\prime=3(3\gamma-4)Z_1,
\label{z1t} 
\end{eqnarray}
\item The monotonic function $Z_2$ which increases for $\gamma\,<\,2$:
\begin{eqnarray}
\fl\label{z2a} Z_2&=&\frac{\Omega_{\rm pf}^{1/\gamma} V_1}{\Sigma_\times N_-},\\
\fl \label{z2b}Z_2^\prime&=&\frac{2-\gamma}{2\gamma}\left(2(1-\Omega_{\rm pf})+3\gamma\Omega_{\rm pf}+4 \left(\Theta ^2+ \Sigma _-^2+ \Sigma _+^2+\Sigma _\times^2\right)\right) Z_2.
\end{eqnarray}
\item The monotonic function $Z_3$, which decreases for $\gamma\,>\,6/5$ and increases for $\gamma<6/5$:
\begin{eqnarray}
\fl \label{z3a} Z_3&=&\frac{\Omega_{\rm pf}^{3/\gamma}N_-}{\Sigma_3^2V_1^4},\\
\fl \label{z3b}Z_3^\prime&=&\frac{6-5\gamma}{2\gamma}\left(2(1-\Omega_{\rm pf})+3\gamma\Omega_{\rm pf}+4 (\Theta ^2+\Sigma _-^2+ \Sigma _+^2)\right) Z_3. 
\end{eqnarray}
\end{itemize}
\end{itemize}

\section{Eigenvalues}
\label{App:Eigenvalues}
The zero-eigenvalues correspond either to parameters of the equilibrium set, or they result from an inconclusive first order stability analysis. In the tables to follow in this appendix, the rightmost column ($\#$) gives the number of 0-eigenvalues that can be accounted for by parameters of the Eq. set. The column marked `s' indicates whether the analysis has been performed in the extended (e) state space, or in the physical (p) state space.
\subsection*{Eigenvalues of equilibrium sets in $\mathcal{B}$(VII$_h$)}
\begin{table}[H]
	\centering
	\resizebox{\textwidth}{!}{\begin{tabular}{llllr}
			\toprule
			\multicolumn{5}{c}{\textbf{Eigenvalues of equilibrium sets in $\mathcal{B}$(VII$_h$) in $N_-$\,-\,gauge.}} \\
			\hline
$\mathcal{P}$&$\gamma$ &s&eigenvalues $\{l_1,l_2,l_3,l_4,l_5,l_6,l_7,l_8\}$&$\#$\\
\hline
CS & $(\frac{2}{3},2)$&p&$\left\{-\frac{3}{2}(2-\gamma),-\frac{3}{2}(2-\gamma),\frac{9}{8} ( \gamma -\frac{2}{3}),\frac{9}{8} ( \gamma -\frac{2}{3}),\frac{9}{4} ( \gamma -\frac{2}{3}),-\frac{3}{4} \left(2-\gamma \,\pm\,\sqrt{-6 \gamma ^3+56 \gamma ^2-120 \gamma +64}\right)\right\}$&0\\
PW($\alpha,\beta_1,\nu^2)$&  $[0,2]$&p&$\left\{0,0,0,-2 \left(\beta _1+1\right),-4 \beta _1-3 \gamma +2,-2(\beta_1+1)\,\pm\,i\,4\sqrt{3}\alpha\nu\right\}$&3\\
   M& $[0,2]$&p&$\{0,0,0,-2,-2,-2,-3\left(\gamma-\frac{2}{3}\right)\}$&3\\
\\ \hline 
	\end{tabular}}
	\caption{Table of eigenvalues of equilibrium sets $\mathcal{P}$ in $\mathcal{B}$(VII$_h$) calculated in $N_-$\,-\,gauge. Refer to appendix text for explanation of columns.}
	\label{tab:EigVIIhNm}
\end{table}

\begin{table}[H]
	\centering
	\resizebox{\textwidth}{!}{\begin{tabular}{llllr}
			\toprule
			\multicolumn{5}{c}{\textbf{Eigenvalues of equilibrium sets in $\mathcal{B}$(VII$_h$) in $F$\,-\,gauge}} \\
			\hline
$\mathcal{P}$&$\gamma$&s&eigenvalues $\{l_1,l_2,l_3,l_4,l_5,l_6,l_7,l_8,l_9\}$&$\#$\\
\hline
flat FLRW&$[0,2)$ & e&$\left\{\frac{3}{2}(\gamma -2),\frac{3}{2}(\gamma -2),\frac{3}{2}(\gamma -2),\frac{3}{2}(\gamma -2),\frac{3}{2}(\gamma-\frac{2}{3}),\frac{3}{2}(\gamma-\frac{2}{3}),\frac{3}{2}(\gamma-\frac{2}{3}),\frac{3}{2}(\gamma-\frac{2}{3}),\frac{3}{2}(\gamma-\frac{2}{3})\right\}$&\\
open FLRW &$\frac{2}{3}$&p&$\left\{-2,-2,-2,0,0,0,0,0\right\}$&1\\
K$(\beta_1,\beta_2)$&  $[0,2)$ & p&$\left\{0,0,0,2 \left(\beta _1+1\right),2 \left(\beta _1+1\right),-3 (\gamma -2),2 \left(\beta _1-\sqrt{3} \sqrt{1-\beta _1^2}+1\right),2
   \left(\beta _1+\sqrt{3} \sqrt{1-\beta _1^2}+1\right)\right\}$&3\\
JED$(\beta_1,\beta_2,\beta_3)$&$[0,2)$&p&$\left\{0,0,0,-3 (\gamma -2),2 \left(\beta _1+1\right),2 \left(\beta _1+1\right),-2 \left(-\beta _1+\sqrt{3} \sqrt{\beta _2^2+\beta
   _3^2}-1\right),2 \left(\beta _1+\sqrt{3} \sqrt{\beta _2^2+\beta _3^2}+1\right)\right\}$&3\\
JS$(\beta_1,\beta_2,\beta_3,\Theta)$&$2$&p&$\left\{0,0,0,0,2 \left(\beta _1+1\right),2 \left(\beta _1+1\right),2 \left(\beta _1-\sqrt{3} \sqrt{\beta _2^2+\beta _3^2}+1\right),2 \left(\beta
   _1+\sqrt{3} \sqrt{\beta _2^2+\beta _3^2}+1\right)\right\}$&4\\
W($\kappa,\nu_1$)&$\left(\frac{2}{3},2\right)$&e&$
\left\{0,0,l_+,l_-,l_+^*,l_-^*,-\frac{3}{4} \left(2-\gamma\,\pm\,\sqrt{(\gamma -2)^2 \left(6 \gamma  \left(\kappa ^2-1\right)-4 \kappa ^2+5\right)}\right)\right\}$&2\\
\\ \hline 
	\end{tabular}}
\caption{Table of eigenvalues of equilibrium sets in $\mathcal{B}$(VII$_h$) calculated in $F$\,-\,gauge. Refer to \ref{App:WonderlandStab} for details on the Wonderland eigenvalues.  Refer to appendix text for explanation of columns.}
	\label{tab:EigVIIhGg}
\end{table}

\subsection*{Eigenvalues of $\mathcal{B}$(VII$_0$) equilibrium sets}
\begin{table}[h]
	\centering
	\resizebox{\textwidth}{!}{\begin{tabular}{llllr}
			\toprule
			\multicolumn{5}{c}{\textbf{Eigenvalues of equilibrium sets in $\mathcal{B}$(VII$_0$) in $F$\,-\,gauge.}} \\
			\hline
$\mathcal{P}$&$\gamma$ &s&eigenvalues $\{l_1,l_2,l_3,l_4,l_5,l_6,l_7 (,l_8)\}$&$\#$\\
\hline
flat FLRW &  $[0,2]$&e&$\left\{-\frac{3}{2}(2-\gamma),-\frac{3}{2}(2-\gamma),-\frac{3}{2}(2-\gamma),-\frac{3}{2}(2-\gamma),-\frac{3}{2}\left(\frac{2}{3}-\gamma\right),-\frac{3}{2}\left(\frac{2}{3}-\gamma\right),-\frac{3}{2}\left(\frac{2}{3}-\gamma\right),-\frac{3}{2}\left(\frac{2}{3}-\gamma\right)\right\}$&0\\
JED($\beta_1,\beta_2,\beta_3$) &$[0,2]$&p&$\left\{0,0,0,6-3 \gamma ,2 \left(\beta _1+1\right),2 \beta _1-2 \sqrt{3} \sqrt{\beta _2^2+\beta _3^2}+2,2 \left(\beta _1+\sqrt{3} \sqrt{\beta
   _2^2+\beta _3^2}+1\right)\right\}$&3\\
   JS($\beta_1,\beta_2,\beta_3,\Theta$) &2&p&$\left\{0,0,0,0,2 \left(\beta _1+1\right),2 \left(\beta _1-\sqrt{3} \sqrt{\beta _2^2+\beta _3^2}+1\right),2 \left(\beta _1+\sqrt{3} \sqrt{\beta
   _2^2+\beta _3^2}+1\right)\right\}$&4\\
W($\nu_1$)& $\left(\frac{2}{3},2\right)$&p&$\left\{0,-\frac{3}{4} \left(-\gamma +\sqrt{-(\gamma -2)^2 (6 \gamma -5)}+2\right),\frac{3}{4} \left(\gamma +\sqrt{-(\gamma -2)^2 (6 \gamma
   -5)}-2\right),-\frac{3}{4} \left(\sqrt{(\gamma -2)^2-4 \kappa ^2}-\gamma +2\right),-\frac{3}{4} \left(\sqrt{(\gamma -2)^2-4 \kappa ^2}-\gamma
   +2\right),\frac{3}{4} \left(\sqrt{(\gamma -2)^2-4 \kappa ^2}+\gamma -2\right),\frac{3}{4} \left(\sqrt{(\gamma -2)^2-4 \kappa ^2}+\gamma
   -2\right)\right\}\}$&2\\
CS& $\left(\frac{2}{3},2\right)$&p&$\left\{0,\frac{3 (\gamma -2)}{2},\frac{3 (\gamma -2)}{2},\frac{3}{8} (3 \gamma -2),\frac{3}{4} (3 \gamma -2),-\frac{3 \left(2 (8-3 \gamma ) \gamma
   +\sqrt{2} \sqrt{-(2-3 \gamma )^2 (\gamma -2) (\gamma  (3 \gamma -22)+16)}-8\right)}{8 (3 \gamma -2)},\frac{3 \left(2 \gamma  (3 \gamma-8)+\sqrt{2} \sqrt{-(2-3 \gamma )^2 (\gamma -2) (\gamma  (3 \gamma -22)+16)}+8\right)}{8 (3 \gamma -2)}\right\}$&0\\
\\ \hline 
	\end{tabular}}
	\caption{Table of eigenvalues of equilibrium sets in Bianchi $\mathcal{B}$(VII$_0$). Refer to appendix text for explanation of columns.}
	\label{tab:EigVII0R}
\end{table}
\subsection*{Eigenvalues of equilibrium sets in $\mathcal{B}$(IV)}
\begin{table}[h]
	\centering
	\resizebox{\textwidth}{!}{\begin{tabular}{lllr}
			\toprule
			\multicolumn{4}{c}{\textbf{Eigenvalues of equilibrium sets in $\mathcal{B}$(IV) in $N_-$\,-\,gauge}}\\
			\hline
$\mathcal{P}$&s&eigenvalues $\{l_1,l_2,l_3,l_4,l_5,l_6\}$&$\#$\\
\hline
JED($\beta_1,\beta_2$)&p&$\left\{0,0,3(2-\gamma),2 \left(\beta _1+1\right),-2 \sqrt{3} \beta _2,2 \left(\beta _1+\sqrt{3} \beta _2+1\right)\right\}$&2\\
JS($\beta_1,\beta_2,\Theta$)&p&$\left\{0,0,0,2 \left(\beta _1+1\right),-2 \sqrt{3} \beta _2,2 \left(\beta _1+\sqrt{3} \beta _2+1\right)\right\}$&3\\
flat FLRW &p&$\left\{-\frac{3}{2}(2-\gamma),-\frac{3}{2}(2-\gamma),-\frac{3}{2}(2-\gamma),-\frac{3}{2}(2-\gamma),\frac{3}{2} (\gamma -\frac{2}{3}),\frac{3}{2} (\gamma -\frac{2}{3}),\frac{3}{2} (\gamma -\frac{2}{3})\right\}$&\\
open FLRW &p&$\{0,0,0,-2,-2,-2,-2\}$ &0\\
PW$(\beta_1,\nu^2)$&p&$\left\{0,0,-2 \left(\beta _1+1\right),-4 \beta _1-3 \gamma +2,-2\left(1+\beta_1\,\pm\,i\,2 \sqrt{3}\nu\,\right)\right\}$&2 \\
M&p&$\{0,0,-3(\gamma-\frac{2}{3}),-2,-2,-2\}$&2\\
W($\lambda$)&p&$\left\{0,0,-\frac{3}{2}\left(2-\gamma\right),-\frac{3}{2}\left(2-\gamma\right),-\frac{3}{4} (2-\gamma ) \left(1\pm\sqrt{6 \gamma  \left(\kappa ^2-1\right)-4 \kappa
   ^2+5}\right)\right\}$ &2\\
CS&p&$\left\{-\frac{3}{2}(2-\gamma),\frac{9}{8} (\gamma -\frac{2}{3}),\frac{9}{8} (\gamma -\frac{2}{3}),\frac{21}{8} (\gamma -\frac{10}{7}),-\frac{3}{512} \left(-3 \gamma 
   (\gamma  (3 \gamma -10)+52)+264\,\pm\,\sqrt{U}\right)\right\}$\\ \hline 
	\end{tabular}}
	\caption{Table of eigenvalues of equilibrium sets in $\mathcal{B}$(IV) calculated in $N_-$\,-\,gauge. The function $U$ is specified in the appendix text. Also, refer to the appendix introduction for an explanation of columns.}
	\label{tab:EigIVNm}
\end{table}
In table \ref{tab:EigIVNm}, the function $U$ is such that
\begin{equation}
U(\gamma)=(\gamma -2) \left(\gamma  \left(3 \gamma  \left(3 \gamma  \left(9 \gamma ^2-42 \gamma -184\right)-3760\right)+132304\right)-104480\right).
\end{equation}

\subsection*{Eigenvalues of equilibrium sets in $\mathcal{B}$(II)}
\begin{table}[H]
	\centering
	\resizebox{\textwidth}{!}{\begin{tabular}{llllr}
			\toprule
			\multicolumn{5}{c}{\textbf{Eigenvalues of equilibrium sets in $\mathcal{B}$(II) in $N_-$\,-\,gauge}}\\
			\hline
$\mathcal{P}$&s&eigenvalues $\{l_1,l_2,l_3,l_4,l_5,l_6\}$&$\#$&Parametrized\\
\hline
flat FLRW& e&$\left\{-\frac{3}{2} (2-\gamma ),-\frac{3}{2} (2-\gamma ),-\frac{3}{2} (2-\gamma ),-\frac{3}{2} (2-\gamma ),-\frac{3}{2} (2-\gamma),\frac{3}{2} (\gamma -\frac{2}{3}),\frac{3}{2} (\gamma- \frac{2}{3})\right\}$&&no\\
 JED($\beta_1,\beta_2$)&p& $\left\{0,0,3 (2-\gamma ),-2 \sqrt{3} \beta _2,\sqrt{3} \beta _2-3 \beta _1,2 \left(\beta _1+\sqrt{3} \beta _2+1\right)\right\}$&2&no\\
JS($\beta_1,\beta_2,\Theta$)&p&$\left\{0,0,0,-2 \sqrt{3} \beta _2,\sqrt{3} \beta _2-3 \beta _1,2 \left(\beta _1+\sqrt{3} \beta _2+1\right)\right\}$&3&\\
CS&p&$\left\{-\frac{3}{2}(2-\gamma ),-\frac{3}{2}(2-\gamma ),-\frac{3}{2}(2-\gamma ),\frac{9}{8} (\gamma -\frac{2}{3}),-\frac{3}{8} \left(2(2-\gamma)\pm\sqrt{F}\right)
\right\}$\\
W&p&$\left\{0,-\frac{3}{2}(2-\gamma),-\frac{3}{2}(2-\gamma),\frac{15}{4} (\gamma -\frac{6}{5}),-\frac{3}{4} (2-\gamma)\left(1 \pm\sqrt{ (5-6 \gamma )}\right)
\right\}$&0\\
R&p&$\left\{3 (3 \gamma -4),-\frac{3}{2} (5 \gamma -6),-\frac{24}{32} \left(2-\gamma \pm\sqrt{G\pm 2 \sqrt{U}}\right)
\right\}$&&\\
E&p&$\left\{0,-1,-1,3(\frac{4}{3}- \gamma),-\frac{1}{2} \left(1\,\pm\,i \sqrt{23}\right)\right\}$&0&\\ \hline 
	\end{tabular}}
	\caption{Table of eigenvalues of equilibrium sets in $\mathcal{B}$(II). The functions $U$ and $G$ are such that all the four eigenvalues of Rope not written out explicitly are negative on the $\gamma$ domain of Rope. The last column refers to whether or not the parametrization \eref{param} has been used. Refer to appendix text for further explanation of columns.}
	\label{tab:StabISS1}
\end{table}
In the table \ref{tab:StabISS1}, the functions $F,\,G$ and $U$ are such that
\begin{eqnarray}
F=2 (2-\gamma) \left(3 \gamma ^2-22 \gamma +16\right),\\
G=(2-\gamma ) \left(18 \gamma ^2-97 \gamma +90\right),\\
U=(\gamma -2)^2 \left(81 \gamma ^4+216 \gamma ^3-840 \gamma ^2+544 \gamma +16\right).
\end{eqnarray}

\newpage
\section*{References}

\end{document}